\newtheorem{Twierdzenie}{Theorem}[section]
\newtheorem{Wniosek}{Corollary}[section]
\newtheorem{Definicja}{Definition}[section]
\newtheorem{Lemat}{Lemma}[section]
\newtheorem{Uwaga}{Remark}[section]
\title{Hyperheavenly spaces and their application in Walker and para-Kähler geometries: part II}
\author{$\textrm{Adam Chudecki}^{*}$}
\begin{document}
\sloppy

\maketitle

$*$ Center of Mathematics and Physics, Lodz University of Technology, 
\newline
$\ \ \ \ \ $ Al. Politechniki 11, 90-924 Łódź, Poland, adam.chudecki@p.lodz.pl
\newline
\newline
\newline
\textbf{Abstract}.  
\newline
4-dimensional spaces equipped with congruences of null strings are considered. It is assumed that a space admits a congruence of expanding self-dual null strings and its self-dual part of the Weyl tensor is algebraically degenerate. Different Petrov-Penrose types of such spaces are analyzed. A special attention is paid to para-Kähler Einstein spaces. All para-Kähler Einstein metrics of spaces with algebraically degenerate self-dual Weyl spinor are found in all the generality.


\renewcommand{\arraystretch}{1.5}
\setlength\arraycolsep{2pt}
\setcounter{equation}{0}

\section{Introduction}

This is the second part of the work devoted to para-Kähler and Walker geometries. In the first part \cite{Chudecki_Kahler_1} we considered complex and real neutral, 4-dimensional spaces equipped with a nonexpanding congruence\footnote{\textsl{A congruence (a foliation) of null strings} is abbreviated by $\mathcal{C}$. Also, an intersection of self-dual and anti-self-dual congruences of null strings which constitutes \textsl{a congruence of null geodesics} is abbreviated by $\mathcal{I}$. The list of all the abbreviations which we use in the present paper can be found in Ref. \cite{Chudecki_Kahler_1}.} of self-dual (SD) null strings. These are spaces of the types $[\textrm{deg}]^{n} \otimes [\textrm{any}]$. This part is devoted to spaces with expanding SD $\mathcal{C}$, i.e., spaces of the types $[\textrm{deg}]^{e} \otimes [\textrm{any}]$. 

[At this point an important remark is needed. It the present paper we use the spinorial formalism, hyperheavenly spaces ($\mathcal{HH}$-spaces) formalism and abbreviations introduced in Ref. \cite{Chudecki_Kahler_1}. We encourage the Reader of this paper to read Ref. \cite{Chudecki_Kahler_1} first. More detailed approach to the spinorial formalism can be found in \cite{Penrose, Plebanski_Spinors, Pleban_formalism_2}. A spinorial approach to Walker and sesquiWalker geometries was presented in \cite{Law_3, Law_2, Law, Chudecki_Przanowski_Walkery}. \cite{Finley_Plebanski_intrinsic,Boyer_Finley_Plebanski} are fundamental references devoted to $\mathcal{HH}$-spaces. Congruences of null strings are analyzed in \cite{Plebanski_Rozga, Chudecki_struny, Plebanski_surf}. Applications of complex methods in real geometries, especially in para-Kähler geometry can be found in \cite{An, Bor, Bor_Makhmali_Nurowski,Chudecki_przyklady, Chudecki_Ricci}.]

If a space is equipped with a $\mathcal{C}^{n}$ which is SD then the SD Weyl spinor is algebraically special. However, if SD $\mathcal{C}$ is expanding then the SD Weyl spinor is not algebraically special in general. Thus, in what follows we assume two basic facts
\begin{enumerate}[label=(\roman*)]
\item a space admits an expanding congruence of null strings and the orientation is chosen in such a manner that this congruence is self-dual,
\item the self-dual part of the Weyl tensor is algebraically special.
\end{enumerate}
In Einstein spaces (i) and (ii) are equivalent (it is a statement of the complex Goldberg-Sachs theorem, see \cite{Plebanski_surf}). However, in Sections \ref{subsekcja_dege_x_anye} - \ref{subsekcja_dege_x_degne}, \ref{subsekcja_general_result} and \ref{sekcja_para_Kahler_complicated_nonEinstein} we consider spaces with an arbitrary traceless Ricci tensor. For such spaces (i) and (ii) are not equivalent in general.

The paper is organized, as follows. In Section \ref{weak_hyperheavenly_spaces} the structure of weak (expanding) $\mathcal{HH}$-spaces is presented. Weak $\mathcal{HH}$-spaces are equipped with a single $\mathcal{C}$ \textsl{ex definitione}. This $\mathcal{C}$ is assumed to be expanding and it is a common property of all the spaces which metrics are presented in this paper. Then we follow the pattern of Ref. \cite{Chudecki_Kahler_1} (Scheme \ref{Structure} shows a structure of the present paper and compares it to Ref. \cite{Chudecki_Kahler_1}.). 

\begin{Scheme}[!ht]
\begin{displaymath}
\xymatrixcolsep{0.0cm}
\xymatrixrowsep{0.0cm}
\xymatrix{
\textbf{Ref. \cite{Chudecki_Kahler_1}:} &   & \textbf{This paper:}  \\
[\textrm{deg}]^{n} \otimes [\textrm{any}] \ar[dd] &   & [\textrm{deg}]^{e} \otimes [\textrm{any}] \ar[dd]^{\textrm{Section } \ref{subsekcja_dege_x_anye}} \ \  \\ 
      &  \ar[l] \textrm{let } \mathcal{C}^{e}_{m^{\dot{A}}} \textrm{ exists} \ar[r] & \\
  \{ [\textrm{deg}]^{n} \otimes [\textrm{any}]^{e},[--]  \} & &\{ [\textrm{deg}]^{e} \otimes [\textrm{any}]^{e},[--]  \} \ (\textrm{Thm } \ref{theorem_dege_x_anye_mm_pm}) \\  
   \{ [\textrm{deg}]^{n} \otimes [\textrm{any}]^{e},[++]  \} \ar[dddd] & &\{ [\textrm{deg}]^{e} \otimes [\textrm{any}]^{e},[+-]  \} \ (\textrm{Thm } \ref{theorem_dege_x_anye_mm_pm}) \\  
    & & \{ [\textrm{deg}]^{e} \otimes [\textrm{any}]^{e},[-+]  \} \ (\textrm{Thm } \ref{theorem_dege_x_anye_mp_pp}) \\
    & & \{ [\textrm{deg}]^{e} \otimes [\textrm{any}]^{e},[++]  \} \ (\textrm{Thm } \ref{theorem_dege_x_anye_mp_pp}) \ar[dd]^{\textrm{Section } \ref{subsekcja_dege_x_degn}}  \\  
  &  \ar[l] \textrm{let } \mathcal{C}^{n}_{m^{\dot{A}}} \textrm{ exists}  \ar[r] & \\
  [\textrm{deg}]^{n} \otimes [\textrm{deg}]^{n} \ar[ddd] & & \{ [\textrm{deg}]^{e} \otimes [\textrm{deg}]^{n},[--] \} \ (\textrm{Thm } \ref{theorem_dege_x_degn_mm})  \\
   & & \{ [\textrm{deg}]^{e} \otimes [\textrm{deg}]^{n},[++] \} \ (\textrm{Thm } \ref{theorem_dege_x_degn_pp}) \ar[dd]^{\textrm{Section } \ref{subsekcja_dege_x_degne}}  \\ 
   &  \ar[l] \textrm{let } \mathcal{C}^{n}_{m^{\dot{A}}} \textrm{ and } \mathcal{C}^{e}_{n^{\dot{A}}} \textrm{ exist} \ar[r] & \\
 \{ [\textrm{deg}]^{n} \otimes [\textrm{deg}]^{ne},[--,--] \} &  &  \{ [\textrm{deg}]^{e} \otimes [\textrm{deg}]^{ne},[--,+-] \}  \   (\textrm{Thm } \ref{theorem_dege_x_degne_mm_pm}) \\
 \{ [\textrm{deg}]^{n} \otimes [\textrm{deg}]^{ne},[--,++] \} \ar[ddddddd] &  &  \{ [\textrm{deg}]^{e} \otimes [\textrm{deg}]^{ne},[--,-+] \} \ (\textrm{Thm } \ref{theorem_dege_x_degne_mm_mp}) \\   
         &  &  \{ [\textrm{deg}]^{e} \otimes [\textrm{deg}]^{ne},[--,++] \} \ (\textrm{Thm } \ref{theorem_dege_x_degne_mm_pp}) \\  
   &  &  \{ [\textrm{deg}]^{e} \otimes [\textrm{deg}]^{ne},[++,--] \} \ (\textrm{Thm } \ref{theorem_dege_x_degne_pp_mm}) \\  
   &  &  \{ [\textrm{deg}]^{e} \otimes [\textrm{deg}]^{ne},[++,+-] \} \ (\textrm{Thm } \ref{theorem_dege_x_degne_pp_pm}) \\  
   &  &  \{ [\textrm{deg}]^{e} \otimes [\textrm{deg}]^{ne},[++,-+] \} \ (\textrm{Cor } \ref{wniosek_o_nieistnieniu}) \  \\  
   &  &  \{ [\textrm{deg}]^{e} \otimes [\textrm{deg}]^{ne},[++,++] \}  \ (\textrm{Thm } \ref{theorem_dege_x_degne_pp_pp_and_pp_mp}) \ar[dd]^{\textrm{Sections } \ref{subsekcja_dege_x_degnn} \textrm{ and } \ref{sekcja_para_Kahler_complicated}} 
 \\   
 &  \ar[l] \textrm{let } \mathcal{C}^{n}_{m^{\dot{A}}} \textrm{ and } \mathcal{C}^{n}_{n^{\dot{A}}} \textrm{ exist}  \ar[r] & \\  
  [\textrm{deg}]^{n} \otimes [\textrm{D}]^{nn} &  & \{ [\textrm{deg}]^{e} \otimes [\textrm{D}]^{nn},[++,--]  \} \ (\textrm{Thm } \ref{theorem_dege_x_Dnn_pp_mm}) \\
   &  & \{ [\textrm{deg}]^{e} \otimes [\textrm{D}]^{nn},[++,++]  \}  \ (\textrm{Thm } \ref{theorem_dege_x_degnn_pp_pp})     
}
\end{displaymath} 
\caption{Diagram of weak $\mathcal{HH}$-spaces equipped with ASD congruences of null strings.}
\label{Structure}
\end{Scheme}

First we assume that a weak $\mathcal{HH}$-space admits also an anti-self-dual (ASD) $\mathcal{C}^{e}$ (Section \ref{subsekcja_dege_x_anye}). The existence of two $\mathcal{C}s$ of a different duality allows to introduce a sub-classification of spaces based on the properties of the intersection of $\mathcal{C}s$. Hence, there are 4 subtypes of spaces of the type $ [\textrm{deg}]^{e} \otimes [\textrm{any}]^{e}$.

Then we assume that the ASD $\mathcal{C}$ is nonexpanding (Section \ref{subsekcja_dege_x_degn}). There are 2 subtypes of such spaces. The next step is to equipped a space with the second ASD $\mathcal{C}$ which is expanding (Section \ref{subsekcja_dege_x_degne}). According to the results of Ref. \cite{Chudecki_Kahler_1} there are 7 possible subtypes of such spaces. Our considerations prove that one of these subtypes, in fact, cannot exist\footnote{This unexpected fact gives an answer to one of the questions formulated in \textsl{Concluding Remarks} of Ref. \cite{Chudecki_Kahler_1}. Namely, we asked if there are any geometrical obstacles which prevent the existence of certain subtypes listed in Tables 9-11 of Ref. \cite{Chudecki_Kahler_1}. The answer to this question is positive. In fact, the type $ \{ [\textrm{deg}]^{e} \otimes [\textrm{deg}]^{ne},[++,-+] \}$ does not exist.}. The metrics presented in Sections \ref{subsekcja_dege_x_anye} - \ref{subsekcja_dege_x_degne} are not analyzed in details because their geometrical interpretation is not clear yet. It is only worth to mention that all of the spaces which are equipped with ASD $\mathcal{C}^{n}$ are Walker spaces.

In Sections \ref{subsekcja_dege_x_degnn} and \ref{sekcja_para_Kahler_complicated} the second ASD $\mathcal{C}$ becomes nonexpanding. Thus, spaces are equipped with ASD $\mathcal{C}^{nn}$ and they are para-Kähler. There are two different subtypes of such spaces, namely $\{ [\textrm{deg}]^{e} \otimes [\textrm{D}]^{nn},[++,--]  \}$  and $\{ [\textrm{deg}]^{e} \otimes [\textrm{D}]^{nn},[++,++] \}$. 

Section \ref{subsekcja_dege_x_degnn} is devoted to the types $\{ [\textrm{deg}]^{e} \otimes [\textrm{D}]^{nn},[++,--]  \}$. These types have been completely solved (Theorem \ref{theorem_dege_x_Dnn_pp_mm}). The metric (\ref{metryka_dege_x_Dnn_pp_mm}) is a general metric of spaces of the types $\{ [\textrm{deg}]^{e} \otimes [\textrm{D}]^{nn},[++,--]  \}$. Thorough analysis of such spaces via Petrov-Penrose classification is presented. We consider also para-Kähler Einstein spaces of the types $\{ [\textrm{deg}]^{e} \otimes [\textrm{D}]^{nn},[++,--]  \}$ (Theorem \ref{theorem_Einstein_dege_x_Dnn_pp_mm}, the metric (\ref{metryka_Einstein_dege_x_Dnn_pp_mm})). 

The types $\{ [\textrm{deg}]^{e} \otimes [\textrm{D}]^{nn},[++,++]  \}$ (Section \ref{sekcja_para_Kahler_complicated}) are much more complicated. We found two forms of the metric of such spaces (Theorem \ref{theorem_dege_x_degnn_pp_pp}, metrics (\ref{metryka_ogolnyparaKahler_classIIa}) and (\ref{metryka_ogolnyparaKahler_classIIb})). The geometrical difference between metrics (\ref{metryka_ogolnyparaKahler_classIIa}) and (\ref{metryka_ogolnyparaKahler_classIIb}) remains unknown. We are not sure if they are essentially different or the metric (\ref{metryka_ogolnyparaKahler_classIIb}) is just a special case of (\ref{metryka_ogolnyparaKahler_classIIa}). Thus, we do not analyze these metrics in details. They will be analyzed elsewhere.

In Section \ref{subsekcja_degexdegnn_pp_pp_Einstein} we focus on para-Kähler Einstein spaces of the types $\{ [\textrm{deg}]^{e} \otimes [\textrm{D}]^{nn},[++,++]  \}$ and we arrive at the main result\footnote{The question which of the results presented in this paper is the "main" result is a moot point. According to our best knowledge almost all the metrics which we found are new results (see the summary Tables \ref{summary} and \ref{summary_2}). However, it seems that pKE metrics have the most transparent geometrical meaning and via their relation with twistor distributions \cite{Bor_Makhmali_Nurowski} are physically relevant. Thus, we think that the most important results of the paper are Theorems \ref{theorem_Einstein_dege_x_Dnn_pp_mm} and \ref{theorem_Einstein_dege_x_Dnn_pp_pp}. Theorem \ref{theorem_Einstein_dege_x_Dnn_pp_pp} as a more general than Theorem \ref{theorem_Einstein_dege_x_Dnn_pp_mm} deserves to be called "the main result" of the paper.} of the paper. It is Theorem \ref{theorem_Einstein_dege_x_Dnn_pp_pp} and the metric (\ref{metryka_Einstein_dege_x_Dnn_pp_pp}) which is a general metric of the para-Kähler Einstein spaces for which SD Weyl tensor is algebraically special. The metric (\ref{metryka_Einstein_dege_x_Dnn_pp_pp}) depends on 4 functions of two variables like it was proved in \cite{Bor_Makhmali_Nurowski}. The types $\{ [\textrm{III}]^{e} \otimes [\textrm{D}]^{nn},[++,++]  \}$ and $\{ [\textrm{N}]^{e} \otimes [\textrm{D}]^{nn},[++,++]  \}$ have been also found and they depend on 3 and 2 functions of two variables, respectively\footnote{We skip an analysis of the types $[\textrm{D}]^{ee} \otimes [\textrm{D}]^{nn}$. All such types were found in \cite{Bor_Makhmali_Nurowski}.}.

Concluding remarks end the paper. 

Considerations are purely local. We consider complex manifolds of dimension four equipped with a holomorphic metric. Thus, coordinates are complex and functions are holomorphic. If we replace complex coordinates by real ones and holomorphic functions by real smooth ones, we arrive at 4-dimensional real neutral geometries. Hence, all metrics presented in this paper have neutral slices.

\renewcommand{\arraystretch}{1.5}
\setlength\arraycolsep{2pt}


\section{Weak expanding hyperheavenly spaces}
\setcounter{equation}{0}
\label{weak_hyperheavenly_spaces}

\subsection{Spaces of the types $[\textrm{deg}]^{e} \otimes [\textrm{any}]$} 
\label{subsekcja_dege_x_any}

\subsubsection{Structure of weak expanding hyperheavenly spaces}

In this Section we gather the basic definitions and notations concerning weak hyperheavenly spaces. This Section resembles Section 2.6 of Ref. \cite{Chudecki_Kahler_1} but in \cite{Chudecki_Kahler_1} all the formulas were adapted to \textsl{weak nonexpanding hyperheavenly spaces} (types $[\textrm{deg}]^{n} \otimes [\textrm{any}]$). In what follows we focus on more general \textsl{weak expanding hyperheavenly spaces}, i.e., spaces of the types $[\textrm{deg}]^{e} \otimes [\textrm{any}]$. 

\begin{Definicja}
\label{definicja_weak_HH_spaces}
\textsl{Weak hyperheavenly space (weak $\mathcal{HH}$-space)} is a pair $(\mathcal{M}, ds^{2})$ where $\mathcal{M}$ is a 4-dimensional complex analytic differential manifold and $ds^{2}$ is a holomorphic metric, satysfying the following conditions
\begin{enumerate}[label=(\roman*)]
\item there exists a 2-dimensional holomorphic totally null self-dual integrable distribution given by the Pfaff system
\begin{equation}
\label{condition_1}
m_{A} \, g^{A \dot{B}} = 0  , \ m_{A} \ne 0
\end{equation}
\item the SD Weyl spinor $C_{ABCD}$ is algebraically degenerate and $m_{A}$ is a multiple Penrose spinor i.e.
\begin{equation}
\label{condition_2}
C_{ABCD} \, m^{A} m^{B} m^{C} =0 
\end{equation}
\end{enumerate}
\end{Definicja}
Definition \ref{definicja_weak_HH_spaces} appeared in \cite{Chudecki_Przanowski_Walkery} for the first time. It was also proved in \cite{Chudecki_Przanowski_Walkery} that the metric of a weak $\mathcal{HH}$-space can be brought to the form
\begin{equation}
\label{metric_weak_HH_expanding_space}
\frac{1}{2} ds^{2} = \phi^{-2} (-dp^{\dot{A}} dq_{\dot{A}} + Q^{\dot{A}\dot{B}} dq_{\dot{A}} dq_{\dot{B}}) = e^{1}e^{2} + e^{3} e^{4}
\end{equation}
where $(q_{\dot{A}}, p_{\dot{B}})$ are local complex coordinates. Coordinates $q_{\dot{A}}$ label the null string while $p_{\dot{A}}$ are coordinates on leafs of $\mathcal{C}_{m^{A}}$. $\phi$ and $Q^{\dot{A}\dot{B}} = Q^{(\dot{A}\dot{B})}$ are holomorphic functions of variables $(q^{\dot{A}}, p^{\dot{B}})$. The null tetrad $(e^{1}, e^{2}, e^{3}, e^{4})$ defined as follows
\begin{eqnarray}
\label{tetrada_spinorowa}
[ e^{3} , e^{1}  ] &=&  \phi^{-2} \, dq_{\dot{A}}
\\ \nonumber
[  e^{4} , e^{2}  ]  & =&  -dp^{\dot{A}} + Q^{\dot{A} \dot{B}} \, dq_{\dot{B}}
\end{eqnarray}
is called \textsl{Plebański tetrad}. The dual basis is given by
\begin{equation}
-\partial_{\dot{A}} = [\partial_{4}, \partial_{2} ] , \ 
\eth^{\dot{A}} = [ \partial_{3} , \partial_{1}], \ \Longrightarrow \ \partial_{A\dot{B}}  = \sqrt{2} \, [ \partial_{\dot{B}}, \eth_{\dot{B}}]
\end{equation}
where
\begin{eqnarray}
&& \partial_{\dot{A}} := \frac{\partial}{\partial p^{\dot{A}}}, \ \eth_{\dot{A}} := \phi^{2} \left( \frac{\partial}{\partial q^{\dot{A}}} - Q_{\dot{A}}^{\ \ \dot{B}} \partial_{\dot{B}} \right)
\\ \nonumber
&& \partial^{\dot{A}} := \frac{\partial}{\partial p_{\dot{A}}}, \ \eth^{\dot{A}} := \phi^{2} \left( \frac{\partial}{\partial q_{\dot{A}}} + Q^{\dot{A} \dot{B}} \partial_{\dot{B}} \right)
\end{eqnarray}
From the first structure equations we find the spinorial connection coefficients
\begin{eqnarray}
\label{weak_HH_connection}
&&\mathbf{\Gamma}_{111 \dot{D}} =0  \ \ \ \ \ \ \ \ \ \ \ \ \ \ \ \ \  \ \ \ \ \ \ \ \ \ \ \ \ \  \ \ \ \ \ \ 
\mathbf{\Gamma}_{112 \dot{D}} = -\sqrt{2} \, \phi^{-1} \, J_{\dot{D}}
\\ \nonumber
&&\mathbf{\Gamma}_{121 \dot{D}} = \frac{3}{\sqrt{2}} \phi^{-1} \, J_{\dot{D}} \ \ \ \ \ \ \ \ \ \ \ \ \ \ \ \ \ \  \ \ \ \ \ \ 
  \mathbf{\Gamma}_{122 \dot{D}} = - \frac{\phi}{\sqrt{2}} \left( \partial^{\dot{A}} (\phi \, Q_{\dot{A}\dot{D}}) - \frac{\partial \phi}{\partial q^{\dot{D}}} \right)
\\ \nonumber
&& \mathbf{\Gamma}_{221 \dot{D}} = \sqrt{2} \phi \, \left(  Q_{\dot{D}\dot{A}} J^{\dot{A}} + \frac{\partial \phi}{\partial q^{\dot{D}}}  \right)
\ \ \ \  \ \ \,
\mathbf{\Gamma}_{222 \dot{D}} = -\sqrt{2} \phi^{2} \, \eth^{\dot{A}} Q_{\dot{A}\dot{D}}
\\ \nonumber
&&\mathbf{\Gamma}_{\dot{A} \dot{B} 1 \dot{D}} = -\sqrt{2} \phi^{-1} \, J_{(\dot{A}} \in_{\dot{B})\dot{D}}
\\ \nonumber
&& \mathbf{\Gamma}_{\dot{A} \dot{B} 2 \dot{D}} =  \sqrt{2} \, \phi \left(  
\phi \, \partial_{(\dot{A}} Q_{\dot{B})\dot{D}} +  \in_{\dot{D}(\dot{B}} \left( Q_{\dot{A})\dot{C}} J^{\dot{C}} + \frac{\partial \phi}{\partial q^{\dot{A})}}\right)  \right) 
\end{eqnarray}
where
\begin{equation}
\label{spinor_J}
J_{\dot{A}} := \partial_{\dot{A}} \phi \equiv   \frac{\partial \phi}{\partial p^{\dot{A}}}
\ \ \ \ \ \Longrightarrow \ \ \ \ \ 
J^{\dot{A}} = -\partial^{\dot{A}} \phi \equiv - \frac{\partial \phi}{\partial p_{\dot{A}}}
\end{equation}
Nonzero SD curvature coefficients $C^{(i)}$ and the ASD Weyl spinor read
\begin{eqnarray}
\label{curv}
 C^{(5)}  &=& 0
\\ \nonumber
 C^{(4)}  &=& 0
\\ \nonumber
 C^{(3)}  &=& -\frac{1}{3} \, \phi^{2} \, 
\partial_{\dot{A}} \partial_{\dot{B}} Q^{\dot{A} \dot{B}}
\\ \nonumber
\frac{1}{2} \, C^{(2)}  &=& -\frac{1}{2} \, \phi^{5} \, 
\partial^{\dot{A}} (\phi^{-3} \, \eth^{\dot{B}} Q_{\dot{A} \dot{B}})
        -\frac{1}{2} \, \phi^{2} \, 
\eth_{\dot{A}} (\phi^{-3} \, \eth^{\dot{A}} \phi)
\\ \nonumber
\frac{1}{2} \, C^{(1)}  &=& 
-\phi^{4} \, \eth^{\dot{A}} (\phi^{-2} \, \eth^{\dot{B}} Q_{\dot{A} \dot{B}})  +
\phi^{4} \, (\eth^{\dot{A}} Q_{\dot{A} \dot{B}})
(\partial_{\dot{C}} Q^{\dot{B} \dot{C}})
\\ \nonumber
C_{\dot{A}\dot{B}\dot{C}\dot{D}} &=& - \phi^{2} \, 
\partial_{(\dot{A}}\partial_{\dot{B}} Q_{\dot{C} \dot{D})}
\end{eqnarray}
Curvature scalar $\mathcal{R}$ and the spinorial image of the traceless Ricci tensor $C_{AB \dot{C}\dot{D}}$ have the form
\begin{eqnarray}
\label{Einst}
- \frac{1}{2} \mathcal{R} &=&  
\phi^{2} \, \partial_{\dot{A}} \partial_{\dot{B}} Q^{\dot{A} \dot{B}} -
6 \phi^{3} \, \partial_{\dot{A}} (\phi^{-4} \, \eth^{\dot{A}} \phi)
\\ \nonumber
C_{11 \dot{A} \dot{B}} &=& - \phi^{-1} \, \partial_{\dot{A}} \partial_{\dot{B}} \phi
\\ \nonumber
C_{12 \dot{A} \dot{B}} &=& - \frac{1}{2}
\phi^{2} \, \partial_{(\dot{A}} \partial^{\dot{C}} Q_{\dot{B}) \dot{C}} -
\phi \, \partial_{(\dot{A}} (\phi^{-2} \, \eth_{\dot{B})} \phi)
\\ \nonumber
C_{22 \dot{A} \dot{B}} &=& 
- \phi^{5} \, \partial_{(\dot{A}} \phi^{-3} \, \eth^{\dot{C}} Q_{\dot{B}) \dot{C}}+
\phi \, (\eth^{\dot{C}} \phi)(\partial_{\dot{C}} Q_{\dot{A} \dot{B}})-
\phi \, \eth_{(\dot{A}} (\phi^{-2} \, \eth_{\dot{B})} \phi)
\end{eqnarray}

The SD congruence of null strings $\mathcal{C}^{e}_{m^{A}}$ is generated by the spinor $m_{A} \sim [0,m]$, $m \ne 0$, what implies that it is spanned by the vectors $(\partial_{2}, \partial_{4})$. In this sense Plebański tetrad is \textsl{adapted} to $\mathcal{C}^{e}_{m^{A}}$. The expansion of $\mathcal{C}^{e}_{m^{A}}$ reads
\begin{equation}
\label{parametry_kongruencji_SD}
M_{\dot{B}} =   -\sqrt{2} m  \phi^{-1} J_{\dot{B}} \ne 0
\end{equation}

In the rest of the paper we use the abbreviations
\begin{equation}
\label{skroty_wspolrzednych}
q^{\dot{1}} =: q, \ q^{\dot{2}} =: p, \ p^{\dot{1}} =: x, \ p^{\dot{2}} =: y, \ Q^{\dot{A} \dot{B}} =: \left[ \begin{array}{cc}
       \mathcal{A} \ & \mathcal{Q}   \\
       \mathcal{Q} \ & \mathcal{B}  
       \end{array} \right]
\end{equation}
with these abbreviations the metric (\ref{metric_weak_HH_expanding_space}) takes the form
\begin{equation}
\label{metric_weak_HH_expanding_space_abrevaitions}
\frac{1}{2} ds^{2} = \phi^{-2} (dq dy -dp dx + \mathcal{A} \, dp^{2} -2 \mathcal{Q} \, dqdp +  \mathcal{B} \, dq^{2}) 
\end{equation}
In what follows the coordinate system $(q,p,x,y)$ is called \textsl{the hyperheavenly coordinate system}.

Note, that the metric (\ref{metric_weak_HH_expanding_space_abrevaitions}) depends on four functions of four variables.

\subsubsection{Gauge transformations}

The metric (\ref{metric_weak_HH_expanding_space}) remains invariant under the following transformations of the coordinates
\begin{equation}
\label{gauge}
q'^{\dot{A}} = q'^{\dot{A}} (q^{\dot{M}}), \ p'^{\dot{A}} = \lambda^{-1} \, D^{-1 \ \ \dot{A}}_{\ \ \; \dot{B}} \, p^{\dot{B}} + \sigma^{\dot{A}}
\end{equation}
where $\lambda=\lambda(q_{\dot{B}})$ and $\sigma^{\dot{A}}=\sigma^{\dot{A}}(q_{\dot{B}})$ are arbitrary gauge functions and 
\begin{eqnarray}
\nonumber
D_{\dot{A}}^{\ \ \dot{B}} 
&:=& \frac{\partial q'_{\dot{A}}}{\partial q_{\dot{B}}} 
= \Delta \, \frac{\partial q^{\dot{B}}}{\partial q'^{\dot{A}}}
= \lambda \Delta \, \frac{\partial p'_{\dot{A}}}{\partial p_{\dot{B}}}
= \lambda^{-1} \, \frac{\partial p^{\dot{B}}}{\partial p'^{\dot{A}}} , \ \Delta := \det 
\left( \frac{\partial q'_{\dot{A}}}{\partial q_{\dot{B}}} \right)
 = \frac{1}{2} \, D_{\dot{A}\dot{B}} D^{\dot{A}\dot{B}}
\\ 
D^{-1 \ \ \dot{B}}_{\ \ \; \dot{A}} 
&=& \frac{\partial q_{\dot{A}}}{\partial q'_{\dot{B}}}
= \Delta^{-1} \, \frac{\partial q'^{\dot{B}}}{\partial q^{\dot{A}}}
= \lambda \, \frac{\partial p'^{\dot{B}}}{\partial p^{\dot{A}}}
= \lambda^{-1} \Delta^{-1} \, \frac{\partial p_{\dot{A}}}{\partial p'_{\dot{B}}}
\end{eqnarray}
Hence
\begin{eqnarray}
 D^{\, \dot{A}}_{\ \ \dot{B}} &:=& D_{\dot{M}}^{\ \ \dot{N}} \in^{\dot{M} \dot{A}} \in_{\dot{B} \dot{N}} = - \Delta \, D^{-1 \ \ \dot{A}}_{\ \ \; \dot{B}}
\\ \nonumber
D^{-1 \; \dot{A}}_{\ \ \ \ \ \dot{B}} &:=& D^{-1 \ \ \dot{N}}_{\ \ \; \dot{M}} \in^{\dot{M} \dot{A}} \in_{\dot{B} \dot{N}} = - \frac{1}{\Delta} \, D_{\dot{B}}^{\ \ \dot{A}}
\end{eqnarray}
Functions $\phi$ and $Q^{\dot{A}\dot{B}}$ transform under (\ref{gauge}) as follows
\begin{eqnarray}
\label{transformacja_Q}
\phi' &=& \lambda^{-\frac{1}{2}} \, \phi
\\ \nonumber
Q'^{\dot{A} \dot{B}} &=& \lambda^{-1} \, D^{-1 \ \ \dot{A}}_{\ \ \; \dot{R}} \, D^{-1 \ \ \dot{B}}_{\ \ \; \dot{S}}
Q^{\dot{R} \dot{S}} + D^{-1 \ \ ( \dot{A}}_{\ \ \; \dot{R}} \, 
\frac{\partial p'^{\dot{B})}}{\partial q_{\dot{R}}} 
\end{eqnarray}
Transformations (\ref{gauge}) are equivalent to the spinorial transformations 
\begin{eqnarray}
\label{transformacja_spinorowa}
L^{A}_{\ \; B} &=& \left[ \begin{array}{cc}
       \lambda^{-1} \Delta^{-\frac{1}{2}} \ \  & h \phi^{2}  \Delta^{\frac{1}{2}}   \\
       0 \ \  &  \lambda \Delta^{\frac{1}{2}}  
       \end{array} \right], \ 2 h := \frac{\partial \sigma^{\dot{R}}}{\partial q'^{\dot{R}}} - \frac{1}{\lambda^{2}\Delta}  \frac{\partial \lambda}{\partial q^{\dot{R}}} \, p^{\dot{R}}
\\ \nonumber
M^{\dot{A}}_{\ \; \dot{B}} &=& \Delta^{\frac{1}{2}} \, D^{-1 \ \ \dot{A}}_{\ \ \; \dot{B}}
\end{eqnarray}
where $L^{A}_{\ \; B}, \ M^{\dot{A}}_{\ \; \dot{B}} \in SL(2, \mathbb{C})$. Dotted and undotted spinors transform as follows
\begin{equation}
\label{jawne_wzory_na_transformacje_spinorow}
m'^{A_{1}...A_{n}} = L^{A_{1}}_{\ \; R_{1}} \, ... \, L^{A_{n}}_{\ \; R_{n}} \, m^{R_{1}...R_{n}}, \ 
m'^{\dot{A}_{1}...\dot{A}_{n}} = M^{\dot{A}_{1}}_{\ \; \dot{R}_{1}} \, ... \, M^{\dot{A}_{n}}_{\ \; \dot{R}_{n}} \, m^{\dot{R}_{1}...\dot{R}_{n}}
\end{equation}

\subsubsection{ASD null string equations} 
\label{ASD_null_strings}

Assume that a weak $\mathcal{HH}$-space admits an additional structure which is ASD $\mathcal{C}^{e}$. Let this congruence be expanding in general and let it be generated by a spinor $m_{\dot{A}}$ with an expansion given by $M_{A}$. ASD null string equations read
\begin{equation}
\label{ASD_null_strings_eqs}
m_{\dot{A}} M_{A} = m^{\dot{B}} \nabla_{A \dot{A}} m_{\dot{B}} \equiv m^{\dot{B}} \partial_{A\dot{A}} m_{\dot{B}} + m^{\dot{S}} m^{\dot{B}} \mathbf{\Gamma}_{\dot{S}\dot{B} A \dot{A}}
\end{equation}
Explicitly, Eqs. (\ref{ASD_null_strings_eqs}) take the form
\begin{subequations}
\label{ASD_null_strings_eqs_jawnie_ekspansja}
\begin{eqnarray}
\frac{\phi}{\sqrt{2}} m_{\dot{A}} M_{1} &=& \phi \, m^{\dot{B}} \frac{\partial m_{\dot{B}}}{\partial p^{\dot{A}}} + m_{\dot{A}} m^{\dot{S}} J_{\dot{S}}
\\ 
\frac{1}{\sqrt{2}\phi} m_{\dot{A}} M_{2} &=& \phi \, m^{\dot{B}} \frac{\partial m_{\dot{B}}}{\partial q^{\dot{A}}} + \phi  \, m^{\dot{S}} \frac{\partial}{\partial p^{\dot{S}}} ( m^{\dot{B}} Q_{\dot{A}\dot{B}} ) - \phi  \, m^{\dot{B}} Q_{\dot{A}\dot{B}} \frac{\partial m^{\dot{C}}}{\partial p^{\dot{C}}} 
\\  \nonumber
&& +m_{\dot{A}} m^{\dot{B}} Q_{\dot{C}\dot{B}} J^{\dot{C}} + m_{\dot{A}} m^{\dot{S}} \frac{\partial \phi}{\partial q^{\dot{S}}}
\end{eqnarray}
\end{subequations}
Eqs. (\ref{ASD_null_strings_eqs_jawnie_ekspansja}) contracted with $m^{\dot{A}}$ give ASD null string equations \textsl{sensu stricto}
\begin{subequations}
\label{ASD_null_strings_eqs_jawnie}
\begin{eqnarray}
\label{ASD_null_strings_eqs_jawnie_1}
&& m^{\dot{A}} m^{\dot{B}} \frac{\partial m_{\dot{A}}}{\partial p^{\dot{B}}} = 0
\\ 
\label{ASD_null_strings_eqs_jawnie_2}
&& m^{\dot{A}} m^{\dot{B}} \frac{\partial m_{\dot{A}}}{\partial q^{\dot{B}}} + m^{\dot{S}} \frac{\partial \mathcal{Y}}{\partial p^{\dot{S}}} - \mathcal{Y} \left( 2 \frac{\partial m^{\dot{S}}}{\partial p^{\dot{S}}} + \frac{1}{\sqrt{2}} M_{1} - \frac{1}{\phi} J_{\dot{S}} m^{\dot{S}}  \right)  =0
\\ \nonumber
&&  \mathcal{Y} := m^{\dot{A}} m^{\dot{B}} Q_{\dot{A}\dot{B}}
\end{eqnarray}
\end{subequations}
ASD null string equations (\ref{ASD_null_strings_eqs}) remain invariant if a spinor $m_{A}$ which generates $\mathcal{C}$ is re-scalled. Hence, $\mathcal{C}_{m^{\dot{A}}}$ and $\mathcal{C}_{m'^{\dot{A}}}$ are the same congruences if $m^{\dot{A}} \sim m'^{\dot{A}}$. Consequently, the spinor $m_{\dot{A}}$ can be always re-scalled to the form with $m_{\dot{1}}=1$ or $m_{\dot{2}} =1$ so it has only one component to be determined via Eq. (\ref{ASD_null_strings_eqs_jawnie_1}). Solution of Eq. (\ref{ASD_null_strings_eqs_jawnie_2}) gives $\mathcal{Y}$. Having $m_{\dot{A}}$ and $\mathcal{Y}$ one returns to Eqs. (\ref{ASD_null_strings_eqs_jawnie_ekspansja}) and calculates the expansion $M_{A}$.

An expansion $\theta$ and a twist $\varrho$ of $\mathcal{I} (\mathcal{C}^{e}_{m^{A}}, \mathcal{C}_{m^{\dot{A}}})$ read
\begin{equation}
\label{skalary_optyczne}
\theta \sim mM_{1} +  \frac{\sqrt{2} m}{\phi} m_{\dot{A}} \frac{\partial \phi}{\partial p_{\dot{A}}}, \ 
\varrho \sim mM_{1} -  \frac{\sqrt{2} m}{\phi} m_{\dot{A}} \frac{\partial \phi}{\partial p_{\dot{A}}}
\end{equation}

In the next Sections we follow the pattern of Ref. \cite{Chudecki_Kahler_1}. We equip a weak $\mathcal{HH}$-space with "the first" ASD $\mathcal{C}$ generated by a spinor $m_{\dot{A}} \sim [z,1]$, $z=z(q,p,x,y)$, with an expansion $M_{A} \ne 0$ (this congruence is denoted by $\mathcal{C}_{m^{\dot{A}}}$, Section \ref{subsekcja_dege_x_anye}). Then we consider a case with $M_{A}=0$ (Section \ref{subsekcja_dege_x_degn}). In Section \ref{subsekcja_dege_x_degne} a space is equipped with "the second" $\mathcal{C}$ generated by a spinor $n_{\dot{A}} \sim [1,w]$, $w=w(q,p,x,y)$, with an expansion $N_{A} \ne 0$ (we denote this congruence by $\mathcal{C}_{n^{\dot{A}}}$). Of course,  $\mathcal{C}_{m^{\dot{A}}}$ and $\mathcal{C}_{n^{\dot{A}}}$ are distinct so $m^{\dot{A}}n_{\dot{A}} \sim zw-1 \ne 0$. Finally we arrive at a case with ASD $\mathcal{C}^{nn}$ (Sections \ref{subsekcja_dege_x_degnn} and \ref{sekcja_para_Kahler_complicated}).

We believe that Scheme \ref{Structure} clearly presents the structure of our article.

\subsubsection{ASD congruence $\mathcal{C}^{e}_{m^{\dot{A}}}$}

Inserting $m_{\dot{A}} \sim [z,1]$ into Eqs. (\ref{ASD_null_strings_eqs_jawnie}), (\ref{ASD_null_strings_eqs_jawnie_ekspansja}) and (\ref{skalary_optyczne}) one finds
\begin{subequations}
\label{kongruencja_mdotA_rownania}
\begin{eqnarray}
\label{kongruencja_mdotA_rownania_1}
&& z_{x} - zz_{y}=0
\\ 
\label{kongruencja_mdotA_rownania_2}
&& z_{q} - zz_{p} - z_{y} \mathcal{Z} + z \frac{\partial \mathcal{Z}}{\partial y} - \frac{\partial \mathcal{Z}}{\partial x} =0 , \ \mathcal{Z} := \mathcal{B} + 2z \mathcal{Q} + z^{2} \mathcal{A}
\end{eqnarray}
\end{subequations}
Expansion of $\mathcal{C}^{e}_{m^{\dot{A}}}$ reads
\begin{subequations}
\label{ekspansja_pierwszej_ASD_struny}
\begin{eqnarray}
\label{ekspansja_pierwszej_ASD_struny_1}
\frac{\phi}{\sqrt{2}} M_{1} &=& -\phi z_{y} + z\phi_{y} - \phi_{x}
\\ 
\label{ekspansja_pierwszej_ASD_struny_2}
\frac{1}{\sqrt{2}\phi}  M_{2} &=& z \phi_{p}- \phi z_{p}  - \phi_{q} + \phi z \frac{\partial}{\partial y} (\mathcal{Q} +z \mathcal{A}) - \phi \frac{\partial}{\partial x} (\mathcal{Q} +z \mathcal{A})  
\\ \nonumber
&&  + (\mathcal{Q} +z \mathcal{A}) (\phi_{x} - \phi z_{y}) + (\mathcal{B} + z \mathcal{Q}) \phi_{y}
\end{eqnarray}
\end{subequations}
Optical scalars are given by the formulas
\begin{equation}
\label{optyczne_skalary_pierwszego_przeciecia}
\mathcal{I} (\mathcal{C}^{e}_{m^{A}}, \mathcal{C}^{e}_{m^{\dot{A}}}): \ \ \theta_{1} \sim  \phi z_{y} - 2z \phi_{y} + 2\phi_{x}, \ \varrho_{1} \sim z_{y}
\end{equation}
Transformation rule for $z$ follows from (\ref{jawne_wzory_na_transformacje_spinorow}) and it yields
\begin{equation}
\label{transformacja_na_z}
z' = \frac{p'_{q}-  p'_{p}z}{q'_{p} z - q'_{q}}
\end{equation}

\begin{Lemat}
\label{Lemat_kongruencja_pierwsza_nontwisting}
Let $\mathcal{I} (\mathcal{C}^{e}_{m^{A}}, \mathcal{C}^{e}_{m^{\dot{A}}})$ be nontwisting, $\varrho_{1}=0$. Then $z=0$ and the following equations hold true
\begin{eqnarray}
\label{uproszczone_formuly_dla_z_rowne_0}
&&  \mathcal{B}_{x}=0, \ \theta_{1} \sim \phi_{x},
\\ \nonumber
&&  M_{1} \sim \phi_{x}, \ M_{2} \sim -\phi \mathcal{Q}_{x} + \phi_{x} \mathcal{Q} + \phi_{y} \mathcal{B} - \phi_{q}
\end{eqnarray}
\end{Lemat}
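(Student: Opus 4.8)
The plan is to extract the kinematic constraints implied by the vanishing of the twist, reduce them with the first null string equation, and then exhaust the residual gauge freedom to normalise $z$ to zero; the four formulas in (\ref{uproszczone_formuly_dla_z_rowne_0}) then follow by direct substitution.

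First I would read off from the optical scalars (\ref{optyczne_skalary_pierwszego_przeciecia}) that $\varrho_{1} \sim z_{y}$, so the hypothesis $\varrho_{1}=0$ is precisely $z_{y}=0$. Feeding this into the first ASD null string equation (\ref{kongruencja_mdotA_rownania_1}), namely $z_{x}-zz_{y}=0$, immediately yields $z_{x}=0$ as well. Hence $z$ depends only on the null-string-labelling coordinates, $z=z(q,p)$, and it drops out of every derivative with respect to $x$ and $y$.

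Next I would use the residual gauge freedom (\ref{gauge}) to put $z=0$. The coordinate change acts on $z$ through (\ref{transformacja_na_z}), $z'=(p'_{q}-p'_{p}z)/(q'_{p}z-q'_{q})$, where $q',p'$ (that is, $q'^{\dot{1}},q'^{\dot{2}}$) are arbitrary functions of $(q,p)$ alone. Since $z=z(q,p)$, the requirement $z'=0$ collapses to the single linear first-order PDE $p'_{q}=z\,p'_{p}$ for $p'(q,p)$, which is solvable locally by the method of characteristics; one then adjoins a functionally independent $q'$ (for instance $q'=q$) so that the Jacobian of the transformation is nonvanishing and the gauge remains admissible. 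I expect this to be the only genuinely nontrivial step: one must argue that the characteristic system is well posed and that the adjoined $q'$ keeps the transformation non-degenerate without reintroducing any dependence that spoils $z'=0$. The subtlety is entirely that of solving a first-order PDE and checking non-degeneracy; there is no deeper geometric obstruction because $z$ has already been shown to live on the two-dimensional leaf space coordinatised by $(q,p)$.

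Finally, working in the gauge $z=0$ (so that $z$ and all its partial derivatives vanish identically), I would substitute into the remaining relations. The optical scalar in (\ref{optyczne_skalary_pierwszego_przeciecia}) collapses to $\theta_{1}\sim\phi_{x}$; the expansion component (\ref{ekspansja_pierwszej_ASD_struny_1}) becomes $\tfrac{\phi}{\sqrt{2}}M_{1}=-\phi_{x}$, hence $M_{1}\sim\phi_{x}$; and (\ref{ekspansja_pierwszej_ASD_struny_2}), with every $z$-term gone, reduces to $\tfrac{1}{\sqrt{2}\phi}M_{2}=-\phi\mathcal{Q}_{x}+\phi_{x}\mathcal{Q}+\phi_{y}\mathcal{B}-\phi_{q}$, which is the asserted form of $M_{2}$. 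Lastly, since $z=0$ gives $\mathcal{Z}=\mathcal{B}$ in (\ref{kongruencja_mdotA_rownania_2}), the whole of that equation collapses to $-\mathcal{Z}_{x}=0$, i.e. $\mathcal{B}_{x}=0$. This establishes all four relations in (\ref{uproszczone_formuly_dla_z_rowne_0}).
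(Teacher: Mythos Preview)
Your proof is correct and follows essentially the same route as the paper: deduce $z_{y}=0$ from $\varrho_{1}=0$, then $z_{x}=0$ from (\ref{kongruencja_mdotA_rownania_1}), gauge $z=z(q,p)$ to zero via (\ref{transformacja_na_z}), and read off the four relations by substitution. You supply more detail on the gauge step (the PDE $p'_{q}=z\,p'_{p}$ and the choice of an independent $q'$) than the paper, which simply cites (\ref{transformacja_na_z}); otherwise the arguments are identical.
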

\begin{proof}
If $\varrho_{1}=0$ then $z_{y}=0$ and from (\ref{kongruencja_mdotA_rownania_1}) it follows that $z_{x}=0$. Hence, $z=z(p,q)$. Finally $z$ can be gauged away without any loss of generality (compare (\ref{transformacja_na_z})). With $z=0$, Eqs. (\ref{kongruencja_mdotA_rownania}), (\ref{ekspansja_pierwszej_ASD_struny}) and (\ref{optyczne_skalary_pierwszego_przeciecia}) reduce to (\ref{uproszczone_formuly_dla_z_rowne_0}).
\end{proof}

\begin{Uwaga}
\normalfont
If $z=0$ then the gauge (\ref{gauge}) is restricted to the condition $p'_{q}=0$.
\end{Uwaga}

\begin{Lemat} 
\label{Lemat_kongruencja_pierwsza_twisting}
Let $\mathcal{I} (\mathcal{C}^{e}_{m^{A}}, \mathcal{C}^{e}_{m^{\dot{A}}})$ be twisting, $\varrho_{1} \ne 0$. Then 
\begin{equation}
\label{rozwiazanie_na_y}
y=-xz + \Sigma(q,p,z)
\end{equation}
where $\Sigma = \Sigma (q,p,z)$ is an arbitrary function. Moreover
\begin{equation}
\label{rozwiazanie_rownania_pierwszej_struny_2}
\mathcal{B} + 2z \mathcal{Q} + z^{2} \mathcal{A} = \Omega \, (x-\Sigma_{z}) + z \Sigma_{p} - \Sigma_{q}, \ \Omega = \Omega (q,p,z)
\end{equation}
and
\begin{subequations}
\label{pomocnicze_oggolne}
\begin{eqnarray}
 \label{pomocnicze_1}
 \theta_{1} &\sim& \phi - 2 \phi_{x} (x-\Sigma_{z}), 
\\ \label{pomocnicze_2}
 M_{1} &\sim& \phi - \phi_{x} (x-\Sigma_{z}),
\\ \label{pomocnicze_3}
 M_{2} &\sim& z \phi_{p} +  \frac{\phi \Sigma_{p}}{\Sigma_{z}-x} - \phi_{q} - \phi \frac{\partial}{\partial x} (\mathcal{Q} + z \mathcal{A}) 
 + (\mathcal{Q} + z \mathcal{A}) \left( \phi_{x} - \frac{\phi}{\Sigma_{z}-x} \right) -\Omega \phi_{z} \ \ \ \ \ \ \ \
\end{eqnarray}
\end{subequations}
where $\Omega = \Omega (q,p,z)$ is an arbitrary function. 
\end{Lemat}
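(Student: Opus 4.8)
The plan is to read Eq. (\ref{kongruencja_mdotA_rownania_1}) as a first-order quasilinear PDE for $z$ in the variables $(x,y)$, with $(q,p)$ entering only as parameters, to integrate it by characteristics, and then to feed its implicit solution into the linear equation (\ref{kongruencja_mdotA_rownania_2}), which becomes an ordinary differential equation in a well-chosen chart.

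First I would integrate (\ref{kongruencja_mdotA_rownania_1}), $z_x - z z_y = 0$, by characteristics. The characteristic field is $\partial_x - z\,\partial_y$; along it $z$ is constant, and the combination $y + zx$ is a first integral since $(\partial_x - z\,\partial_y)(y+zx) = -z + z = 0$. Hence the general solution is given implicitly by $y + zx = \Sigma(q,p,z)$ with $\Sigma$ arbitrary, which is (\ref{rozwiazanie_na_y}); the hypothesis $\varrho_1 \sim z_y \ne 0$ is precisely what guarantees that the first integral can be solved for $z$ (so that $y$ may serve as the inverted variable and $x - \Sigma_z$ is finite and nonzero), legitimizing this implicit form. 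Implicit differentiation of $y + zx - \Sigma(q,p,z)=0$ then yields $z_y = (\Sigma_z - x)^{-1}$, $z_x = -z(\Sigma_z - x)^{-1}$, $z_q = \Sigma_q (x-\Sigma_z)^{-1}$ and $z_p = \Sigma_p (x-\Sigma_z)^{-1}$, formulas I will reuse throughout.

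Next I would pass to the coordinate system $(q,p,x,z)$, admissible exactly because $z_y \ne 0$. The key simplification is that the characteristic operator $D := \partial_x - z\,\partial_y$ reduces to $\partial_x|_z$ in the new chart, a one-line check using $Dz = z_x - z z_y = 0$. Rewriting (\ref{kongruencja_mdotA_rownania_2}) as $D\mathcal{Z} + z_y\,\mathcal{Z} = z_q - z z_p$ and inserting the derivatives above turns it into the linear inhomogeneous ODE $\partial_x \mathcal{Z} - (x-\Sigma_z)^{-1}\mathcal{Z} = (\Sigma_q - z\Sigma_p)(x-\Sigma_z)^{-1}$ in $x$ at fixed $(q,p,z)$. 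Its homogeneous solution is proportional to $x-\Sigma_z$, and it admits the $x$-independent particular solution $z\Sigma_p - \Sigma_q$, so the general solution is $\mathcal{Z} = \Omega(q,p,z)\,(x-\Sigma_z) + z\Sigma_p - \Sigma_q$ with $\Omega$ the arbitrary integration function, which is exactly (\ref{rozwiazanie_rownania_pierwszej_struny_2}).

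Finally I would substitute the implicit solution into (\ref{optyczne_skalary_pierwszego_przeciecia}) and (\ref{ekspansja_pierwszej_ASD_struny}). The only extra ingredient is to reinterpret $\phi$ as a function of $(q,p,x,z)$ via $y = -xz + \Sigma$; the old derivatives are then related to the new ones by $\partial_y\phi = (\Sigma_z - x)^{-1}\phi_z$ and $\partial_x\phi|_y = \phi_x + z(\Sigma_z - x)^{-1}\phi_z$ (with $\phi_x,\phi_z$ the new-chart derivatives). Inserting these together with $z_y = (\Sigma_z - x)^{-1}$ into $\theta_1$ and $M_1$, the $\phi_z$-contributions cancel against the $z_y$-terms, and after the harmless rescaling permitted by $\sim$ (multiplication by $\Sigma_z - x$) one obtains $\theta_1 \sim \phi - 2\phi_x(x-\Sigma_z)$ and $M_1 \sim \phi - \phi_x(x-\Sigma_z)$, i.e. (\ref{pomocnicze_1})--(\ref{pomocnicze_2}); formula (\ref{pomocnicze_3}) for $M_2$ follows by the same substitution together with the expression just derived for $\mathcal{Z}$, the term $-\Omega\phi_z$ being produced by the $\mathcal{Z}$-dependence. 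I expect the only real obstacle to be computational rather than conceptual: the bookkeeping in the $M_2$ reduction, where one must consistently track the $(x-\Sigma_z)$ denominators and the old-versus-new meaning of each partial derivative. The PDE integration itself is routine once the change of variables $y \leftrightarrow z$ is in place.
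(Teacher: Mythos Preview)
Your proposal is correct and follows essentially the same route as the paper. The paper's terse ``multiplying by $dx\wedge dy\wedge dq\wedge dp$ and treating $y$ as a function of $(q,p,x,z)$'' is exactly your characteristics argument in differential-forms language (the equation becomes $dz\wedge d(y+xz)\wedge dq\wedge dp=0$, forcing $y+xz=\Sigma(q,p,z)$), and the paper then transforms (\ref{kongruencja_mdotA_rownania_2}) to the new chart and integrates in $x$ just as you do; your write-up is simply more explicit about the mechanism and the cancellation of the $\phi_z$-terms in $\theta_1$ and $M_1$.
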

\begin{proof}
Multiplying (\ref{kongruencja_mdotA_rownania_1}) by $dx \wedge dy \wedge dq \wedge dp$ and treating $y$ as a function of $(q,p, x,z)$ one arrives at (\ref{rozwiazanie_na_y}). Hence 
\begin{equation}
z_{x} = \frac{z}{\Sigma_{z}-x}, \ z_{y}  = \frac{1}{\Sigma_{z}-x}, \ z_{q} = -\frac{\Sigma_{q}}{\Sigma_{z}-x}, \ 
z_{p}  = -\frac{\Sigma_{p}}{\Sigma_{z}-x}
\end{equation}
Eq. (\ref{rozwiazanie_na_y}) suggests that $z$ should be treated as a new variable. Thus we pass to a new coordinate system $(q,p,x,z)$. Denote
\begin{equation}
\tilde{\mathcal{Z}} = \tilde{\mathcal{Z}} (q,p,x,z) = \tilde{\mathcal{Z}} (q,p,x,y(q,p,x,z)) =  \mathcal{Z} (q,p,x,y)
\end{equation}
then the derivatives of $\mathcal{Z}$ transform as follows
\begin{equation}
\mathcal{Z}_{y} = \tilde{\mathcal{Z}}_{z} z_{y}, \ \mathcal{Z}_{x} = \tilde{\mathcal{Z}}_{x}+\tilde{\mathcal{Z}}_{z} z_{x}, \ \mathcal{Z}_{q} = \tilde{\mathcal{Z}}_{q}+\tilde{\mathcal{Z}}_{z} z_{q}, \ \mathcal{Z}_{p} = \tilde{\mathcal{Z}}_{p}+\tilde{\mathcal{Z}}_{z} z_{p}
\end{equation}
Eq. (\ref{kongruencja_mdotA_rownania_2}) takes the form
\begin{equation}
\Sigma_{q} - z \Sigma_{p} + \tilde{\mathcal{Z}} + (\Sigma_{z} - x) \tilde{\mathcal{Z}}_{x} = 0
\end{equation} 
which is solved by (\ref{rozwiazanie_rownania_pierwszej_struny_2}). Analogously one finds (\ref{pomocnicze_oggolne}). 
\end{proof}

\begin{Uwaga}
\normalfont
In (\ref{rozwiazanie_rownania_pierwszej_struny_2}) and (\ref{pomocnicze_oggolne}) functions $\phi$, $\mathcal{A}$, $\mathcal{Q}$ and $\mathcal{B}$ are understood as functions of $(q,p,x,z)$. 
\end{Uwaga}

Considerations of this section should be supplemented by the transformation formula for $\Sigma$
\begin{equation}
\label{transformacja_na_Sigma_ogolna}
\Sigma' = - \frac{\lambda^{-1}}{q'_{p}z-q'_{q}} \Sigma + \frac{p'_{q} - p'_{p}z}{q'_{p}z-q'_{q}} \sigma^{\dot{1}} + \sigma^{\dot{2}}
\end{equation}
Transformation for $\Omega$ is much more complicated and its explicit form is not necessary at this stage of considerations.

\subsubsection{ASD congruence $\mathcal{C}^{e}_{n^{\dot{A}}}$}

"The second" ASD congruence denoted by $\mathcal{C}_{n^{\dot{A}}}^{e}$ is generated by a spinor $n_{\dot{A}}$ such that $n_{\dot{A}} \sim [1,w]$. ASD null string equations for $\mathcal{C}_{n^{\dot{A}}}^{e}$ have the form (\ref{ASD_null_strings_eqs_jawnie}) and (\ref{ASD_null_strings_eqs_jawnie_ekspansja}) with $m_{\dot{A}} \rightarrow n_{\dot{A}}$ and $M_{A} \rightarrow N_{A}$. Hence
\begin{subequations}
\label{kongruencja_CndotA}
\begin{eqnarray}
\label{kongruencja_CndotA_1}
&& w_{y}-ww_{x} = 0
\\ 
\label{kongruencja_CndotA_2}
&& w_{p} - ww_{q} +w_{x} \mathcal{W} -w \frac{\partial \mathcal{W}}{\partial x} + \frac{\partial \mathcal{W}}{\partial y} = 0, \ \mathcal{W} := \mathcal{A} + 2w \mathcal{Q} + w^{2} \mathcal{B}
\end{eqnarray}
\end{subequations}
Expansion of $\mathcal{C}^{e}_{n^{\dot{A}}}$ reads
\begin{subequations}
\label{ekspansja_CndotA}
\begin{eqnarray}
\label{ekspansja_CndotA_1}
\frac{\phi}{\sqrt{2}} N_{1} &=& \phi w_{x} - w \phi_{x} + \phi_{y}
\\
\label{ekspansja_CndotA_2}
\frac{1}{\sqrt{2} \phi} N_{2} &=& \phi w_{q} - w \phi_{q} + \phi_{p}+ \phi w \frac{\partial}{\partial x} (\mathcal{Q} + w \mathcal{B}) - \phi \frac{\partial}{\partial y} (\mathcal{Q} + w \mathcal{B})
\\ \nonumber
&& (\phi_{y} - \phi w_{x}) (\mathcal{Q} + w \mathcal{B}) + (\mathcal{A} + w \mathcal{Q}) \phi_{x}
\end{eqnarray}
\end{subequations}
Optical scalars of $\mathcal{I} (\mathcal{C}^{e}_{m^{A}}, \mathcal{C}^{e}_{n^{\dot{A}}})$ take the form
\begin{equation}
\label{optyczne_skalary_CndotA}
\mathcal{I} (\mathcal{C}^{e}_{m^{A}}, \mathcal{C}^{e}_{n^{\dot{A}}}): \ \ \theta_{2} \sim \phi w_{x} -2w \phi_{x} + 2 \phi_{y}, \ \varrho_{2} \sim w_{x}
\end{equation} 
and, finally, the transformation rule for the function $w$ reads
\begin{equation}
w' = \frac{w q'_{q} - q'_{p}}{p'_{p} - p'_{q} w}
\end{equation}

\begin{Lemat}
\label{Lemat_kongruencja_druga_nontwisting}
Let $\mathcal{I} (\mathcal{C}^{e}_{m^{A}}, \mathcal{C}^{e}_{n^{\dot{A}}})$ be nontwisting, $\varrho_{2}=0$. Then $w=0$ and the following equations hold true
\begin{eqnarray}
\label{uproszczone_formuly_dla_w_rowne_0}
&&  \mathcal{A}_{y}=0, \ \theta_{2} \sim \phi_{y},
\\ \nonumber
&&  N_{1} \sim \phi_{y}, \ N_{2} \sim -\phi \mathcal{Q}_{y} + \phi_{y} \mathcal{Q} + \phi_{x} \mathcal{A} + \phi_{p}
\end{eqnarray}
\end{Lemat}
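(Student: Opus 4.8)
The plan is to mirror, step for step, the argument of Lemma \ref{Lemat_kongruencja_pierwsza_nontwisting}, exploiting the manifest $z \leftrightarrow w$, $x \leftrightarrow y$, $\mathcal{A} \leftrightarrow \mathcal{B}$ duality between the equations governing $\mathcal{C}^{e}_{m^{\dot{A}}}$ and those governing $\mathcal{C}^{e}_{n^{\dot{A}}}$. First I would feed the hypothesis $\varrho_{2}=0$ into the optical-scalar formula (\ref{optyczne_skalary_CndotA}): since $\varrho_{2} \sim w_{x}$, this immediately yields $w_{x}=0$. Substituting $w_{x}=0$ into the first ASD null-string equation (\ref{kongruencja_CndotA_1}), namely $w_{y}-w\,w_{x}=0$, then forces $w_{y}=0$. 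Hence $w$ depends on the null-string coordinates alone, $w=w(q,p)$.

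Next I would remove $w$ by a gauge transformation. Because $w=w(q,p)$ is a function of the $q$-coordinates only, the transformation rule $w'=(w\,q'_{q}-q'_{p})/(p'_{p}-p'_{q}w)$ can be used to set $w'=0$: it suffices to choose the holomorphic change of null-string labels $q'^{\dot{A}}=q'^{\dot{A}}(q^{\dot{M}})$ so that $q'_{p}=w\,q'_{q}$, which is a first-order linear PDE solvable by characteristics for any prescribed $w(q,p)$. This is exactly the manoeuvre used for $z$ in the previous lemma via (\ref{transformacja_na_z}), and it costs no generality.

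With $w=0$ in force, the remaining claims follow by direct substitution. In the second null-string equation (\ref{kongruencja_CndotA_2}) one has $\mathcal{W}=\mathcal{A}$ when $w=0$, and since $w_{p}=w_{q}=w_{x}=0$ the entire equation collapses to $\mathcal{A}_{y}=0$. Putting $w=0$ into the optical scalar (\ref{optyczne_skalary_CndotA}) leaves $\theta_{2}\sim\phi_{y}$, and into the first expansion component (\ref{ekspansja_CndotA_1}) leaves $N_{1}\sim\phi_{y}$. Finally, setting $w=0$ in the second expansion component (\ref{ekspansja_CndotA_2}) annihilates every term carrying an explicit factor of $w$ and gives $N_{2}\sim -\phi\mathcal{Q}_{y}+\phi_{y}\mathcal{Q}+\phi_{x}\mathcal{A}+\phi_{p}$, which completes (\ref{uproszczone_formuly_dla_w_rowne_0}).

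Since each stage is a straight specialization of formulas already in hand, I do not expect any genuine analytic obstacle. The only point requiring a moment's care is the gauge step, where one must confirm that the characteristic equation $q'_{p}=w\,q'_{q}$ is always solvable and that fixing this freedom leaves a consistent residual gauge (here $q'_{p}=0$), exactly in parallel with the $z=0$ case recorded in the Remark following Lemma \ref{Lemat_kongruencja_pierwsza_nontwisting}.
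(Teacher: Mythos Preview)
Your proposal is correct and follows exactly the route the paper intends: the paper itself skips the proof, stating only that it mirrors Lemma~\ref{Lemat_kongruencja_pierwsza_nontwisting}, and you have carried out precisely that mirror argument via the $z\leftrightarrow w$, $x\leftrightarrow y$, $\mathcal{A}\leftrightarrow\mathcal{B}$ swap. The gauge step $q'_{p}=w\,q'_{q}$ and the residual restriction $q'_{p}=0$ are exactly the duals of the $p'_{q}=0$ condition recorded after Lemma~\ref{Lemat_kongruencja_pierwsza_nontwisting}, so nothing further is needed.
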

\begin{proof}
We skip the proof due to its similarity to that of Lemma \ref{Lemat_kongruencja_pierwsza_nontwisting}.
\end{proof}
\begin{Uwaga}
\normalfont
If $w=0$ then the gauge (\ref{gauge}) is restricted to the condition $q'_{p}=0$.
\end{Uwaga}
\begin{Lemat} 
\label{Lemat_kongruencja_druga_twisting}
Let $\mathcal{I} (\mathcal{C}^{e}_{m^{A}}, \mathcal{C}^{e}_{n^{\dot{A}}})$ be twisting, $\varrho_{2} \ne 0$. Then 
\begin{equation}
\label{rozwiazanie_na_x}
x=-wy + H(q,p,w)
\end{equation}
where $H = H (q,p,w)$ is an arbitrary function. Moreover
\begin{equation}
\label{rozwiazanie_drugiego_drug_rownanianstrun}
\mathcal{A} + 2w \mathcal{Q} + w^{2} \mathcal{B} = R \, (y-H_{w}) + H_{p} - w H_{q}, \ R=R(q,p,w)
\end{equation}
and
\begin{subequations}
\label{pomocnicze_oggolne_drugie}
\begin{eqnarray}
 \label{wiezy_drugiej_kongruencji_1}
 \theta_{2} &\sim & \phi - 2 \phi_{y} (y-H_{w}), 
\\ \label{wiezy_drugiej_kongruencji}
 N_{1} &\sim & \phi -  \phi_{y} (y-H_{w}),
\\ \label{wiezy_drugiej_kongruencji_3}
 N_{2} &\sim & \phi_{p} - w \phi_{q} - \frac{\phi H_{q}}{H_{w}-y} - \phi \frac{\partial}{\partial y} (\mathcal{Q} + w \mathcal{B}) 
+ (\mathcal{Q} + w \mathcal{B}) \left( \phi_{y} - \frac{\phi}{H_{w}-y} \right) -R \phi_{w} \ \ \ \ \ \ \ \ \ 
\end{eqnarray}
\end{subequations}
where $R = R (q,p,w)$ is an arbitrary function. 
\end{Lemat}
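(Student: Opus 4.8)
The plan is to mirror the proof of Lemma \ref{Lemat_kongruencja_pierwsza_twisting}, exploiting the fact that the two ASD congruences are interchanged by the discrete symmetry that swaps the two dotted spinor indices, i.e. $q \leftrightarrow p$, $x \leftrightarrow y$, $\mathcal{A} \leftrightarrow \mathcal{B}$, $z \leftrightarrow w$, $\Sigma \leftrightarrow H$, $\Omega \leftrightarrow R$ (under which $m_{\dot{A}} \sim [z,1]$ is carried to $n_{\dot{A}} \sim [1,w]$, $\mathcal{Z} \to \mathcal{W}$, and $M_{A} \to N_{A}$). First I would integrate the first null string equation (\ref{kongruencja_CndotA_1}), $w_{y} - w\, w_{x} = 0$. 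Since $\varrho_{2} \sim w_{x} \ne 0$ by hypothesis, one may regard $x$ as a function of $(q,p,y,w)$; multiplying (\ref{kongruencja_CndotA_1}) by $dx \wedge dy \wedge dq \wedge dp$ and integrating along characteristics (on which $w$ is constant and $dx/dy = -w$) yields (\ref{rozwiazanie_na_x}), namely $x = -wy + H(q,p,w)$ with $H$ an arbitrary function. Implicit differentiation of this relation then gives the four partial derivatives
\begin{equation*}
w_{x} = \frac{1}{H_{w}-y}, \quad w_{y} = \frac{w}{H_{w}-y}, \quad w_{q} = -\frac{H_{q}}{H_{w}-y}, \quad w_{p} = -\frac{H_{p}}{H_{w}-y}.
\end{equation*}

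Next I would pass to the new coordinate system $(q,p,y,w)$ suggested by (\ref{rozwiazanie_na_x}), writing $\tilde{\mathcal{W}}(q,p,y,w) = \mathcal{W}(q,p,x,y)$ and recording the chain-rule identities $\mathcal{W}_{x} = \tilde{\mathcal{W}}_{w} w_{x}$, $\mathcal{W}_{y} = \tilde{\mathcal{W}}_{y} + \tilde{\mathcal{W}}_{w} w_{y}$, $\mathcal{W}_{q} = \tilde{\mathcal{W}}_{q} + \tilde{\mathcal{W}}_{w} w_{q}$, $\mathcal{W}_{p} = \tilde{\mathcal{W}}_{p} + \tilde{\mathcal{W}}_{w} w_{p}$. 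Substituting these together with the derivatives of $w$ into the second null string equation (\ref{kongruencja_CndotA_2}), the two terms proportional to $\tilde{\mathcal{W}}_{w}$ cancel, and after clearing the common nonzero factor $(H_{w}-y)^{-1}$ the equation collapses to the inhomogeneous first-order linear equation
\begin{equation*}
\tilde{\mathcal{W}} + (H_{w}-y)\, \tilde{\mathcal{W}}_{y} = H_{p} - w H_{q}.
\end{equation*}
This is a linear ODE in $y$ with $(q,p,w)$ as parameters; its homogeneous solution is proportional to $(y-H_{w})$ and a particular solution constant in $y$ is $H_{p} - w H_{q}$, so the general solution is precisely (\ref{rozwiazanie_drugiego_drug_rownanianstrun}), with $R = R(q,p,w)$ the arbitrary homogeneous coefficient.

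Finally, the expansion formulas (\ref{pomocnicze_oggolne_drugie}) follow by direct substitution. Inserting the derivatives of $w$ above, together with $\phi_{x} = \tilde{\phi}_{w} w_{x}$ and $\phi_{y} = \tilde{\phi}_{y} + \tilde{\phi}_{w} w_{y}$, into the optical scalar (\ref{optyczne_skalary_CndotA}) and the expansion components (\ref{ekspansja_CndotA}), the terms carrying $\tilde{\phi}_{w}$ again cancel; rescaling by the nonzero factor $(H_{w}-y)$ reproduces (\ref{wiezy_drugiej_kongruencji_1}) and (\ref{wiezy_drugiej_kongruencji}), while the longer expression (\ref{wiezy_drugiej_kongruencji_3}) for $N_{2}$ emerges from the same substitution into (\ref{ekspansja_CndotA_2}) after using (\ref{rozwiazanie_drugiego_drug_rownanianstrun}) to eliminate $\mathcal{W}$. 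The only genuine work is the bookkeeping of the coordinate change: one must verify that the $\tilde{\mathcal{W}}_{w}$ and $\tilde{\phi}_{w}$ contributions cancel, for it is exactly this cancellation that renders the transformed equations integrable. I expect no real obstacle beyond this, since everything is a verbatim dualization of Lemma \ref{Lemat_kongruencja_pierwsza_twisting}.
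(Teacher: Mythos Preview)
Your proposal is correct and follows exactly the route the paper indicates: the paper's own proof simply reads ``We skip the proof due to its similarity to that of Lemma \ref{Lemat_kongruencja_pierwsza_twisting},'' and you have faithfully carried out that dualization, including the correct reduction of (\ref{kongruencja_CndotA_2}) to the linear ODE $\tilde{\mathcal{W}} + (H_{w}-y)\tilde{\mathcal{W}}_{y} = H_{p} - wH_{q}$ and the cancellation of the $\tilde{\mathcal{W}}_{w}$ and $\tilde{\phi}_{w}$ terms.
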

\begin{proof}
We skip the proof due to its similarity to that of Lemma \ref{Lemat_kongruencja_pierwsza_twisting}.
\end{proof}
\begin{Uwaga}
\normalfont
In (\ref{rozwiazanie_drugiego_drug_rownanianstrun}) and (\ref{pomocnicze_oggolne_drugie}) functions $\phi$, $\mathcal{A}$, $\mathcal{Q}$ and $\mathcal{B}$ are understood as functions of $(q,p,w,y)$. 
\end{Uwaga}
Note that
\begin{equation}
  w_{x} = \frac{1}{H_{w}-y}, \ w_{y} = \frac{w}{H_{w}-y}, \ w_{q} = -\frac{H_{q}}{H_{w}-y}, \ w_{p} = -\frac{H_{p}}{H_{w}-y}
\end{equation}
and the transformation formula for $H$ reads
\begin{equation}
\label{transformacja_na_H}
H' = \frac{\lambda^{-1}}{p'_{p} - p'_{q}w} H + \frac{q'_{q} w-q'_{p}}{p'_{p} - p'_{q}w} \sigma^{\dot{2}} + \sigma^{\dot{1}}
\end{equation}
The transformation formula for $R$ is more advanced and we do not need it at this stage.

\subsection{Spaces of the types $[\textrm{deg}]^{e} \otimes [\textrm{any}]^{e}$} 
\label{subsekcja_dege_x_anye}

In this Section we assume that a weak $\mathcal{HH}$-space admits $\mathcal{C}_{m^{\dot{A}}}^{e}$.

\subsubsection{Spaces of the types $\{ [\textrm{deg}]^{e} \otimes [\textrm{any}]^{e},[+-] \}$ and $\{ [\textrm{deg}]^{e} \otimes [\textrm{any}]^{e},[--] \}$}

\begin{Twierdzenie} 
\label{theorem_dege_x_anye_mm_pm}
Let $(\mathcal{M}, ds^{2})$ be a complex (neutral) space of the type $\{ [\textrm{deg}]^{e} \otimes [\textrm{any}]^{e}, [--] \}$ or $\{ [\textrm{deg}]^{e} \otimes [\textrm{any}]^{e}, [+-] \}$. Then there exists a local coordinate system $(q,p,x,y)$ such that the metric takes the form
\begin{equation}
\label{metryka_dege_x_anye_mm_pm}
\frac{1}{2} ds^{2} = \phi^{-2} \left( dqdy - dpdx + \mathcal{A} \, dp^{2} - 2 \mathcal{Q} \, dqdp + \mathcal{B} \, dq^{2} \right)
\end{equation}
where $\phi = \phi (q,p,x,y)$, $\mathcal{A} = \mathcal{A} (q,p,x,y)$, $\mathcal{Q} = \mathcal{Q} (q,p,x,y)$, $\mathcal{B} = \mathcal{B} (q,p,y)$ are arbitrary holomorphic (real smooth) functions such that
\begin{subequations}
\begin{eqnarray}
\label{Twierdzenieeeee_1_1}
\textrm{for the type } \{ [\textrm{deg}]^{e} \otimes [\textrm{any}]^{e}, [--] \}: && \phi_{x}=0, \ \phi \mathcal{Q}_{x} \ne \phi_{y} \mathcal{B} - \phi_{q}
\\ 
\label{Twierdzenieeeee_1_2}
\textrm{for the type } \{ [\textrm{deg}]^{e} \otimes [\textrm{any}]^{e}, [+-] \}: && \phi_{x} \ne 0
\end{eqnarray}
\end{subequations}
\end{Twierdzenie}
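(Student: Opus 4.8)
The plan is to read the canonical form off directly from the structural results already in place, so that essentially no fresh computation is needed. I would start from the general weak $\mathcal{HH}$-space metric \eqref{metric_weak_HH_expanding_space_abrevaitions}: it already depends on the four functions $\phi,\mathcal{A},\mathcal{Q},\mathcal{B}$ of the four hyperheavenly coordinates $(q,p,x,y)$ and already has exactly the shape \eqref{metryka_dege_x_anye_mm_pm} asserted in the theorem. Thus the whole task reduces to determining which constraints the two subtypes impose on these functions.

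The decisive observation is that both subtypes $[--]$ and $[+-]$ carry $-$ in the second slot, i.e.\ the intersection $\mathcal{I}(\mathcal{C}^{e}_{m^{A}},\mathcal{C}^{e}_{m^{\dot{A}}})$ is nontwisting, $\varrho_{1}=0$. This is precisely the hypothesis of Lemma \ref{Lemat_kongruencja_pierwsza_nontwisting}. Applying that lemma, $z_{y}=0$ together with \eqref{kongruencja_mdotA_rownania_1} makes $z$ a function of $(q,p)$ only, and the residual gauge \eqref{transformacja_na_z} lets me set $z=0$. With $z=0$ the surviving null string equation \eqref{kongruencja_mdotA_rownania_2} collapses (since then $\mathcal{Z}=\mathcal{B}$) to $\mathcal{B}_{x}=0$, which is exactly the restriction $\mathcal{B}=\mathcal{B}(q,p,y)$ in the statement; the functions $\phi,\mathcal{A},\mathcal{Q}$ stay otherwise free.

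To split the two subtypes I would use the reduced optical scalar and expansion \eqref{uproszczone_formuly_dla_z_rowne_0}, namely $\theta_{1}\sim\phi_{x}$, $M_{1}\sim\phi_{x}$ and $M_{2}\sim-\phi\mathcal{Q}_{x}+\phi_{x}\mathcal{Q}+\phi_{y}\mathcal{B}-\phi_{q}$. The first slot records expansion of the intersection. For $[+-]$ one has $\theta_{1}\neq0$, i.e.\ $\phi_{x}\neq0$, whence $M_{1}\sim\phi_{x}\neq0$ and the ASD congruence is automatically expanding, so no extra condition is needed and \eqref{Twierdzenieeeee_1_2} results. For $[--]$ one has $\theta_{1}=0$, i.e.\ $\phi_{x}=0$; but $[\textrm{any}]^{e}$ means $\mathcal{C}^{e}_{m^{\dot{A}}}$ is genuinely expanding, $M_{A}\neq0$, and since now $M_{1}\sim\phi_{x}=0$ this forces $M_{2}\neq0$, which upon putting $\phi_{x}=0$ reads $\phi\mathcal{Q}_{x}\neq\phi_{y}\mathcal{B}-\phi_{q}$, giving \eqref{Twierdzenieeeee_1_1}.

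There is no genuine analytic obstacle here: the real content is entirely absorbed into Lemma \ref{Lemat_kongruencja_pierwsza_nontwisting}, and the only step demanding care is the bookkeeping of the type labels — confirming that $-$ in the second slot is the nontwisting condition $\varrho_{1}=0$ and that the first slot is the expansion $\theta_{1}\sim\phi_{x}$, so that in the nonexpanding $[--]$ case the requirement $M_{A}\neq0$ must be separately enforced through $M_{2}\neq0$. I would finally verify that the inequalities $\phi_{x}\neq0$ and $\phi\mathcal{Q}_{x}\neq\phi_{y}\mathcal{B}-\phi_{q}$ are frame-stable characterizations of the subtype rather than artifacts of the chosen gauge; this is immediate because $M_{A}$ transforms as a spinor under \eqref{transformacja_spinorowa}, so its (non)vanishing is invariant.
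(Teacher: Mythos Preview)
Your proposal is correct and follows essentially the same route as the paper: invoke Lemma \ref{Lemat_kongruencja_pierwsza_nontwisting} for the nontwisting case $\varrho_{1}=0$ to obtain $z=0$ and $\mathcal{B}_{x}=0$, then read off the subtype conditions from the reduced formulas \eqref{uproszczone_formuly_dla_z_rowne_0} for $\theta_{1}$, $M_{1}$, $M_{2}$. Your added remark on the gauge-invariance of the $M_{A}\neq 0$ condition is a helpful sanity check that the paper leaves implicit.
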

\begin{proof}
If $\varrho_{1}=0$ (twist of $\mathcal{I}(\mathcal{C}_{m^{A}}^{e}, \mathcal{C}_{m^{\dot{A}}}^{e})$ vanishes) then Lemma \ref{Lemat_kongruencja_pierwsza_nontwisting} holds true. From (\ref{uproszczone_formuly_dla_z_rowne_0}) it follows that $\mathcal{B} = \mathcal{B} (q,p,y)$. Inserting this into (\ref{metric_weak_HH_expanding_space_abrevaitions}) one arrives at (\ref{metryka_dege_x_anye_mm_pm}).

For a subcase $[+-]$ the expansion $\theta_{1}$ is nonzero what implies that $\phi_{x} \ne 0$ and $M_{1} \ne 0$. This proves (\ref{Twierdzenieeeee_1_2}). For a subcase $[--]$ we have $\phi_{x} =0$, but $\phi \mathcal{Q}_{x} \ne \phi_{y} \mathcal{B} - \phi_{q}$, otherwise $M_{2}=0$ and a space is no longer equipped with an expanding ASD $\mathcal{C}$. Thus, (\ref{Twierdzenieeeee_1_1}) is proved.
\end{proof}

\subsubsection{Spaces of the types $\{ [\textrm{deg}]^{e} \otimes [\textrm{any}]^{e},[++] \}$ and $\{ [\textrm{deg}]^{e} \otimes [\textrm{any}]^{e},[-+] \}$}

\begin{Twierdzenie} 
\label{theorem_dege_x_anye_mp_pp}
Let $(\mathcal{M}, ds^{2})$ be a complex (neutral) space of the type $\{ [\textrm{deg}]^{e} \otimes [\textrm{any}]^{e}, [-+] \}$ or $\{ [\textrm{deg}]^{e} \otimes [\textrm{any}]^{e}, [++] \}$. Then there exists a local coordinate system $(q,p,x,z)$ such that the metric takes the form
\begin{eqnarray}
\label{metryka_dege_x_anye_mp_pp}
\frac{1}{2} ds^{2} &=& \phi^{-2} \Big( -dpdx -x \, dqdz -z \, dq dx +  ( \Sigma_{z} dz + \Sigma_{p} dp) \, dq + \mathcal{A} \, dp^{2}   \ \ \ \ \ \  
\\ \nonumber
&&  \ \ \ \ \ \ \ - 2 \mathcal{Q} \, dqdp - \left( 2z \mathcal{Q} + z^{2} \mathcal{A} - \Omega \, (x-\Sigma_{z}) -z \Sigma_{p} \right) dq^{2} \Big)
\end{eqnarray}
where $\phi = \phi (q,p,x,z)$, $\mathcal{A} = \mathcal{A} (q,p,x,z)$, $\mathcal{Q} = \mathcal{Q} (q,p,x,z)$, $\Sigma = \Sigma (q,p,z)$ and $\Omega = \Omega (q,p,z)$ are arbitrary holomorphic (real smooth) functions such that
\begin{subequations}
\begin{eqnarray}
\label{Twierdzenieeeee_2_1}
&& \textrm{for the type } \{ [\textrm{deg}]^{e} \otimes [\textrm{any}]^{e}, [-+] \}:  \phi=F \, (x-\Sigma_{z})^{\frac{1}{2}}, \ F=F (q,p,z) \ne 0
\\  
\label{Twierdzenieeeee_2_2}
&& \textrm{for the type } \{ [\textrm{deg}]^{e} \otimes [\textrm{any}]^{e}, [++] \}:  \phi - 2 \phi_{x} (x-\Sigma_{z}) \ne 0 
\\ \nonumber
&& \qquad \qquad \qquad \qquad \qquad \qquad \qquad \qquad  \qquad \textrm{ and } 
\\ \nonumber
&& \qquad \qquad \qquad \qquad \qquad \qquad  \Big[ \phi - \phi_{x} (x-\Sigma_{z}) \ne 0 \textrm{    or } 
\\ \nonumber
&& z \phi_{p} +  \frac{\phi \Sigma_{p}}{\Sigma_{z}-x} - \phi_{q} - \phi \frac{\partial}{\partial x} (\mathcal{Q} + z \mathcal{A}) + (\mathcal{Q} + z \mathcal{A}) \left( \phi_{x} - \frac{\phi}{\Sigma_{z}-x} \right) -\Omega \phi_{z} \ne 0 \Big]
\end{eqnarray}
\end{subequations}
\end{Twierdzenie}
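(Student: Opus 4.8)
The plan is to handle the twisting case $\varrho_{1} \ne 0$, which is exactly what separates the subtypes $[-+]$ and $[++]$ from the nontwisting subtypes treated in Theorem \ref{theorem_dege_x_anye_mm_pm} (there the second index ``$-$'' records $\varrho_{1}=0$, while here it records $\varrho_{1}\ne 0$). Since $\varrho_{1}\ne 0$, Lemma \ref{Lemat_kongruencja_pierwsza_twisting} applies and supplies the solution (\ref{rozwiazanie_na_y}) of the first null-string equation, the constraint (\ref{rozwiazanie_rownania_pierwszej_struny_2}) on $\mathcal{B}$ (in which $\Omega=\Omega(q,p,z)$ enters as a free integration function), and the optical and expansion formulas (\ref{pomocnicze_oggolne}). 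First I would promote $z$ to a coordinate and pass to the system $(q,p,x,z)$; this is legitimate precisely because $\varrho_{1}\sim z_{y}=(\Sigma_{z}-x)^{-1}\ne 0$ guarantees that the substitution $y\leftrightarrow z$ is invertible.

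Next I would differentiate (\ref{rozwiazanie_na_y}) to obtain $dy=-z\,dx+(\Sigma_{z}-x)\,dz+\Sigma_{q}\,dq+\Sigma_{p}\,dp$ and insert it into the hyperheavenly metric (\ref{metric_weak_HH_expanding_space_abrevaitions}). The decisive simplification is to eliminate $\mathcal{B}$ through (\ref{rozwiazanie_rownania_pierwszej_struny_2}): the $\Sigma_{q}\,dq^{2}$ produced by $dq\,dy$ cancels exactly against the $-\Sigma_{q}\,dq^{2}$ concealed in $\mathcal{B}\,dq^{2}$, and collecting the surviving $dq\,dx$, $dq\,dz$, $dq\,dp$, $dp\,dx$, $dp^{2}$ and $dq^{2}$ coefficients reproduces (\ref{metryka_dege_x_anye_mp_pp}) term by term. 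This is the main technical step of the proof; it is routine but demands care because the tetrad products are symmetric, not antisymmetric, so the cross-terms must be tracked individually.

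It remains to characterise the two subtypes through the intersection $\mathcal{I}(\mathcal{C}^{e}_{m^{A}},\mathcal{C}^{e}_{m^{\dot{A}}})$ and the expanding condition $M_{A}\ne 0$ for the ASD congruence. The subtype $[-+]$ is the nonexpanding, twisting case $\theta_{1}=0$: equating the right-hand side of (\ref{pomocnicze_1}) to zero gives the ordinary differential equation $\phi=2(x-\Sigma_{z})\phi_{x}$ in $x$ (with $q,p,z$ as parameters, so that $\Sigma_{z}$ is $x$-independent), whose general solution is $\phi=F(q,p,z)(x-\Sigma_{z})^{1/2}$ with $F\ne 0$ forced by $\phi\ne 0$; this is (\ref{Twierdzenieeeee_2_1}). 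Feeding this $\phi$ into (\ref{pomocnicze_2}) yields $M_{1}\sim\tfrac{1}{2}\phi\ne 0$, so the ASD congruence is automatically expanding and no extra condition is needed. For the subtype $[++]$ one imposes $\theta_{1}\ne 0$ directly from (\ref{pomocnicze_1}), i.e.\ $\phi-2\phi_{x}(x-\Sigma_{z})\ne 0$, together with $M_{A}\ne 0$, i.e.\ $M_{1}\ne 0$ or $M_{2}\ne 0$; by (\ref{pomocnicze_2}) and (\ref{pomocnicze_3}) this disjunction is precisely the one displayed in (\ref{Twierdzenieeeee_2_2}), which completes the characterisation and the proof.
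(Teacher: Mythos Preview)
Your proof is correct and follows essentially the same route as the paper's: invoke Lemma~\ref{Lemat_kongruencja_pierwsza_twisting} for the twisting case $\varrho_{1}\ne 0$, substitute $y=-xz+\Sigma$ and the solution (\ref{rozwiazanie_rownania_pierwszej_struny_2}) for $\mathcal{B}$ into (\ref{metric_weak_HH_expanding_space_abrevaitions}) to obtain (\ref{metryka_dege_x_anye_mp_pp}), and then read off the subtype conditions from (\ref{pomocnicze_oggolne}). Your version is simply more explicit---in particular, your verification that $M_{1}\sim\tfrac{1}{2}\phi\ne 0$ holds automatically in the $[-+]$ case is a nice detail the paper leaves implicit.
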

\begin{proof}
If twist of $\mathcal{I}(\mathcal{C}_{m^{A}}^{e}, \mathcal{C}_{m^{\dot{A}}}^{e})$ is nonzero ($\varrho_{1} \ne 0$) then Lemma \ref{Lemat_kongruencja_pierwsza_twisting} holds true. Eq. (\ref{rozwiazanie_rownania_pierwszej_struny_2}) gives a solution for $\mathcal{B}$ in terms of $\mathcal{A}$, $\mathcal{Q}$, $\Sigma$ and $\Omega$. Also, (\ref{rozwiazanie_na_y}) gives $y$ in terms of $(q,p,z)$. Putting this into (\ref{metric_weak_HH_expanding_space_abrevaitions}) one arrives at (\ref{metryka_dege_x_anye_mp_pp}). 

For a subtype $[-+]$ one has $\theta_{1}=0 \ \Longrightarrow \ \phi = F(q,p,z) (x-\Sigma_{z})^{\frac{1}{2}}$ where $F=F(q,p,z)$ is an arbitrary nonzero function. Thus, (\ref{Twierdzenieeeee_2_1}) is proved. For a subtype $[++]$ we have $\theta_{1} \ne 0$ and ($M_{1} \ne 0$ or $M_{2} \ne 0$). It proves (\ref{Twierdzenieeeee_2_2}).
\end{proof}

\subsection{Spaces of the types $[\textrm{deg}]^{e} \otimes [\textrm{deg}]^{n}$} 
\label{subsekcja_dege_x_degn}

In this Section we assume that a weak $\mathcal{HH}$-space admits $\mathcal{C}_{m^{\dot{A}}}^{n}$. It means that $M_{A}=0$. If $\varrho_{1}=0$ then from Lemma \ref{Lemat_kongruencja_pierwsza_nontwisting} it follows that $M_{1}=0 \ \Longrightarrow \ \theta_{1}=0$. In this case we deal with the type $\{ [\textrm{deg}]^{e} \otimes [\textrm{deg}]^{n},[--] \}$. On the other hand, if $\varrho_{1} \ne 0$ then from Lemma \ref{Lemat_kongruencja_pierwsza_twisting} it follows that $M_{1}=0 \ \Longrightarrow \ \theta_{1} \ne 0$. Thus, the second  possible type is $\{ [\textrm{deg}]^{e} \otimes [\textrm{deg}]^{n},[++] \}$. Note, that the fact that $M_{A}=0$ automatically implies that ASD Weyl spinor is algebraically special. 

\begin{Twierdzenie} 
\label{theorem_dege_x_degn_mm}
Let $(\mathcal{M}, ds^{2})$ be a complex (neutral) space of the type $\{ [\textrm{deg}]^{e} \otimes [\textrm{deg}]^{n}, [--] \}$. Then there exists a local coordinate system $(q,p,x,y)$ such that the metric takes the form
\begin{equation}
\label{metryka_dege_x_degn_mm}
\frac{1}{2} ds^{2} = \phi^{-2} \Big( dqdy - dpdx + \mathcal{A} \, dp^{2}  + \mathcal{B} \, dq^{2} 
- 2 ( \phi^{-1} ( \phi_{y} \mathcal{B}-\phi_{q} )x + S  ) \, dqdp   \Big)
\end{equation}
where $\phi = \phi (q,p,y)$, $\mathcal{A} = \mathcal{A} (q,p,x,y)$, $\mathcal{B} = \mathcal{B} (q,p,y)$ and $S=S(q,p,y)$ are arbitrary holomorphic (real smooth) functions.
\end{Twierdzenie}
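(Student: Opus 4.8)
The plan is to specialize the general analysis of the ASD congruence $\mathcal{C}_{m^{\dot{A}}}$ to the doubly-degenerate subtype $[--]$, in which the intersection $\mathcal{I}(\mathcal{C}^{e}_{m^{A}}, \mathcal{C}_{m^{\dot{A}}})$ is simultaneously nonexpanding and nontwisting. First I would invoke Lemma \ref{Lemat_kongruencja_pierwsza_nontwisting}: since $\varrho_{1}=0$, the generating spinor gauges to $m_{\dot{A}} \sim [0,1]$ (that is, $z=0$), and one immediately reads off from (\ref{uproszczone_formuly_dla_z_rowne_0}) that $\mathcal{B}_{x}=0$, hence $\mathcal{B}=\mathcal{B}(q,p,y)$, together with the reduced expansion components $M_{1} \sim \phi_{x}$ and $M_{2} \sim -\phi \mathcal{Q}_{x} + \phi_{x} \mathcal{Q} + \phi_{y} \mathcal{B} - \phi_{q}$.

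Next I would impose nonexpansion of the ASD congruence, i.e. $M_{A}=0$. The equation $M_{1}=0$ forces $\phi_{x}=0$, so $\phi=\phi(q,p,y)$; this is automatically consistent with $\theta_{1} \sim \phi_{x}=0$, which confirms that we are genuinely in the $[--]$ subtype. The remaining condition $M_{2}=0$ then collapses, once $\phi_{x}=0$ is used, to $\phi \mathcal{Q}_{x} = \phi_{y}\mathcal{B} - \phi_{q}$, that is
\[
\mathcal{Q}_{x} = \phi^{-1}(\phi_{y}\mathcal{B} - \phi_{q}).
\]
Because $\phi$ and $\mathcal{B}$ depend only on $(q,p,y)$, the right-hand side is independent of $x$, so this first-order equation integrates at once in $x$, yielding $\mathcal{Q} = \phi^{-1}(\phi_{y}\mathcal{B}-\phi_{q})\,x + S(q,p,y)$, where $S=S(q,p,y)$ is the arbitrary integration function.

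Substituting $\mathcal{B}=\mathcal{B}(q,p,y)$, $\phi=\phi(q,p,y)$ and this expression for $\mathcal{Q}$ into the general weak $\mathcal{HH}$ metric (\ref{metric_weak_HH_expanding_space_abrevaitions}) produces precisely (\ref{metryka_dege_x_degn_mm}). One should also note that the coefficient $\mathcal{A}=\mathcal{A}(q,p,x,y)$ never enters the congruence equations when $z=0$: indeed $\mathcal{Z}=\mathcal{B}+2z\mathcal{Q}+z^{2}\mathcal{A}$ reduces to $\mathcal{B}$, so the null string equation (\ref{kongruencja_mdotA_rownania_2}) yields nothing beyond $\mathcal{B}_{x}=0$, and $\mathcal{A}$ is left entirely free, exactly as claimed.

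I do not anticipate a substantive obstacle in this proof; the real content is the bookkeeping of which functions survive and the observation that the single surviving differential constraint $M_{2}=0$ integrates cleanly in $x$. The only step demanding genuine care is verifying that no further restriction is hidden—in particular, confirming that (\ref{kongruencja_mdotA_rownania}) imposes nothing beyond $\mathcal{B}_{x}=0$ after the gauge $z=0$, and that retaining $\mathcal{A}$ as an arbitrary function of all four variables is fully compatible with the Petrov--Penrose label $[--]$ assigned to the intersection.
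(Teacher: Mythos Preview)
Your proposal is correct and follows exactly the same route as the paper's own proof: invoke Lemma~\ref{Lemat_kongruencja_pierwsza_nontwisting} to set $z=0$ and obtain $\mathcal{B}_{x}=0$, then impose $M_{A}=0$ to get $\phi_{x}=0$ and integrate the linear equation $M_{2}=0$ for $\mathcal{Q}$ in $x$. Your write-up is simply a more explicit unpacking of what the paper compresses into two sentences.
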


\begin{proof}
If twist of $\mathcal{I}(\mathcal{C}_{m^{A}}^{e}, \mathcal{C}_{m^{\dot{A}}}^{n})$ vanishes ($\varrho_{1}=0$) then Lemma \ref{Lemat_kongruencja_pierwsza_nontwisting} holds true. Condition $M_{A}=0$ implies $\phi=\phi(q,p,y)$, $\mathcal{B} = \mathcal{B} (q,p,y)$ and $\mathcal{Q} = \phi^{-1} (\phi_{y} \mathcal{B} - \phi_{q})x + S$ where $S=S(q,p,y)$ is an arbitrary function. Inserting this into (\ref{metric_weak_HH_expanding_space_abrevaitions}) one arrives at (\ref{metryka_dege_x_degn_mm}). 
\end{proof}

\begin{Twierdzenie} 
\label{theorem_dege_x_degn_pp}
Let $(\mathcal{M}, ds^{2})$ be a complex (neutral) space of the type $\{ [\textrm{deg}]^{e} \otimes [\textrm{deg}]^{n}, [++] \}$. Then there exists a local coordinate system $(q,p,x,z)$ such that the metric takes the form
\begin{eqnarray}
\label{metryka_dege_x_degn_pp}
\frac{1}{2} ds^{2} &=& F^{-2} (x-\Sigma_{z})^{-2} \Big\{ -(dp+z  dq)dx - (x-\Sigma_{z}) dq dz + \mathcal{A} \, (dp+z dq)^{2} 
\\ \nonumber
&&  \ \ \ \ \ \ \ \ \ \ \ \ \ \ \ \ \ \ \  - \Big( 2T (x-\Sigma_{z})^{2} +\frac{2}{F} (x-\Sigma_{z}) ( \Omega F_{z} -z F_{p} + F_{q})
\\ \nonumber
&&  \ \ \ \ \ \ \ \ \ \ \ \ \ \ \ \ \ \ \ \ \ \ -\Omega \Sigma_{zz} + z \Sigma_{zp} - \Sigma_{zq} \Big) (dp+zdq) dq+ (x-\Sigma_{z}) \Omega \, dq^{2} \Big\}
\end{eqnarray}
where $\mathcal{A} = \mathcal{A} (q,p,x,z)$, $F=F(q,p,z)$, $T=T(q,p,z)$, $\Omega=\Omega(q,p,z)$ and $\Sigma=\Sigma(q,p,z)$ are arbitrary holomorphic (real smooth) functions.
\end{Twierdzenie}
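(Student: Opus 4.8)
This theorem treats the type $\{ [\textrm{deg}]^{e} \otimes [\textrm{deg}]^{n}, [++] \}$, which means the ASD congruence $\mathcal{C}_{m^{\dot{A}}}$ is expanding ($\theta_{1} \ne 0$) but its expansion spinor vanishes, $M_{A}=0$, together with the twisting case $\varrho_{1} \ne 0$. The natural strategy is to run the twisting Lemma \ref{Lemat_kongruencja_pierwsza_twisting} and then impose $M_{A}=0$ as two scalar constraints coming from \eqref{pomocnicze_2} and \eqref{pomocnicze_3}. First I would invoke Lemma \ref{Lemat_kongruencja_pierwsza_twisting}: since $\varrho_{1} \ne 0$ we have the coordinate change \eqref{rozwiazanie_na_y}, $y = -xz + \Sigma(q,p,z)$, so we pass to the coordinate system $(q,p,x,z)$, and Eq. \eqref{rozwiazanie_rownania_pierwszej_struny_2} solves for $\mathcal{Z} = \mathcal{B} + 2z\mathcal{Q} + z^{2}\mathcal{A}$ in terms of $\mathcal{A}$, $\mathcal{Q}$, $\Sigma$ and $\Omega$.

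\medskip

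The key step is imposing $M_{1}=0$. From \eqref{pomocnicze_2}, $M_{1} \sim \phi - \phi_{x}(x-\Sigma_{z}) = 0$, which is a first-order linear PDE for $\phi$ in the variable $x$; solving it gives $\phi = F(q,p,z)\,(x-\Sigma_{z})^{\frac{1}{2}}$, exactly the form appearing in \eqref{Twierdzenieeeee_2_1} for the $[-+]$ subtype. This is the crucial simplification: once $\phi$ has this explicit form, the conformal factor $\phi^{-2} = F^{-2}(x-\Sigma_{z})^{-2}$ reproduces the prefactor in \eqref{metryka_dege_x_degn_pp}. One should check here that $\theta_{1} \ne 0$ is still possible with $M_{1}=0$ — indeed, from \eqref{pomocnicze_1}, $\theta_{1} \sim \phi - 2\phi_{x}(x-\Sigma_{z})$, and substituting the solved $\phi$ gives $\theta_{1} \sim \tfrac{1}{2}F(x-\Sigma_{z})^{\frac{1}{2}} \ne 0$, consistent with the opening discussion of Section \ref{subsekcja_dege_x_degn} that $M_{A}=0$ with $\varrho_{1}\ne 0$ forces $\theta_{1}\ne 0$.

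\medskip

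Next I would impose $M_{2}=0$ using \eqref{pomocnicze_3}. Substituting the explicit $\phi = F(x-\Sigma_{z})^{\frac{1}{2}}$ into that expression yields, after computing the needed derivatives $\phi_{p}$, $\phi_{q}$, $\phi_{x}$, $\phi_{z}$, a single algebraic relation that determines the combination $\mathcal{Q} + z\mathcal{A}$ (equivalently fixes $\mathcal{Q}$ once $\mathcal{A}$ is free). Together with \eqref{rozwiazanie_rownania_pierwszej_struny_2}, this then pins down $\mathcal{B}$ and $\mathcal{Q}$ so that only $\mathcal{A}=\mathcal{A}(q,p,x,z)$ and the three functions $F$, $\Sigma$, $\Omega$ of $(q,p,z)$ remain arbitrary. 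A fourth arbitrary function $T=T(q,p,z)$ enters through the integration of the $M_{2}=0$ constraint. The final step is purely mechanical: substitute $\phi$, $\mathcal{A}$, $\mathcal{Q}$, $\mathcal{B}$ and the coordinate change $y=-xz+\Sigma$ (so that $dy = -x\,dz - z\,dx + \Sigma_{z}dz + \Sigma_{p}dp + \Sigma_{q}dq$) into \eqref{metric_weak_HH_expanding_space_abrevaitions} and collect terms, which produces \eqref{metryka_dege_x_degn_pp}.

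\medskip

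\textbf{Main obstacle.} The genuinely laborious part is the bookkeeping in the final substitution and, especially, extracting the $dqdp$-coefficient of \eqref{metryka_dege_x_degn_pp}, which carries the intricate combination $2T(x-\Sigma_{z})^{2} + \frac{2}{F}(x-\Sigma_{z})(\Omega F_{z} - zF_{p} + F_{q}) - \Omega\Sigma_{zz} + z\Sigma_{zp} - \Sigma_{zq}$. Getting this term correct requires carefully tracking how $\mathcal{B} + 2z\mathcal{Q} + z^{2}\mathcal{A}$ from \eqref{rozwiazanie_rownania_pierwszej_struny_2} recombines with the differentials of $y$ under the change of variables, and how the $M_{2}=0$ constraint redistributes pieces between the $dqdp$ and $dq^{2}$ coefficients. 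I expect no conceptual difficulty beyond this — the theorem is essentially the $[-+]$ computation of Theorem \ref{theorem_dege_x_anye_mp_pp} specialized by the extra constraint $M_{2}=0$, so the real work is algebraic reorganization rather than any new idea.
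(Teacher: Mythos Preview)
Your overall strategy is the paper's: invoke Lemma~\ref{Lemat_kongruencja_pierwsza_twisting}, impose $M_{1}=0$ to determine $\phi$, then impose $M_{2}=0$ to fix $\mathcal{Q}+z\mathcal{A}$, and substitute everything into \eqref{metric_weak_HH_expanding_space_abrevaitions}. However, there is a genuine error in the key step. The equation $M_{1}\sim\phi-\phi_{x}(x-\Sigma_{z})=0$ is the ODE $\phi_{x}/\phi=1/(x-\Sigma_{z})$, whose solution is
\[
\phi=F(q,p,z)\,(x-\Sigma_{z}),
\]
with exponent $1$, not $1/2$. You have confused this with the condition $\theta_{1}=0$, i.e.\ $\phi-2\phi_{x}(x-\Sigma_{z})=0$, which is what produces the square-root form $\phi=F\,(x-\Sigma_{z})^{1/2}$ of \eqref{Twierdzenieeeee_2_1} for the $[-+]$ subtype. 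The two conditions are different: here we want $M_{1}=0$ with $\theta_{1}\ne0$, not the reverse.

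This slip propagates. Your own formula $\phi^{-2}=F^{-2}(x-\Sigma_{z})^{-2}$ is inconsistent with your stated $\phi=F(x-\Sigma_{z})^{1/2}$ (it would give exponent $-1$), though it happens to match the correct answer. Your check that $\theta_{1}\ne0$ is also incorrect as written: substituting $\phi=F(x-\Sigma_{z})^{1/2}$ into $\phi-2\phi_{x}(x-\Sigma_{z})$ gives exactly zero, not $\tfrac12 F(x-\Sigma_{z})^{1/2}$. With the correct $\phi=F(x-\Sigma_{z})$ one finds $\theta_{1}\sim -F(x-\Sigma_{z})\ne0$, which is the automatic nonvanishing noted in the paper. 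Finally, the closing remark that ``the theorem is essentially the $[-+]$ computation \ldots\ specialized by $M_{2}=0$'' is off: the $[-+]$ case is $\theta_{1}=0$, whereas here $\theta_{1}\ne0$ and both $M_{1}=0$, $M_{2}=0$ are imposed. Once you replace the exponent, the rest of your plan (solving $M_{2}=0$ for $\mathcal{Q}+z\mathcal{A}$ and collecting terms) goes through as in the paper's formula \eqref{solution_pomoc_naOiA}.
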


\begin{proof}
If twist of $\mathcal{I}(\mathcal{C}_{m^{A}}^{e}, \mathcal{C}_{m^{\dot{A}}}^{n})$ is nonzero ($\varrho_{1} \ne 0$) then Lemma \ref{Lemat_kongruencja_pierwsza_twisting} holds true. From (\ref{pomocnicze_1}) one finds that condition  $M_{1}=0$ implies 
\begin{equation}
\label{rozwiazanie_na_fi_pierwsze}
\phi = F(q,p,z)(x-\Sigma_{z})
\end{equation}
where $F$ is an arbitrary nonzero function. Thus, $\theta_{1}$ is automatically nonzero. Eq. $M_{2}=0$ is quite easy to solve and it yields
\begin{equation}
\label{solution_pomoc_naOiA}
\mathcal{Q} + z \mathcal{A} = T (x-\Sigma_{z})^{2} + \frac{1}{F} (\Omega F_{z} - z F_{p} + F_{q}) (x-\Sigma_{z}) + \frac{1}{2} (\Sigma_{p} + z\Sigma_{zp} - \Sigma_{zq} - \Omega \Sigma_{zz})
\end{equation}
where $T=T(q,p,z)$ is an arbitrary function. Formula (\ref{solution_pomoc_naOiA}) is a solution for $\mathcal{Q}$ in terms of $\mathcal{A}$, $\Sigma$, $\Omega$, $F$ and $T$. Inserting (\ref{solution_pomoc_naOiA}) into (\ref{rozwiazanie_rownania_pierwszej_struny_2}) one gets a solution for $\mathcal{B}$. Collecting all the results and inserting them into (\ref{metric_weak_HH_expanding_space_abrevaitions}) one arrives at (\ref{metryka_dege_x_degn_pp}).
\end{proof}

Function $F=F(q,p,z)$ (compare (\ref{rozwiazanie_na_fi_pierwsze})) transforms under (\ref{gauge}) as follows
\begin{equation}
\label{transformacja_na_F}
F' = \frac{\Delta \lambda^{\frac{1}{2}} F}{q'_{q} - q'_{p}z}
\end{equation}

\subsection{Spaces of the types $[\textrm{deg}]^{e} \otimes [\textrm{deg}]^{ne}$} 
\label{subsekcja_dege_x_degne}

In this Section we consider weak $\mathcal{HH}$-spaces equipped with ASD $\mathcal{C}^{ne}$. To be more precise, we consider spaces of the types $\{ [\textrm{deg}]^{e} \otimes [\textrm{deg}]^{n}, [--] \}$ (the metric (\ref{metryka_dege_x_degn_mm})) and $\{ [\textrm{deg}]^{e} \otimes [\textrm{deg}]^{n}, [++] \}$ (the metric (\ref{metryka_dege_x_degn_pp})) and we equip these spaces with "the second" ASD congruence, namely $\mathcal{C}^{e}_{n^{\dot{A}}}$. In what follows the metrics (\ref{metryka_dege_x_degn_mm}) and (\ref{metryka_dege_x_degn_pp}) are background for further considerations.

The classification of spaces equipped with SD $\mathcal{C}^{e}$ and ASD $\mathcal{C}^{ne}$ presented in \cite{Chudecki_Kahler_1} lists 7 subtypes of such spaces. These subtypes characterize by the following properties of $\mathcal{I}s$: $[--,++]$, $[--,+-]$, $[--,-+]$, $[++,--]$, $[++,-+]$, $[++,+-]$ and $[++,++]$. In Section \ref{subsekcja_nieistniejacy_typ} we prove that the subtype $\{ [\textrm{deg}]^{e} \otimes [\textrm{deg}]^{ne}, [++,-+] \}$ cannot, in fact, exist\footnote{It does not mean that SD $\mathcal{C}^{e}$ and ASD $\mathcal{C}^{ne}$ cannot intersect in such a manner that properties of $\mathcal{I}s$ are $[++,-+]$. It is possible within the types $\{ [\textrm{I}]^{e} \otimes [\textrm{deg}]^{ne}, [++,-+] \}$ but we do not consider such types in this article.}.

\begin{Uwaga}
\normalfont
The eighth possibility (types $\{ [\textrm{deg}]^{e} \otimes [\textrm{deg}]^{ne}, [--,--] \}$) also cannot exist. Indeed, for such types the function $\phi$ is a function of $(q,p,y)$ only (compare Theorem \ref{theorem_dege_x_degn_mm}). Moreover, $\theta_{2}=0$ what implies $\phi_{y}=0$ (compare Lemma \ref{Lemat_kongruencja_druga_nontwisting}). Consequently, $\phi=\phi(q,p) \ \Longrightarrow \ J_{\dot{A}}=0$ and SD $\mathcal{C}_{m^{A}}$ is nonexpanding what is a contradiction.
\end{Uwaga}

\subsubsection{Spaces of the types $\{ [\textrm{deg}]^{e} \otimes [\textrm{deg}]^{ne},[--,++]  \}$, $\{ [\textrm{deg}]^{e} \otimes [\textrm{deg}]^{ne},[--,+-]  \}$ and $\{ [\textrm{deg}]^{e} \otimes [\textrm{deg}]^{ne},[--,-+]  \}$} 

In this Section the metric (\ref{metryka_dege_x_degn_mm}) is a background. 

\begin{Twierdzenie} 
\label{theorem_dege_x_degne_mm_pm}
Let $(\mathcal{M}, ds^{2})$ be a complex (neutral) space of the type $\{ [\textrm{deg}]^{e} \otimes [\textrm{deg}]^{ne}, [--,+-] \}$. Then there exists a local coordinate system $(q,p,x,y)$ such that the metric takes the form
\begin{equation}
\label{metryka_dege_x_degne_mm_pm}
\frac{1}{2} ds^{2} = \phi^{-2} \Big( dqdy - dpdx + \mathcal{A} \, dp^{2}  + \mathcal{B} \, dq^{2} 
 - 2 ( \phi^{-1} (\mathcal{B} \phi_{y}-\phi_{q} )x + S  ) \, dqdp   \Big)
\end{equation}
where $\phi = \phi (q,p,y)$, $\mathcal{A} = \mathcal{A} (q,p,x)$, $\mathcal{B} = \mathcal{B} (q,p,y)$ and $S=S(q,p,y)$ are arbitrary holomorphic (real smooth) functions such that $\phi_{y} \ne 0$.
\end{Twierdzenie}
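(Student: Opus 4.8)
The plan is to take the metric (\ref{metryka_dege_x_degn_mm}) of Theorem \ref{theorem_dege_x_degn_mm} as the background and to equip the corresponding space with a second ASD congruence $\mathcal{C}^{e}_{n^{\dot{A}}}$, then read off the constraints imposed by demanding that the intersection $\mathcal{I}(\mathcal{C}^{e}_{m^{A}}, \mathcal{C}^{e}_{n^{\dot{A}}})$ be of type $[+-]$, i.e. expanding ($\theta_{2} \ne 0$) and nontwisting ($\varrho_{2}=0$). Since the background already carries $\phi=\phi(q,p,y)$, $\mathcal{B}=\mathcal{B}(q,p,y)$, $\mathcal{Q}=\phi^{-1}(\phi_{y}\mathcal{B}-\phi_{q})x+S$ with $S=S(q,p,y)$, and $\mathcal{A}=\mathcal{A}(q,p,x,y)$, the only genuinely new input comes from the equations governing the second congruence.

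First I would impose $\varrho_{2}=0$, which places us exactly in the hypotheses of Lemma \ref{Lemat_kongruencja_druga_nontwisting}. That lemma gives $w=0$, whereupon the second ASD null string equation (\ref{kongruencja_CndotA_2}) collapses (with $\mathcal{W}=\mathcal{A}$) to $\mathcal{A}_{y}=0$, so that $\mathcal{A}=\mathcal{A}(q,p,x)$. This single refinement of the background is the only change in the functional dependence of the metric coefficients relative to Theorem \ref{theorem_dege_x_degn_mm}.

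Next I would impose the expanding condition. Lemma \ref{Lemat_kongruencja_druga_nontwisting} yields $\theta_{2}\sim\phi_{y}$ and $N_{1}\sim\phi_{y}$ (formulas (\ref{uproszczone_formuly_dla_w_rowne_0})), so $\theta_{2}\ne0$ is equivalent to $\phi_{y}\ne0$; the same relation shows $N_{1}\ne0$, confirming that the second congruence is genuinely expanding and that the space carries the label ``$e$''. Substituting the refined data $\phi=\phi(q,p,y)$, $\mathcal{A}=\mathcal{A}(q,p,x)$, $\mathcal{B}=\mathcal{B}(q,p,y)$ and the background expression for $\mathcal{Q}$ into the general form (\ref{metric_weak_HH_expanding_space_abrevaitions}) then produces (\ref{metryka_dege_x_degne_mm_pm}).

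The computations here are all routine; the point requiring care is consistency. Lemma \ref{Lemat_kongruencja_druga_nontwisting} is stated for the general expanding $\mathcal{HH}$-metric, whereas our background has already been specialized (in particular $\phi_{x}=0$). I would therefore verify that specializing the lemma's conclusions to this background introduces no hidden over-determination beyond $\mathcal{A}_{y}=0$ and $\phi_{y}\ne0$; for instance, that the formula for $N_{2}$ forces no further constraint, since the expansion of the intersection is already secured by $N_{1}\ne0$. Once this compatibility is checked, the statement follows immediately.
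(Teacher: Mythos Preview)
Your proposal is correct and follows essentially the same route as the paper: start from the background metric (\ref{metryka_dege_x_degn_mm}), apply Lemma \ref{Lemat_kongruencja_druga_nontwisting} for $\varrho_{2}=0$ to obtain $\mathcal{A}_{y}=0$ and $\theta_{2}\sim\phi_{y}\ne 0$, and observe that $\phi_{y}\ne 0$ forces $N_{1}\ne 0$, so the second congruence is automatically expanding. Your additional remark about checking $N_{2}$ is harmless but unnecessary, since $N_{1}\ne 0$ already secures the expansion.
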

\begin{proof}
 Consider the metric (\ref{metryka_dege_x_degn_mm}) equipped with $\mathcal{C}^{e}_{n^{\dot{A}}}$ such that $\mathcal{I} (\mathcal{C}^{e}_{m^{A}},\mathcal{C}^{e}_{n^{\dot{A}}}) = \mathcal{I}^{+-}$. From Lemma \ref{Lemat_kongruencja_druga_nontwisting} one finds that $\mathcal{A}_{y}=0$ and $\theta_{2} \sim \phi_{y} \ne 0$. The condition $\phi_{y} \ne 0$ automatically implies that $N_{1} \ne 0$ so $\mathcal{C}^{e}_{n^{\dot{A}}}$ in necessarily expanding.
\end{proof}

Spaces for which twist $\varrho_{2}$ of $\mathcal{I} (\mathcal{C}^{e}_{m^{A}}, \mathcal{C}^{e}_{n^{\dot{A}}})$ is nonzero are a little more complicated. For such spaces Theorem \ref{theorem_dege_x_degn_mm} and Lemma \ref{Lemat_kongruencja_druga_twisting} hold true simultaneously. Let us consider the case $\mathcal{I} (\mathcal{C}^{e}_{m^{A}}, \mathcal{C}^{e}_{n^{\dot{A}}}) = \mathcal{I}^{++}$ first.

\begin{Twierdzenie} 
\label{theorem_dege_x_degne_mm_pp}
Let $(\mathcal{M}, ds^{2})$ be a complex (neutral) space of the type $\{ [\textrm{deg}]^{e} \otimes [\textrm{deg}]^{ne}, [--,++] \}$. Then there exists a local coordinate system $(q,p,w,y)$ such that the metric takes the form
\begin{eqnarray}
\label{metryka_dege_x_degne_mm_pp}
\frac{1}{2} ds^{2} &=& \phi^{-2} \big\{ (dq+ wdp) dy + (y-H_{w}) dpdw - H_{q} \, dqdp + \mathcal{B} \, dq^{2}
\\ \nonumber
&& \ \ \ \ \ \ -2 \left( \phi^{-1} (\mathcal{B} \phi_{y} - \phi_{q}) (H-wy) +S  \right)  (dq+wdp)dp
\\ \nonumber
&&  \ \ \ \ \ \ + \left( -w^{2} \mathcal{B} + R (y-H_{w}) -w H_{q}  \right) \, dp^{2}   \big\}
\end{eqnarray}
where $\phi = \phi (q,p,y)$, $\mathcal{B} = \mathcal{B} (q,p,y)$, $S=S(q,p,y)$, $R=R(q,p,w)$ and $H=H(q,p,w)$ are arbitrary holomorphic (real smooth) functions such that 
\begin{subequations}
\begin{eqnarray}
\label{pppomoccnicze_1}
& \phi-2\phi_{y} (y-H_{w}) \ne 0 &
\\ \nonumber
&\textrm{and}&
\\ 
\label{pppomoccnicze_2}
& \bigg[ \phi-\phi_{y} (y-H_{w}) \ne 0 \ \ \ \textrm{or}&
\\ 
\label{pppomoccnicze_3}
& \phi_{p} - w \phi_{q} - \dfrac{\phi H_{q}}{H_{w}-y} - \phi \dfrac{\partial}{\partial y} \big( \phi^{-1} (\mathcal{B} \phi_{y} - \phi_{q}) (H-wy) +S + w \mathcal{B} \big) &
\\ \nonumber
& + \big( \phi^{-1} (\mathcal{B} \phi_{y} - \phi_{q}) (H-wy) +S + w \mathcal{B} \big) \left( \phi_{y} - \dfrac{\phi}{H_{w}-y} \right) \ne 0 \ \bigg]&
\end{eqnarray}
\end{subequations}
\end{Twierdzenie}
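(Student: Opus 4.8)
The plan is to follow exactly the same recipe used in the proof of Theorem \ref{theorem_dege_x_degne_mm_pm}, but now with the twisting second congruence, so that both Theorem \ref{theorem_dege_x_degn_mm} (giving the background metric (\ref{metryka_dege_x_degn_mm}) with $\phi=\phi(q,p,y)$, $\mathcal{B}=\mathcal{B}(q,p,y)$, $\mathcal{Q}=\phi^{-1}(\mathcal{B}\phi_{y}-\phi_{q})x+S$) and Lemma \ref{Lemat_kongruencja_druga_twisting} (governing the twisting $\mathcal{I}(\mathcal{C}^{e}_{m^{A}},\mathcal{C}^{e}_{n^{\dot{A}}})$) are imposed simultaneously. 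First I would invoke Lemma \ref{Lemat_kongruencja_druga_twisting}: since $\varrho_{2}\ne 0$, formula (\ref{rozwiazanie_na_x}) expresses $x=-wy+H(q,p,w)$, so I pass to the new coordinate system $(q,p,w,y)$, treating $w$ as the independent variable in place of $x$. The substitution $x=H-wy$ must be inserted into every place where $x$ occurs, in particular into the background expression for $\mathcal{Q}$, which becomes $\mathcal{Q}=\phi^{-1}(\mathcal{B}\phi_{y}-\phi_{q})(H-wy)+S$.

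Next I would impose the twisting-second-congruence constraint (\ref{rozwiazanie_drugiego_drug_rownanianstrun}), which reads $\mathcal{A}+2w\mathcal{Q}+w^{2}\mathcal{B}=R\,(y-H_{w})+H_{p}-wH_{q}$ with $R=R(q,p,w)$ arbitrary. Solving this for $\mathcal{A}$ gives $\mathcal{A}=R(y-H_{w})+H_{p}-wH_{q}-2w\mathcal{Q}-w^{2}\mathcal{B}$, with $\mathcal{Q}$ already known from the previous step. The only remaining task is bookkeeping: substitute $dx=d(H-wy)=H_{q}\,dq+H_{p}\,dp+(H_{w}-y)\,dw-w\,dy$ into the background metric (\ref{metryka_dege_x_degn_mm}), regroup all one-forms, and check that the result collapses to the stated form (\ref{metryka_dege_x_degne_mm_pp}). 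I expect the combinations $dq+w\,dp$ to emerge naturally as the privileged direction, exactly as in the analogous twisting calculation, and the terms proportional to $\mathcal{Q}$ to reorganise into the coefficient $-2(\phi^{-1}(\mathcal{B}\phi_{y}-\phi_{q})(H-wy)+S)(dq+w\,dp)dp$ and the coefficient $(-w^{2}\mathcal{B}+R(y-H_{w})-wH_{q})dp^{2}$. The cross term $(y-H_{w})dp\,dw$ and $-H_{q}\,dq\,dp$ come directly from expanding $-dp\,dx$.

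Finally I would read off the type conditions. The subtype is $[--,++]$, so the second intersection $\mathcal{I}(\mathcal{C}^{e}_{m^{A}},\mathcal{C}^{e}_{n^{\dot{A}}})$ must be genuinely expanding: its expansion $\theta_{2}$ must be nonzero, and the congruence $\mathcal{C}^{e}_{n^{\dot{A}}}$ must itself be expanding, i.e.\ $N_{A}\ne 0$. Translating this via (\ref{pomocnicze_oggolne_drugie}): $\theta_{2}\ne 0$ gives (\ref{pppomoccnicze_1}), namely $\phi-2\phi_{y}(y-H_{w})\ne 0$; and $N_{A}\ne 0$ means $N_{1}\ne 0$ \emph{or} $N_{2}\ne 0$, which by (\ref{wiezy_drugiej_kongruencji}) and (\ref{wiezy_drugiej_kongruencji_3}) (with $\mathcal{Q}+w\mathcal{B}$ written out in the present coordinates as $\phi^{-1}(\mathcal{B}\phi_{y}-\phi_{q})(H-wy)+S+w\mathcal{B}$) becomes the disjunction (\ref{pppomoccnicze_2})--(\ref{pppomoccnicze_3}). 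The first intersection still being $[--]$ is guaranteed automatically by the use of Theorem \ref{theorem_dege_x_degn_mm} as background, so no extra condition is needed there.

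The only genuinely delicate point is the coordinate change $x\mapsto w$: one must be careful that $\phi$, $\mathcal{B}$ and $S$, which in the background depend on $(q,p,y)$, are untouched by this substitution (they carry no $x$-dependence), whereas $\mathcal{Q}$ and $\mathcal{A}$ acquire explicit $(H-wy)$ factors and must be re-expressed consistently. Verifying that the explicit metric reorganises into (\ref{metryka_dege_x_degne_mm_pp}) without leftover terms is the main computational obstacle, but it is routine once the grouping into $dq+w\,dp$ is recognised; I would streamline it by first writing everything in terms of the one-form $dq+w\,dp$ and the differential $dw$, which are the natural objects adapted to $\mathcal{C}^{e}_{n^{\dot{A}}}$.
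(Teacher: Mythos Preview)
Your proposal is correct and follows essentially the same route as the paper: start from the background metric of Theorem \ref{theorem_dege_x_degn_mm}, invoke Lemma \ref{Lemat_kongruencja_druga_twisting} to replace $x$ by $-wy+H(q,p,w)$, solve (\ref{rozwiazanie_drugiego_drug_rownanianstrun}) for $\mathcal{A}$, substitute into (\ref{metryka_dege_x_degn_mm}), and read off the type conditions from (\ref{pomocnicze_oggolne_drugie}). The only point you handle slightly differently is the vanishing of the $R\phi_{w}$ term in (\ref{wiezy_drugiej_kongruencji_3}): you observe directly that $\phi=\phi(q,p,y)$ has no $w$-dependence after the coordinate change, while the paper argues via the chain rule $0=\phi_{x}=\phi_{w}w_{x}$ with $w_{x}\ne 0$; both are equivalent.
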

\begin{proof}
Consider the metric (\ref{metryka_dege_x_degn_mm}) equipped with $\mathcal{C}^{e}_{n^{\dot{A}}}$ such that $\mathcal{I} (\mathcal{C}^{e}_{m^{A}},\mathcal{C}^{e}_{n^{\dot{A}}}) = \mathcal{I}^{++}$. Coordinate $x$ is replaced by (\ref{rozwiazanie_na_x}). Formula (\ref{rozwiazanie_drugiego_drug_rownanianstrun}) gives a solution for $\mathcal{A}$ in terms of $\phi$, $\mathcal{B}$, $S$, $R$ and $H$, namely
\begin{equation}
\label{rrozwiaazanieAA}
\mathcal{A} = -2w (\phi^{-1} (\phi_{y} \mathcal{B} - \phi_{q}) (H-wy) +S) - w^{2} \mathcal{B} + R(y-H_{w}) + H_{p} - w H_{q}
\end{equation}
Putting (\ref{rrozwiaazanieAA}) and (\ref{rozwiazanie_na_x}) into (\ref{metryka_dege_x_degn_mm}) one arrives at (\ref{metryka_dege_x_degne_mm_pp}).

From Theorem \ref{theorem_dege_x_degn_mm} it follows that $\phi_{x}=0$. Thus, $0=\phi_{x} = \phi_{w} w_{x} \ \Longrightarrow \ \phi_{w}=0$. With $\phi_{w}=0$, Eqs. (\ref{wiezy_drugiej_kongruencji_1})-(\ref{wiezy_drugiej_kongruencji_3}) reduce to (\ref{pppomoccnicze_1})-(\ref{pppomoccnicze_3}).
\end{proof}

\begin{Twierdzenie} 
\label{theorem_dege_x_degne_mm_mp}
Let $(\mathcal{M}, ds^{2})$ be a complex (neutral) space of the type $\{ [\textrm{deg}]^{e} \otimes [\textrm{deg}]^{ne}, [--,-+] \}$. Then there exists a local coordinate system $(q,p,w,y)$ such that the metric takes the form
\begin{equation}
\label{metryka_dege_x_degne_mm_mp}
\frac{1}{2} ds^{2} = y^{-1} \big( dqdy+w\, dpdy + y \, dpdw + \mathcal{B} \, dq^{2} + (w \mathcal{B} -2 S) \, dpdq + (-2wS +yR) \, dp^{2}   \big)
\end{equation}
where $\mathcal{B} = \mathcal{B} (q,p,y)$, $S=S(q,p,y)$ and $R=R(q,p,w)$ are arbitrary holomorphic (real smooth) functions. 
\end{Twierdzenie}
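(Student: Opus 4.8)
The plan is to mirror the proof of Theorem~\ref{theorem_dege_x_degne_mm_pp}: take the background metric~(\ref{metryka_dege_x_degn_mm}) of Theorem~\ref{theorem_dege_x_degn_mm} and adjoin a \emph{twisting} second ASD congruence $\mathcal{C}^{e}_{n^{\dot{A}}}$ with $\mathcal{I}(\mathcal{C}^{e}_{m^{A}},\mathcal{C}^{e}_{n^{\dot{A}}})=\mathcal{I}^{-+}$, i.e.\ $\theta_2=0$ and $\varrho_2\neq 0$. Since $\varrho_2\neq 0$, Lemma~\ref{Lemat_kongruencja_druga_twisting} applies: I would trade $x$ for $w$ via $x=-wy+H(q,p,w)$ and work in $(q,p,w,y)$. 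From the background one has $\phi_x=0$, so $\phi_x=\phi_w w_x=0$ gives $\phi_w=0$ and hence $\phi=\phi(q,p,y)$.

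The conceptual core is the condition $\theta_2=0$. By~(\ref{wiezy_drugiej_kongruencji_1}) it reads $\phi=2\phi_y(y-H_w)$. Here $\phi$ and $\phi_y$ depend only on $(q,p,y)$ whereas $H_w$ depends on $(q,p,w)$; differentiating in $w$ forces $H_{ww}=0$, so $H=H_0(q,p)+wH_1(q,p)$. The relation collapses to the ODE $\phi=2\phi_y(y-H_1)$ in $y$, integrated at once to $\phi=C(q,p)(y-H_1)^{1/2}$. One checks from~(\ref{wiezy_drugiej_kongruencji}) that $N_1\sim\tfrac{1}{2}C(y-H_1)^{1/2}\neq 0$, so the congruence is genuinely expanding (type $[\textrm{deg}]^{ne}$), while $\varrho_2\sim w_x=(H_1-y)^{-1}\neq 0$ keeps it twisting; thus no extra inequalities survive, matching the theorem.

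I would then spend the residual gauge. Rescaling by $\lambda=C^2$ together with the shift $\sigma^{\dot{2}}=-H_1$ normalizes $C\to 1$ and relabels $y\to y-H_1$, yielding $\phi=y^{1/2}$ and reducing the congruence relation to $x=-wy+H_0$; a subsequent shift $\sigma^{\dot{1}}=-H_0$ (compatible with $z=0$, i.e.\ $p'_q=0$) removes $H_0$ through~(\ref{transformacja_na_H}), leaving $H=0$ and $x=-wy$. With $\phi=y^{1/2}$ and $x=-wy$ the background datum $\mathcal{Q}=\phi^{-1}(\phi_y\mathcal{B}-\phi_q)x+S$ becomes $\mathcal{Q}=-\tfrac{1}{2}w\mathcal{B}+S$, and~(\ref{rozwiazanie_drugiego_drug_rownanianstrun}) with $H=0$ reads $\mathcal{A}+2w\mathcal{Q}+w^2\mathcal{B}=Ry$, whence $\mathcal{A}=yR-2wS$.

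Finally I would substitute $x=-wy$, $\phi=y^{1/2}$ and these $\mathcal{Q},\mathcal{A}$ into~(\ref{metryka_dege_x_degn_mm}): the piece $-dpdx$ supplies $w\,dpdy+y\,dpdw$, the $dqdp$-coefficient reduces to $w\mathcal{B}-2S$, the $dp^2$-coefficient equals $\mathcal{A}=-2wS+yR$, and $dqdy$, $\mathcal{B}\,dq^2$ are untouched; these assemble into~(\ref{metryka_dege_x_degne_mm_mp}). The main obstacle is the gauge bookkeeping of the third paragraph---confirming that normalizing $\phi$ and clearing $H$ stay within the gauge already restricted by $z=0$, and that the shifts do not reintroduce $x$-dependence into $\phi$---after which everything reduces to the routine substitution just described.
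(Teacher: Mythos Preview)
Your proposal is correct and follows essentially the same route as the paper: start from the $[--]$ background~(\ref{metryka_dege_x_degn_mm}), invoke Lemma~\ref{Lemat_kongruencja_druga_twisting}, use $\theta_2=0$ together with $\phi=\phi(q,p,y)$ to force $H_{ww}=0$ and $\phi\sim (y-H_1)^{1/2}$, then gauge to $H=0$, $\phi=y^{1/2}$ and substitute. The paper differs only cosmetically in that it first gauges $H$ to zero via~(\ref{transformacja_na_H}) and afterwards normalizes $\widetilde{\phi}(q,p)\to 1$ via~(\ref{transformacja_Q}), whereas you solve the ODE for $\phi$ first and then gauge; the substance is identical.
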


\begin{proof}
Consider the metric (\ref{metryka_dege_x_degne_mm_pp}). Condition $\theta_{2}=0$ yields $\phi-2\phi_{y} (y-H_{w}) = 0$. Because $\phi=\phi(q,p,y)$, $\phi_{y} \ne 0$ and $H=H(q,p,w)$ one finds that $H_{ww}=0 \ \Longrightarrow \ H = H_{1}(q,p)  w + H_{2}(q,p)$. Fast analysis of the transformation formula (\ref{transformacja_na_H}) proves that both $H_{1}$ and $H_{2}$ can be gauged away. With $H=0$ one gets $\phi = \widetilde{\phi} y^{\frac{1}{2}}$ where $\widetilde{\phi}$ is an arbitrary nonzero function of $(q,p)$. From the transformation formula (\ref{transformacja_Q}) it follows that $\widetilde{\phi}$ can be brought to 1 without any loss of generality. Thus, $\phi = y^{\frac{1}{2}}$ and the metric (\ref{metryka_dege_x_degne_mm_pp}) reduces to (\ref{metryka_dege_x_degne_mm_mp}).
\end{proof}

\subsubsection{Spaces of the types $\{ [\textrm{deg}]^{e} \otimes [\textrm{deg}]^{ne},[++,--]  \}$ and $\{ [\textrm{deg}]^{e} \otimes [\textrm{deg}]^{ne},[++,+-]  \}$}

In this Section spaces of the types $\{ [\textrm{deg}]^{e} \otimes [\textrm{deg}]^{n},[++]  \}$ are equipped with $\mathcal{C}^{e}_{n^{\dot{A}}}$ such that $\mathcal{I}(\mathcal{C}^{e}_{m^{A}}, \mathcal{C}^{e}_{n^{\dot{A}}}) = \mathcal{I}^{+-}$ or $\mathcal{I}(\mathcal{C}^{e}_{m^{A}}, \mathcal{C}^{e}_{n^{\dot{A}}}) = \mathcal{I}^{--}$. For such spaces Theorem \ref{theorem_dege_x_degn_pp} and Lemma \ref{Lemat_kongruencja_druga_nontwisting} hold true simultaneously. Hence, the metric (\ref{metryka_dege_x_degn_pp}) is a background.
\begin{Twierdzenie} 
\label{theorem_dege_x_degne_pp_pm}
Let $(\mathcal{M}, ds^{2})$ be a complex (neutral) space of the type $\{ [\textrm{deg}]^{e} \otimes [\textrm{deg}]^{ne}, [++,+-] \}$. Then there exists a local coordinate system $(q,p,x,z)$ such that the metric takes the form
\begin{eqnarray}
\label{metryka_dege_x_degne_pp_pm}
\frac{1}{2} ds^{2} &=& F^{-2} (x-\Sigma_{z})^{-2} \Big\{ -(dp+z  dq)dx - (x-\Sigma_{z}) dq dz + \mathcal{A} \, (dp+z dq)^{2} 
\\ \nonumber
&&  \ \ \ \ \ \ \ \ \ \ \ \ \ \ \ \ \ \ \  - \Big( 2T (x-\Sigma_{z})^{2} +\frac{2}{F} (x-\Sigma_{z}) ( \Omega F_{z} -z F_{p} + F_{q})
\\ \nonumber
&&  \ \ \ \ \ \ \ \ \ \ \ \ \ \ \ \ \ \ \ \ \ \ -\Omega \Sigma_{zz} + z \Sigma_{zp} - \Sigma_{zq} \Big) (dp+zdq) dq+ (x-\Sigma_{z}) \Omega \, dq^{2} \Big\}
\end{eqnarray}
where $\mathcal{A} = \mathcal{A} (q,p,x)$, $F=F(q,p,z)$, $T=T(q,p,z)$, $\Omega=\Omega(q,p,z)$ and $\Sigma=\Sigma(q,p,z)$ are arbitrary holomorphic (real smooth) functions such that $F_{z} \ne 0$ or $\Sigma_{zz} \ne 0$.
\end{Twierdzenie}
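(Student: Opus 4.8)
The plan is to use the metric (\ref{metryka_dege_x_degn_pp}) of Theorem \ref{theorem_dege_x_degn_pp} as the background, since it already encodes the $\{[\textrm{deg}]^{e}\otimes[\textrm{deg}]^{n},[++]\}$ structure: Lemma \ref{Lemat_kongruencja_pierwsza_twisting} is in force, the coordinate $y$ has been traded for $z$ through (\ref{rozwiazanie_na_y}), and $\phi=F(x-\Sigma_{z})$ by (\ref{rozwiazanie_na_fi_pierwsze}). On top of this I would impose a second ASD congruence $\mathcal{C}^{e}_{n^{\dot{A}}}$ whose intersection $\mathcal{I}(\mathcal{C}^{e}_{m^{A}},\mathcal{C}^{e}_{n^{\dot{A}}})$ is of type $[+-]$, that is $\varrho_{2}=0$ together with $\theta_{2}\neq0$. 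Since $\varrho_{2}=0$, Lemma \ref{Lemat_kongruencja_druga_nontwisting} applies verbatim and supplies $w=0$, $\mathcal{A}_{y}=0$, $\theta_{2}\sim\phi_{y}$ and $N_{1}\sim\phi_{y}$. The entire argument then consists of rereading the two scalar statements $\mathcal{A}_{y}=0$ and $\phi_{y}\neq0$ inside the twist-adapted frame $(q,p,x,z)$ in which the background actually lives.

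First I would convert $\mathcal{A}_{y}=0$. The conclusions of Lemma \ref{Lemat_kongruencja_druga_nontwisting} are written in the original $(q,p,x,y)$ coordinates, whereas $\mathcal{A}$ in (\ref{metryka_dege_x_degn_pp}) is already the function $\mathcal{A}(q,p,x,y(q,p,x,z))$. Using $z_{y}=(\Sigma_{z}-x)^{-1}$ from Lemma \ref{Lemat_kongruencja_pierwsza_twisting}, the chain rule yields $\mathcal{A}_{y}=\mathcal{A}_{z}\,z_{y}=\mathcal{A}_{z}/(\Sigma_{z}-x)$; since the twisting hypothesis guarantees $\Sigma_{z}-x\neq0$, the vanishing $\mathcal{A}_{y}=0$ is equivalent to $\mathcal{A}_{z}=0$. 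Hence $\mathcal{A}=\mathcal{A}(q,p,x)$, which is precisely the single reduction in functional dependence distinguishing (\ref{metryka_dege_x_degne_pp_pm}) from the background; the explicit metric is then inherited unchanged from (\ref{metryka_dege_x_degn_pp}).

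Next I would read off the inequality from the expansion condition $\theta_{2}\neq0$. With $\phi=F(x-\Sigma_{z})$ and the same identity $z_{y}=(\Sigma_{z}-x)^{-1}$, a one-line computation gives $\phi_{y}=\big(F_{z}(x-\Sigma_{z})-F\Sigma_{zz}\big)z_{y}=-F_{z}+F\Sigma_{zz}/(x-\Sigma_{z})$. Regarded as a function of $x$ at fixed $(q,p,z)$, this vanishes identically exactly when its constant part $-F_{z}$ and its simple-pole part $F\Sigma_{zz}$ both vanish; as $F\neq0$ this means $F_{z}=0$ and $\Sigma_{zz}=0$. Therefore $\theta_{2}\not\equiv0$---equivalently $N_{1}\not\equiv0$, so that $\mathcal{C}^{e}_{n^{\dot{A}}}$ is genuinely expanding---holds if and only if $F_{z}\neq0$ or $\Sigma_{zz}\neq0$, which is the asserted constraint. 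No condition on $N_{2}$ enters, because $\phi_{y}\neq0$ already forces $N_{1}\neq0$ and hence settles the expanding/nonexpanding alternative.

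The only genuinely delicate point is this bookkeeping of the coordinate change: every derivative borrowed from Lemma \ref{Lemat_kongruencja_druga_nontwisting} is phrased in the original frame and must be pushed through (\ref{rozwiazanie_na_y}) before being read in the $(q,p,x,z)$ frame of the background. I do not expect any hidden integrability obstruction, since Lemma \ref{Lemat_kongruencja_druga_nontwisting} has already solved the ASD null-string equations; once the single identity $z_{y}=(\Sigma_{z}-x)^{-1}$ is applied consistently, both the reduction $\mathcal{A}=\mathcal{A}(q,p,x)$ and the nondegeneracy condition fall out at once.
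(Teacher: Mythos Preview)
Your proof is correct and follows essentially the same approach as the paper: take the metric (\ref{metryka_dege_x_degn_pp}) as background, apply Lemma \ref{Lemat_kongruencja_druga_nontwisting}, and convert the conditions $\mathcal{A}_{y}=0$ and $\phi_{y}\neq0$ into the $(q,p,x,z)$ frame via the chain rule with $z_{y}=(\Sigma_{z}-x)^{-1}$ to obtain $\mathcal{A}_{z}=0$ and the disjunction $F_{z}\neq0$ or $\Sigma_{zz}\neq0$. Your treatment is slightly more expansive in justifying the ``constant part vs.\ pole part'' decomposition of $\phi_{y}$, but the logic and the conclusion coincide with the paper's argument.
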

\begin{proof}
Consider the metric (\ref{metryka_dege_x_degn_pp}) with constraint Eqs. (\ref{uproszczone_formuly_dla_w_rowne_0}). Formulas (\ref{uproszczone_formuly_dla_w_rowne_0}) need to be transformed from the coordinate system $(q,p,x,y)$ to the coordinate system $(q,p,x,z)$. Hence, $0=\mathcal{A}_{y} = \mathcal{A}_{z} z_{y} \ \Longrightarrow \ \mathcal{A}_{z}=0$. Moreover, $\theta_{2} \ne 0$ implies $0 \ne \phi_{y} = [F_{z} (x-\Sigma_{z}) - F \Sigma_{zz}] z_{y}$. Thus, $F_{z} \ne 0$ or $\Sigma_{zz} \ne 0$. Also, $\theta_{2} \ne 0$ automatically implies that $N_{1} \ne 0$ so $\mathcal{C}_{n^{\dot{A}}}$ is necessarily expanding.
\end{proof}
\begin{Twierdzenie} 
\label{theorem_dege_x_degne_pp_mm}
Let $(\mathcal{M}, ds^{2})$ be a complex (neutral) space of the type $\{ [\textrm{deg}]^{e} \otimes [\textrm{deg}]^{ne}, [++,--] \}$. Then there exists a local coordinate system $(q,p,x,z)$ such that the metric takes the form
\begin{eqnarray}
\label{metryka_dege_x_degne_pp_mm}
\frac{1}{2} ds^{2} &=&  x^{-2} \Big\{ -(dp+z  dq)dx - x \, dq dz + \mathcal{A} \, (dp+z dq)^{2} 
\\ \nonumber
&&  \ \ \ \ \ \   -  2T x^{2}  (dp+zdq) dq  + x \Omega \, dq^{2} \Big\}
\end{eqnarray}
where $\mathcal{A} = \mathcal{A} (q,p,x)$, $T=T(q,p,z)$ and $\Omega=\Omega(q,p,z)$  are arbitrary holomorphic (real smooth) functions such that $T_{z} \ne 0$.
\end{Twierdzenie}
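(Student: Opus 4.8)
The plan is to start from the metric (\ref{metryka_dege_x_degn_pp}), which already encodes the type $\{[\textrm{deg}]^{e}\otimes[\textrm{deg}]^{n},[++]\}$ in the coordinates $(q,p,x,z)$ with $\phi=F(q,p,z)(x-\Sigma_{z})$ by (\ref{rozwiazanie_na_fi_pierwsze}), and to impose on it the second intersection $\mathcal{I}(\mathcal{C}^{e}_{m^{A}},\mathcal{C}^{e}_{n^{\dot{A}}})=\mathcal{I}^{--}$. Since $\varrho_{2}=0$ here, Lemma \ref{Lemat_kongruencja_druga_nontwisting} applies and yields $w=0$ together with the constraints (\ref{uproszczone_formuly_dla_w_rowne_0}). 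First I would transport those constraints from $(q,p,x,y)$ to $(q,p,x,z)$ exactly as in the proof of Theorem \ref{theorem_dege_x_degne_pp_pm}: $\mathcal{A}_{y}=0$ becomes $\mathcal{A}_{z}z_{y}=0$, hence $\mathcal{A}=\mathcal{A}(q,p,x)$, while $\theta_{2}\sim\phi_{y}=0$ becomes $[F_{z}(x-\Sigma_{z})-F\Sigma_{zz}]\,z_{y}=0$.

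The decisive algebraic step is to exploit $\phi_{y}=0$. As $x$ and $z$ are independent and $z_{y}\ne0$, the bracket $F_{z}(x-\Sigma_{z})-F\Sigma_{zz}$ must vanish for all $x$; splitting off the part linear in $x$ forces $F_{z}=0$, and then $F\ne0$ gives $\Sigma_{zz}=0$. Thus $F=F(q,p)$ and $\Sigma_{z}=\Sigma_{z}(q,p)$ is $z$-independent. Next I would spend the residual gauge freedom, which by the Remark after Lemma \ref{Lemat_kongruencja_druga_nontwisting} is restricted to $q'_{p}=0$: under this restriction (\ref{transformacja_na_F}) reduces to $F'=\Delta\lambda^{1/2}F/q'_{q}$, so a suitable $\lambda(q,p)$ sets $F=1$, and the coefficient of $z$ in (\ref{transformacja_na_Sigma_ogolna}) can be annihilated by $\sigma^{\dot{1}}=-\lambda^{-1}\Sigma_{z}/p'_{p}$, setting $\Sigma_{z}=0$. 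With $F=1$ and $\Sigma_{z}=0$ one has $\phi=x$; every auxiliary $F$- and $\Sigma$-derivative in (\ref{metryka_dege_x_degn_pp}) then drops, the coefficient of $(dp+z\,dq)dq$ collapses to $-2Tx^{2}$, the conformal factor becomes $x^{-2}$, and the metric reduces to (\ref{metryka_dege_x_degne_pp_mm}).

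What remains, and what I expect to be the real content, is the inequality $T_{z}\ne0$. Because $\theta_{2}=0$ forces $N_{1}\sim\phi_{y}=0$, the congruence $\mathcal{C}^{e}_{n^{\dot{A}}}$ is genuinely expanding only if $N_{2}\ne0$. Reading $N_{2}$ from (\ref{uproszczone_formuly_dla_w_rowne_0}) at $\phi=x$ gives $N_{2}\sim-x\,\mathcal{Q}_{y}+\mathcal{A}$, so I must evaluate $\mathcal{Q}_{y}$ in the twisting coordinates. From (\ref{solution_pomoc_naOiA}) with the present simplifications $\mathcal{Q}+z\mathcal{A}=Tx^{2}+\tfrac{1}{2}\Sigma_{p}$, whence $\partial_{z}\mathcal{Q}=T_{z}x^{2}-\mathcal{A}$; combined with $z_{y}=-1/x$ (the $\Sigma_{z}=0$ value of $z_{y}=1/(\Sigma_{z}-x)$ from Lemma \ref{Lemat_kongruencja_pierwsza_twisting}) this gives $\mathcal{Q}_{y}=-(T_{z}x^{2}-\mathcal{A})/x$ and therefore $N_{2}\sim T_{z}x^{2}$. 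Hence expansion of $\mathcal{C}^{e}_{n^{\dot{A}}}$ is equivalent to $T_{z}\ne0$. The hard part is purely the bookkeeping of the change $y=-xz+\Sigma$: one must carry the chain rule through carefully enough that the cancellation $-x\mathcal{Q}_{y}+\mathcal{A}=T_{z}x^{2}$ emerges cleanly, and check that the restricted ($q'_{p}=0$) gauge really normalizes both $F$ and $\Sigma_{z}$ without reintroducing any $z$-dependence into the remaining data $\mathcal{A}$, $T$, $\Omega$.
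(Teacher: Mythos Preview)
Your argument is correct and follows essentially the same route as the paper's own proof: start from the $\{[\textrm{deg}]^{e}\otimes[\textrm{deg}]^{n},[++]\}$ metric, impose $\varrho_{2}=0$ via Lemma~\ref{Lemat_kongruencja_druga_nontwisting}, read off $F_{z}=0$ and $\Sigma_{zz}=0$ from $\phi_{y}=0$, normalize $F$ and $\Sigma$ by the residual ($q'_{p}=0$) gauge, and finally identify $N_{2}\sim T_{z}x^{2}$. The only cosmetic difference is that the paper gauges $\Sigma$ to zero outright (using both $\sigma^{\dot A}$) rather than merely setting $\Sigma_{z}=0$; since the metric (\ref{metryka_dege_x_degn_pp}) depends on $\Sigma$ only through $\Sigma_{z}$ and its derivatives, your residual $\tfrac12\Sigma_{p}$ in $\mathcal{Q}+z\mathcal{A}$ is $z$- and $y$-independent and drops out of both the metric and the $N_{2}$ computation, so nothing is lost.
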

\begin{proof}
Consider the metric (\ref{metryka_dege_x_degne_pp_pm}) such that $\theta_{2}=0$. Thus, $F_{z}=0$ and $\Sigma_{zz}=0$. From the transformation formulas (\ref{transformacja_na_F}) and (\ref{transformacja_na_Sigma_ogolna}) it follows that $F$ can be brought to $1$ and $\Sigma$ can be gauged away without any loss of generality. Hence, one arrives at (\ref{metryka_dege_x_degne_pp_mm}).

From (\ref{solution_pomoc_naOiA}) one finds $\mathcal{Q} + z \mathcal{A} = T x^{2}$. Also, $y=-xz$ (compare (\ref{rozwiazanie_na_y}) with $\Sigma=0$). Because $\theta_{2} =0$ then $N_{1}=0$ (compare (\ref{uproszczone_formuly_dla_w_rowne_0})). Thus, $N_{2}$ must be nonzero otherwise $\mathcal{C}_{n^{\dot{A}}}$ is nonexpanding. From (\ref{uproszczone_formuly_dla_w_rowne_0}) we have
\begin{equation}
N_{2} \sim \mathcal{A} - x \mathcal{Q}_{y} = \mathcal{A} - x \frac{\partial}{\partial y} (Tx^{2} + \frac{y}{x} \mathcal{A}) = x^{2} T_{z}
\end{equation}
Hence, $T_{z} \ne 0$.
\end{proof}

\subsubsection{Spaces of the types $\{ [\textrm{deg}]^{e} \otimes [\textrm{deg}]^{ne},[++,++]  \}$}

For the most complicated types $\{ [\textrm{deg}]^{e} \otimes [\textrm{deg}]^{ne},[++,++]  \}$ twists of $\mathcal{I}(\mathcal{C}^{e}_{m^{A}}, \mathcal{C}^{e}_{n^{\dot{A}}})$ and $\mathcal{I}(\mathcal{C}^{e}_{m^{A}}, \mathcal{C}^{e}_{m^{\dot{A}}})$ are nonzero, $\varrho_{1} \ne 0$, $\varrho_{2} \ne 0$. Thus, both Lemmas \ref{Lemat_kongruencja_pierwsza_twisting} and \ref{Lemat_kongruencja_druga_twisting} hold true. In particular, Eqs. (\ref{rozwiazanie_na_y}) and (\ref{rozwiazanie_na_x}) hold true simultaneously. Hence
\begin{equation}
\label{jawne_wzory_na_xiy}
x= \frac{H-w \Sigma}{1-zw}, \ y= \frac{\Sigma -z H} {1-zw}, \ H=H(q,p,w), \ \Sigma=\Sigma (q,p,z)
\end{equation}
Additional constraints are Eqs. (\ref{rozwiazanie_rownania_pierwszej_struny_2}), (\ref{rozwiazanie_drugiego_drug_rownanianstrun}), $M_{A}=0$ (compare (\ref{pomocnicze_2})-(\ref{pomocnicze_3})), $N_{A} \ne 0$ (compare (\ref{wiezy_drugiej_kongruencji})-(\ref{wiezy_drugiej_kongruencji_3})), $\theta_{1} \ne 0$ and $\theta_{2} \ne 0$ (compare (\ref{pomocnicze_1}) and (\ref{wiezy_drugiej_kongruencji_1})).

\begin{Twierdzenie} 
\label{theorem_dege_x_degne_pp_pp_and_pp_mp}
Let $(\mathcal{M}, ds^{2})$ be a complex (neutral) space of the type $\{ [\textrm{deg}]^{e} \otimes [\textrm{deg}]^{ne}, [++,++] \}$. Then there exists a local coordinate system $(q,p,w,z)$ such that the metric takes the form
\begin{eqnarray}
\label{metryka_dege_x_degne_pp_pp_and_pp_mp}
\frac{1}{2} ds^{2} &=&  \phi^{-2} \Big\{ dqdy-dpdx -2 \mathfrak{a} (dp+zdq) dq + \mathfrak{b} \, dq^{2}
\\ \nonumber
&&  \ \ \ \ \ \   + \frac{\mathfrak{c} - w^{2} \mathfrak{b} - 2w \mathfrak{a} (1-wz)}{(1-wz)^{2}} (dp+zdq)^{2} \Big\}
\end{eqnarray}
where $x$, $y$, $\phi$, $\mathfrak{a}$, $\mathfrak{b}$ and $\mathfrak{c}$ stand for abbreviations
\begin{eqnarray}
\nonumber
&&  x:= \frac{H-w \Sigma}{1-zw}, \ y:= \frac{\Sigma -z H} {1-zw} \ , \phi := F ( x - \Sigma_{z} ) 
\\ \nonumber
&& \mathfrak{a}  := T (x-\Sigma_{z})^{2} + \frac{1}{F} (x-\Sigma_{z}) (\Omega F_{z} - z F_{p} + F_{q}) + \frac{1}{2} (\Sigma_{p} - \Omega \Sigma_{zz} + z \Sigma_{zp} - \Sigma_{zq})
\\ \nonumber
&& \mathfrak{b} := \Omega (x-\Sigma_{z}) + z \Sigma_{p} - \Sigma_{q}
\\ \nonumber
&& \mathfrak{c} := R (y-H_{w}) + H_{p} - w H_{q}
\end{eqnarray}
and $\Omega=\Omega(q,p,z)$, $\Sigma=\Sigma (q,p,z)$, $R=R(q,p,w)$, $T=T(q,p,z)$, $H=H(q,p,w)$ and $F=F(q,p,z)$ are  holomorphic (real smooth) functions such that
\begin{subequations}
\begin{eqnarray}
\label{typ_prawie_ogolny_wwar_1}
&   \dfrac{\phi}{H_{w}-y} + \dfrac{2 \phi_{z} (1-wz)}{\Sigma_{z}-x} \ne 0  &
\\ \nonumber
&\textrm{ and }& 
\\ 
\label{typ_prawie_ogolny_wwar_2}
& \Bigg[ \dfrac{\phi}{H_{w}-y} + \dfrac{\phi_{z} (1-wz)}{\Sigma_{z}-x} \ne 0 \ \ \ \ \textrm{ or } &
\\ 
\label{typ_prawie_ogolny_wwar_3}
 &- \dfrac{\phi H_{q}}{H_{w}-y} - w \phi_{q} + \phi_{p} - \phi \dfrac{1-wz}{\Sigma_{z}-x} \dfrac{\partial}{\partial z} \left( \mathfrak{a} + \dfrac{w \mathfrak{b}- z \mathfrak{c}}{1-wz} \right) - \phi_{w} R&
\\ \nonumber
& - \dfrac{\phi}{H_{w}-y} \left( \mathfrak{a} + \dfrac{w \mathfrak{b}- z \mathfrak{c}}{1-wz} \right) + \dfrac{\phi_{z}}{\Sigma_{z}-x} \left( w \Omega (x-\Sigma_{z}) + (1-wz) (\mathfrak{a}-\Sigma_{p}) \right) \ne 0 \Bigg] &
\\ \nonumber
\end{eqnarray} 
\end{subequations}
\end{Twierdzenie}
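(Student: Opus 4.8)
The plan is to impose both twisting conditions at once and then reduce the determination of the metric functions to linear algebra. Since the type is $[++,++]$ we have $\varrho_1 \ne 0$ and $\varrho_2 \ne 0$, so Lemmas \ref{Lemat_kongruencja_pierwsza_twisting} and \ref{Lemat_kongruencja_druga_twisting} hold simultaneously. First I would record what each lemma provides: the first gives $y = -xz + \Sigma$ (Eq. \ref{rozwiazanie_na_y}) together with the constraint \ref{rozwiazanie_rownania_pierwszej_struny_2}, and the second gives $x = -wy + H$ (Eq. \ref{rozwiazanie_na_x}) together with \ref{rozwiazanie_drugiego_drug_rownanianstrun}. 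Reading \ref{rozwiazanie_na_y} and \ref{rozwiazanie_na_x} as a linear system for $(x,y)$ with determinant $1-zw \ne 0$ (the two ASD congruences being distinct) produces the closed formulas \ref{jawne_wzory_na_xiy}, which is exactly what motivates adopting $(q,p,w,z)$ as the working coordinates.

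Next I would bring in the fact that the first ASD congruence is nonexpanding, $M_A=0$. By \ref{pomocnicze_2}, $M_1=0$ fixes $\phi = F(q,p,z)(x-\Sigma_z)$ as in \ref{rozwiazanie_na_fi_pierwsze}, while $M_2=0$ yields \ref{solution_pomoc_naOiA}, i.e. $\mathcal{Q} + z\mathcal{A} = \mathfrak{a}$. Collecting this with \ref{rozwiazanie_rownania_pierwszej_struny_2} read as $\mathcal{B} + 2z\mathcal{Q} + z^2\mathcal{A} = \mathfrak{b}$ and \ref{rozwiazanie_drugiego_drug_rownanianstrun} read as $\mathcal{A} + 2w\mathcal{Q} + w^2\mathcal{B} = \mathfrak{c}$ gives three linear equations in $\mathcal{A},\mathcal{Q},\mathcal{B}$. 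Eliminating $\mathcal{Q} = \mathfrak{a} - z\mathcal{A}$ and $\mathcal{B} = \mathfrak{b} - 2z\mathfrak{a} + z^2\mathcal{A}$ from the first two and substituting into the third leaves $(1-wz)^2\mathcal{A} = \mathfrak{c} - w^2\mathfrak{b} - 2w\mathfrak{a}(1-wz)$, which is precisely the coefficient of $(dp+zdq)^2$ in \ref{metryka_dege_x_degne_pp_pp_and_pp_mp}. I note in passing that with $\phi = F(x-\Sigma_z)$ the expansion \ref{pomocnicze_1} reduces to $\theta_1 \sim -\phi \ne 0$, so $\theta_1 \ne 0$ is automatic, consistent with the first slot being $[++]$.

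I would then assemble the metric by substituting into \ref{metric_weak_HH_expanding_space_abrevaitions}. With the solved $\mathcal{Q},\mathcal{B}$ the quadratic block completes the square,
\[
\mathcal{A}\,dp^2 - 2\mathcal{Q}\,dqdp + \mathcal{B}\,dq^2 = \mathcal{A}\,(dp+zdq)^2 - 2\mathfrak{a}\,(dp+zdq)\,dq + \mathfrak{b}\,dq^2,
\]
whereas $dqdy - dpdx$ is retained as shorthand, with $x,y$ now the functions \ref{jawne_wzory_na_xiy} of $(q,p,w,z)$ and $dx,dy$ their differentials; inserting $\phi = F(x-\Sigma_z)$ then reproduces \ref{metryka_dege_x_degne_pp_pp_and_pp_mp} verbatim.

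Finally I would translate the genericity conditions supplied by Lemma \ref{Lemat_kongruencja_druga_twisting} in the $(q,p,w,y)$ chart. Differentiating \ref{jawne_wzory_na_xiy} gives $y_z = (\Sigma_z - x)/(1-wz)$, hence $\partial_y = \frac{1-wz}{\Sigma_z-x}\,\partial_z$ and $\phi_y = \phi_z(1-wz)/(\Sigma_z-x)$; substituting these into $\theta_2 \sim \phi - 2\phi_y(y-H_w)$ from \ref{wiezy_drugiej_kongruencji_1} and $N_1 \sim \phi - \phi_y(y-H_w)$ from \ref{wiezy_drugiej_kongruencji} and dividing by the nonvanishing factor $H_w-y$ yields \ref{typ_prawie_ogolny_wwar_1} and \ref{typ_prawie_ogolny_wwar_2}, the disjunction $N_1 \ne 0$ or $N_2 \ne 0$ being the requirement $N_A \ne 0$ that the second congruence expands. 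The hard part will be the remaining condition \ref{typ_prawie_ogolny_wwar_3}, coming from $N_2 \ne 0$ via \ref{wiezy_drugiej_kongruencji_3}: one must rewrite every partial in \ref{wiezy_drugiej_kongruencji_3} from the $(q,p,w,y)$ chart into the $(q,p,w,z)$ chart, and because holding $z$ fixed is not the same as holding $y$ fixed, each of $\phi_q,\phi_p,\phi_w$ acquires a correction proportional to $\phi_z$ with coefficients built from $z_q,z_p,z_w$. Carefully collecting these corrections together with the term $(\mathcal{Q}+w\mathcal{B})\phi_y$, and using the identity $\mathcal{Q}+w\mathcal{B} = \mathfrak{a} + (w\mathfrak{b}-z\mathfrak{c})/(1-wz)$ that follows from the solved values, is what condenses everything into the single $\phi_z$-term $\frac{\phi_z}{\Sigma_z-x}(w\Omega(x-\Sigma_z)+(1-wz)(\mathfrak{a}-\Sigma_p))$ of \ref{typ_prawie_ogolny_wwar_3}; by comparison the linear algebra for $\mathcal{A},\mathcal{Q},\mathcal{B}$ and the formulas for $x,y$ are routine.
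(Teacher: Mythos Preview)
Your proposal is correct and follows essentially the same approach as the paper's own proof: solve $M_1=0$ and $M_2=0$ for $\phi$ and $\mathcal{Q}+z\mathcal{A}$, combine these with the two null-string constraints (\ref{rozwiazanie_rownania_pierwszej_struny_2}) and (\ref{rozwiazanie_drugiego_drug_rownanianstrun}) as a linear system for $\mathcal{A},\mathcal{Q},\mathcal{B}$, substitute into (\ref{metric_weak_HH_expanding_space_abrevaitions}) via (\ref{jawne_wzory_na_xiy}), and then transform (\ref{wiezy_drugiej_kongruencji_1})--(\ref{wiezy_drugiej_kongruencji_3}) into the $(q,p,w,z)$ chart to obtain (\ref{typ_prawie_ogolny_wwar_1})--(\ref{typ_prawie_ogolny_wwar_3}). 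Your write-up is in fact more explicit than the paper's about the linear algebra, the completing-the-square identity for the quadratic block, and the chain-rule bookkeeping behind (\ref{typ_prawie_ogolny_wwar_3}), but the strategy is identical.
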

\begin{proof}
Equation $M_{1}=0$ implies $\phi = F(q,p,z) (x-\Sigma_{z})$, $F \ne 0$ and equation $M_{2} =0$ implies (\ref{solution_pomoc_naOiA}). From Eqs. (\ref{rozwiazanie_rownania_pierwszej_struny_2}), (\ref{rozwiazanie_drugiego_drug_rownanianstrun}) and (\ref{solution_pomoc_naOiA}) one finds
\begin{eqnarray}
\label{rozwiazania_na_ABQ_degexdegne}
\mathcal{A} = \frac{\mathfrak{c} - 2w \mathfrak{a} (1 -zw) - w^{2} \mathfrak{b}}{(1-zw)^{2}}, \ \mathcal{Q} = \mathfrak{a}-z \mathcal{A}, \ \mathcal{B} = \mathfrak{b} - 2z \mathfrak{a} + z^{2} \mathcal{A}
\end{eqnarray}
Inserting $\phi$ and (\ref{rozwiazania_na_ABQ_degexdegne}) into (\ref{metric_weak_HH_expanding_space_abrevaitions}) with $x$ and $y$ replaced according to (\ref{jawne_wzory_na_xiy}) one arrives at (\ref{metryka_dege_x_degne_pp_pp_and_pp_mp}). 

The expansion $\theta_{1}$ is nonzero (it follows from Eq. $M_{1}=0$). Moreover, $\theta_{2} \ne 0$ and at least one of the $N_{1}$ or $N_{2}$ must be nonzero. Formulas (\ref{wiezy_drugiej_kongruencji_1})-(\ref{wiezy_drugiej_kongruencji_3}) transformed into coordinate system $(q,p,w,z)$ lead to the Eqs. (\ref{typ_prawie_ogolny_wwar_1})-(\ref{typ_prawie_ogolny_wwar_3}).
\end{proof}

\subsubsection{Spaces of the types $\{ [\textrm{deg}]^{e} \otimes [\textrm{deg}]^{ne},[++,-+]  \}$} 
\label{subsekcja_nieistniejacy_typ}

The last types which remained to be considered are types $\{ [\textrm{deg}]^{e} \otimes [\textrm{deg}]^{ne}, [++,-+] \}$. The metric of spaces of such types can be obtained from the metric (\ref{metryka_dege_x_degne_pp_pp_and_pp_mp}) by demanding that $\theta_{2} = 0$. It gives the condition 
\begin{equation}
\label{rownanie_na_theta2_zero}
\dfrac{\phi}{H_{w}-y} + \dfrac{2 \phi_{z} (1-wz)}{\Sigma_{z}-x} = 0
\end{equation}
where $\phi = F(q,p,z) (x-\Sigma_{z})$; $x$ and $y$ are given by (\ref{jawne_wzory_na_xiy}). Explicitly, (\ref{rownanie_na_theta2_zero}) yields
\begin{equation}
\label{rrrrrownanie_1}
F(x-\Sigma_{z})^{2} + 2 (1-wz) (y-H_{w}) \left( F_{z} (x-\Sigma_{z}) + F \left(\frac{w (x-\Sigma_{z})}{1-zw} - \Sigma_{zz} \right) \right) = 0
\end{equation}
Elimination of $x$ and $y$ in (\ref{rrrrrownanie_1}) via (\ref{jawne_wzory_na_xiy}) gives an equation for $F=F(q,p,z)$, $\Sigma=\Sigma(q,p,z)$ and $H=H(q,p,w)$. 

Before we deal with Eq. (\ref{rrrrrownanie_1}) we point out an additional relation, very helpful in further analysis. For the types $\{ [\textrm{deg}]^{e} \otimes [\textrm{deg}]^{ne}, [++,-+] \}$ conditions $M_{1}=0$ (\ref{pomocnicze_2}) and $\theta_{2}=0$ (\ref{wiezy_drugiej_kongruencji_1}) hold true simultaneously. A solution of Eq. $M_{1}=0$ reads $\phi = F(q,p,z) (x-\Sigma_{z})$ (in the coordinate system $(q,p,x,z)$) and a solution of Eq. $\theta_{2}=0$ yields $\phi = G (q,p,w) ( H_{w} - y )^{\frac{1}{2}}$, $G \ne 0$ (in the coordinate system $(q,p,w,y)$). Both these solutions give an alternative forms of $\phi$. Thus, we arrive at the consistency condition
\begin{equation}
\label{warunek_konsystencjji}
F (q,p,z) \left( \frac{H-w \Sigma}{1-zw} - \Sigma_{z} \right) = G (q,p,w) \left( H_{w} - \frac{\Sigma -z H} {1-zw} \right)^{\frac{1}{2}}
\end{equation}
Of course, (\ref{rrrrrownanie_1}) and (\ref{warunek_konsystencjji}) are equivalent but it is algebraically problematic to find $F$, $\Sigma$, $H$ and $G$ using only (\ref{rrrrrownanie_1}) or only (\ref{warunek_konsystencjji}). In what follows we cleverly juggle with both (\ref{rrrrrownanie_1}) and (\ref{warunek_konsystencjji}).

\begin{Lemat}
\label{Lemat_o_nieistnieniu_rozwiazania}
Equation (\ref{rownanie_na_theta2_zero}) holds true only if $\phi = 0$.
\end{Lemat}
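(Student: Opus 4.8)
The plan is to use that the hypothesis furnishes two independent closed-form expressions for the single conformal factor $\phi$ and to show their compatibility is rigid enough to force $\phi\equiv 0$. Integrating $M_1=0$ (\ref{pomocnicze_2}) gives $\phi=F(q,p,z)\,(x-\Sigma_z)$ with $F\neq 0$, while Eq. (\ref{rownanie_na_theta2_zero}) (which is $\theta_2=0$, cf. (\ref{wiezy_drugiej_kongruencji_1})) integrates to $\phi=G(q,p,w)\,(H_w-y)^{1/2}$ with $G\neq 0$; equating them is the consistency condition (\ref{warunek_konsystencjji}). First I would fix $q,p$, pass to the coordinates $(w,z)$ through (\ref{jawne_wzory_na_xiy}), and abbreviate $D:=1-zw$, $\alpha:=D\,(x-\Sigma_z)$, $\beta:=D\,(H_w-y)$, so that (\ref{warunek_konsystencjji}) becomes $F^2\alpha^2=G^2\beta D$. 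A direct computation from (\ref{jawne_wzory_na_xiy}) then yields the clean relations $\alpha_z=-\Sigma_{zz}D$, $\beta_w=H_{ww}D$, $\beta=z\alpha+D\alpha_w$ and $\alpha=w\beta+D\beta_z$, which are the only facts about the metric I shall use.

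Because $F=F(q,p,z)$ and $G=G(q,p,w)$, the two single-variable factors are killed by $\partial_w\partial_z\log(\cdot)$. Applying $\partial_w\partial_z\log$ to $F^2\alpha^2=G^2\beta D$ and using $\partial_w\partial_z\log D=-D^{-2}$ gives $2\,\partial_w\partial_z\log\alpha=\partial_w\partial_z\log\beta-D^{-2}$; the identities above collapse the two mixed logarithmic derivatives to $\partial_w\partial_z\log\alpha=\Sigma_{zz}\beta/\alpha^2$ and $\partial_w\partial_z\log\beta=-H_{ww}\alpha/\beta^2$. Hence the whole of (\ref{warunek_konsystencjji}) reduces to the single $F$- and $G$-free polynomial identity
\begin{equation}
\label{key_identity_plan}
\alpha^2\beta^2+D^2\big(2\,\Sigma_{zz}\,\beta^3+H_{ww}\,\alpha^3\big)=0 ,
\end{equation}
which must hold identically on an open set of $(w,z)$.

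The contradiction would then be extracted from (\ref{key_identity_plan}) in two steps. Restricting to $D=0$ (that is $w=1/z$) annihilates the second bracket, while $\beta=z\alpha+D\alpha_w$ gives $\beta=z\alpha$ there; thus $z^2\alpha^4=0$, so $\alpha|_{D=0}=0$, which is exactly the relation $H(w)=w\,\Sigma(1/w)$. Substituting this and writing $s:=1/w$, $P:=\Sigma$ converts $\alpha$ and $\beta$ into second-order Taylor remainders, $\alpha=\Delta(s,z)/s$ and $\beta=-\Delta(z,s)$ with $\Delta(u,v):=P(u)-P(v)-(u-v)P'(v)$, together with $D=(s-z)/s$ and $H_{ww}=s^3P''(s)$. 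Inserting these into (\ref{key_identity_plan}) and expanding as $s\to z$, where both $\Delta(s,z)$ and $\Delta(z,s)$ equal $\tfrac12(s-z)^2P''(z)+O((s-z)^3)$, the coefficient of $(s-z)^8$ is a nonzero multiple of $P''(z)^4$; its vanishing forces $\Sigma_{zz}=P''\equiv 0$. But then $\Delta\equiv 0$, hence $\alpha\equiv 0$ and $\phi=F\alpha/D\equiv 0$, proving the Lemma. The main obstacle is this final step: (\ref{key_identity_plan}) looks underdetermined as a single equation for $\Sigma$ and $H$, and the rigidity only becomes visible after the $D=0$ restriction pins $H$ to $\Sigma$ and the inversion $s=1/w$ exposes the Taylor-remainder structure that makes the leading-order expansion decisive.
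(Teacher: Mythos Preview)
Your argument is correct and genuinely different from the paper's. The paper proceeds by eliminating $(H_w-y)$ from (\ref{rrrrrownanie_1}) via (\ref{warunek_konsystencjji}) to obtain the mixed relation (\ref{rrrrrownanie_2}), whose left–hand side is quadratic in $w$; applying $\partial_w^3\partial_z^2$ then forces either $H_{ww}=0$ or $(F^2)_{zzz}=0$, and each branch is finished off by an explicit gauge choice ($H=0$, resp.\ $\Sigma=0$) followed by elementary polynomial bookkeeping. Your route instead kills the unknown profiles $F,G$ in one stroke by the $\partial_w\partial_z\log$ trick, obtaining the single $F,G$–free relation $\alpha^2\beta^2+D^2(2\Sigma_{zz}\beta^3+H_{ww}\alpha^3)=0$; the restriction to $D=0$ then pins $H$ to $\Sigma$ without any gauge fixing, and the Taylor expansion near the diagonal $s=z$ extracts $\Sigma_{zz}=0$ from the $(s-z)^8$ coefficient $-\tfrac{1}{16}P''(z)^4$. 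What you gain is a coordinate-free, case-free argument that never invokes the residual gauge (\ref{transformacja_na_H}), (\ref{transformacja_na_Sigma_ogolna}); what the paper's approach gains is that it stays entirely within polynomial manipulations and avoids the analytic-continuation step needed to push your identity from the locus $\{\alpha\beta D\neq 0\}$ down to $D=0$. One small point worth making explicit in your write-up: the passage from $F^2\alpha^2=G^2\beta D$ to the logarithmic identity tacitly assumes $\alpha\not\equiv 0$ and $\beta\not\equiv 0$, but if either vanishes identically then $\phi=F\alpha/D$ or $\phi=G\sqrt{\beta/D}$ is already zero, so this case is harmless.
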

\begin{proof}
We use condition (\ref{warunek_konsystencjji}) to eliminate factor $(H_{w}-y)$ in (\ref{rrrrrownanie_1}) and we put an explicit form of $x$ into (\ref{rrrrrownanie_1}). After some algebraic work we find that Eq. (\ref{rrrrrownanie_1}) reduces to the following condition
\begin{equation}
\label{rrrrrownanie_2}
\widetilde{a} w^{2} + \widetilde{b}w + \widetilde{c} = (1-zw) G^{2} - H \left( \widetilde{f}_{z} + w (2 \widetilde{f} - z \widetilde{f}_{z}) \right)
\end{equation}
where
\begin{eqnarray}
\label{definicje_a_b_c_f}
&& \widetilde{c}  := -2 \widetilde{f} \Sigma_{zz} - \widetilde{f}_{z} \Sigma_{z}, \ \widetilde{b} := -2z \widetilde{c} - 2\widetilde{f} \Sigma_{z} - \widetilde{f}_{z} \Sigma
\\ \nonumber
&& \widetilde{a} := -z^{2} \widetilde{c} - z \widetilde{b} - 2\widetilde{f} \Sigma, \ \widetilde{f} := F^{2}
\end{eqnarray}
Note, that $\widetilde{a}$, $\widetilde{b}$, $\widetilde{c}$ and $\widetilde{f}$ are functions of $(q,p,z)$ only. Thus, the left hand side of (\ref{rrrrrownanie_2}) is a second order polynomial in $w$. Acting on (\ref{rrrrrownanie_2}) by $\partial_{w}^{3} \partial_{z}^{2}$ one finds that $(i)$ $H_{ww}=0$ or $(ii)$  $\widetilde{f}_{zzz}=0$.

$(i)$, case $H_{ww}=0$. From (\ref{transformacja_na_H}) it follows that if $H_{ww}=0$ then $H$ can be gauged away without any loss of generality. With $H=0$, formula (\ref{rrrrrownanie_2}) implies that $\widetilde{a}$, $ \widetilde{b}$ and $\widetilde{c}$ are all linear in $z$. After simple calculations one arrives at the solutions $\widetilde{c} = \widetilde{c} (q,p)$, $\widetilde{a} = \widetilde{\alpha} z$ and $\widetilde{b} = -\widetilde{c} z - \widetilde{\alpha}$ where $\widetilde{\alpha} = \widetilde{\alpha} (q,p)$. The last step is to put these solutions into definition of $\widetilde{a}$. Finally, one finds that $\widetilde{f} \Sigma=0$. Because $H$ and $\Sigma$ cannot simultaneously vanish, $\widetilde{f}=0$. It yields $F=0$ and, eventually, $\phi=0$.

$(ii)$, case $\widetilde{f}_{zzz}=0$. If $\widetilde{f}_{zzz}=0$ then $\widetilde{f} = \widetilde{m}(q,p) z^{2} + \widetilde{n}(q,p) z + \widetilde{s}(q,p)$ and  $\widetilde{a}_{zz} = \widetilde{b}_{zz} =\widetilde{c}_{zz} =0$. Hence, $\widetilde{c} = -\widetilde{\alpha} z - \widetilde{\beta}$, $\widetilde{b} = \widetilde{\gamma} z + \widetilde{\delta}$, $\widetilde{a} = \widetilde{\mu} z + \widetilde{\nu}$ where $\widetilde{\alpha}$, $\widetilde{\beta}$, $\widetilde{\gamma}$, $\widetilde{\delta}$, $\widetilde{\mu}$ and $\widetilde{\nu}$ are functions of $(q,p)$ only. Inserting these formulas into (\ref{definicje_a_b_c_f}) one arrives at the conditions
\begin{subequations}
\begin{eqnarray}
\label{ROWnanie_pomocnicze_1}
2\widetilde{f} \Sigma_{zz} + \widetilde{f}_{z} \Sigma_{z} &=& \widetilde{\alpha} z + \widetilde{\beta}
\\
\label{ROWnanie_pomocnicze_2}
2 \widetilde{f} \Sigma_{z} + \widetilde{f}_{z} \Sigma &=& 2 \widetilde{\alpha} z^{2} + (2\widetilde{\beta} - \widetilde{\gamma})z - \widetilde{\delta}
\\ 
\label{ROWnanie_pomocnicze_3}
2\widetilde{f} \Sigma &=& \widetilde{\alpha} z^{3} + (\widetilde{\beta}-\widetilde{\gamma}) z^{2} - (\widetilde{\delta}+\widetilde{\mu}) z - \widetilde{\nu}
\end{eqnarray}
\end{subequations}
(\ref{ROWnanie_pomocnicze_2}) - $\partial_{z}$(\ref{ROWnanie_pomocnicze_3}) gives $\widetilde{f}_{z} \Sigma = \widetilde{\alpha} z^{2} - \widetilde{\gamma} z - \widetilde{\mu}$. Using this result to eliminate factor $\widetilde{f}_{z} \Sigma$ in (\ref{ROWnanie_pomocnicze_2}) one finds the condition $2\widetilde{f} \Sigma_{z} = \widetilde{\alpha} z^{2} + 2 \widetilde{\beta} z +\widetilde{\mu} - \widetilde{\delta}$. Inserting all these formulas into (\ref{ROWnanie_pomocnicze_1}) we get $\widetilde{f} \Sigma_{zz}=0$. Thus, $\widetilde{f}=0 \ \Longrightarrow \ \phi=0$, or $\Sigma_{zz}=0$. 

If $\Sigma_{zz}=0$ then $\Sigma$ can be gauged away without any loss of generality (compare transformation formula (\ref{transformacja_na_Sigma_ogolna})). $\Sigma=0$ implies $\widetilde{a}=\widetilde{b}=\widetilde{c}=0$ and (\ref{rrrrrownanie_2}) simplifies to the formula $H (\widetilde{m} + \widetilde{n} w + \widetilde{s} w^{2})$. Because $\Sigma$ and $H$ cannot be simultaneously equal $0$, then $\widetilde{m} =\widetilde{n} = \widetilde{s}=0$. Finally, $\widetilde{f}=0 \ \Longrightarrow \ \phi=0$. Hence, the proof is completed.
\end{proof}

\begin{Wniosek}
\label{wniosek_o_nieistnieniu}
Spaces of the types $\{ [\textrm{deg}]^{e} \otimes [\textrm{deg}]^{ne}, [++,-+] \}$ do not exist.
\end{Wniosek}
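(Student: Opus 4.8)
The plan is to derive the corollary as an immediate consequence of Lemma \ref{Lemat_o_nieistnieniu_rozwiazania}. First I would unpack the label $[++,-+]$: for such a space both intersections are twisting, $\varrho_{1}\ne 0$ and $\varrho_{2}\ne 0$, the first expansion is nonzero, $\theta_{1}\ne 0$, while the second expansion vanishes, $\theta_{2}=0$. The conditions $\varrho_{1}\ne0$, $\theta_{1}\ne0$ together with $\varrho_{2}\ne0$ place the space squarely in the setting of Theorem \ref{theorem_dege_x_degne_pp_pp_and_pp_mp}, so that its metric is (\ref{metryka_dege_x_degne_pp_pp_and_pp_mp}) with $\phi=F\,(x-\Sigma_{z})$ and with $x,y$ given by (\ref{jawne_wzory_na_xiy}). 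The only further demand distinguishing $[++,-+]$ from $[++,++]$ is $\theta_{2}=0$, which, upon negating the nonvanishing condition (\ref{typ_prawie_ogolny_wwar_1}) (the transform of (\ref{wiezy_drugiej_kongruencji_1}) to the coordinates $(q,p,w,z)$), is precisely equation (\ref{rownanie_na_theta2_zero}).

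The key step is then a single appeal to Lemma \ref{Lemat_o_nieistnieniu_rozwiazania}: equation (\ref{rownanie_na_theta2_zero}) can hold only if $\phi=0$. Since $\phi$ occurs in the metric (\ref{metric_weak_HH_expanding_space}) solely through the conformal factor $\phi^{-2}$, a nondegenerate holomorphic metric forces $\phi$ to be nowhere vanishing. Therefore the defining requirement $\theta_{2}=0$ of the type $[++,-+]$ is incompatible with the existence of a well-defined metric of the class $[\textrm{deg}]^{e}\otimes[\textrm{deg}]^{ne}$, and no space of this type can exist.

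At the level of the corollary there is no genuine obstacle; all the analytic weight rests on Lemma \ref{Lemat_o_nieistnieniu_rozwiazania}, and that is where I would expect the real difficulty to lie. The hard part there is that (\ref{rownanie_na_theta2_zero}) couples $F(q,p,z)$ and $\Sigma(q,p,z)$ to $H(q,p,w)$, i.e.\ functions depending on overlapping but distinct sets of variables, so it cannot be integrated directly. The mechanism is to use the consistency condition (\ref{warunek_konsystencjji}) to rewrite (\ref{rownanie_na_theta2_zero}) as the polynomial identity (\ref{rrrrrownanie_2}) in $w$, apply $\partial_{w}^{3}\partial_{z}^{2}$ to separate the two branches $H_{ww}=0$ and $\widetilde{f}_{zzz}=0$, and show that each branch collapses to $\widetilde{f}=F^{2}=0$. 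Granting that, the chain $\theta_{2}=0\Rightarrow(\ref{rownanie_na_theta2_zero})\Rightarrow\phi=0$ contradicts the appearance of $\phi^{-2}$ in (\ref{metric_weak_HH_expanding_space}), and the nonexistence follows.
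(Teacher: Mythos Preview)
Your proposal is correct and follows exactly the paper's approach: identify that $\theta_{2}=0$ forces Eq.~(\ref{rownanie_na_theta2_zero}), invoke Lemma~\ref{Lemat_o_nieistnieniu_rozwiazania} to obtain $\phi=0$, and observe this contradicts the presence of $\phi^{-2}$ in the metric. Your write-up is in fact more explicit than the paper's two-line proof, spelling out both the unpacking of the label $[++,-+]$ and the reason $\phi=0$ is inadmissible.
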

\begin{proof}
Eq. (\ref{rownanie_na_theta2_zero}) is a necessary condition for spaces of the types $\{ [\textrm{deg}]^{e} \otimes [\textrm{deg}]^{ne}, [++,-+] \}$ to exist. According to Lemma \ref{Lemat_o_nieistnieniu_rozwiazania}, Eq. (\ref{rownanie_na_theta2_zero}) implies $\phi=0$ what is a contradiction.
\end{proof}


\section{Spaces of the types $\{ [\textrm{deg}]^{e} \otimes [\textrm{D}]^{nn},[++,--] \}$} 
\label{subsekcja_dege_x_degnn}
\setcounter{equation}{0}

\subsection{Non-Einstein case} 
\label{subsekcja_general_result}

\subsubsection{General results}
\label{subsubsekcja_general_results}

In this Section we finally arrive at the first class of para-Kähler spaces. To be more precise we find a general metric of spaces of the types $\{ [\textrm{deg}]^{e} \otimes [\textrm{D}]^{nn},[++,--] \}$. Such a metric can be extracted from Theorem \ref{theorem_dege_x_degne_pp_mm} by demanding that the expansion of $\mathcal{C}_{n^{\dot{A}}}$ vanishes. It is equivalent to the fact that $T_{z}=0 \ \Longleftrightarrow \ T=T(q,p)$. Hence, the metric takes the form (\ref{metric_weak_HH_expanding_space_abrevaitions}) with
\begin{eqnarray}
\label{funkcje_metryczne_paraKahler_class_1}
&& \phi = x, \ \mathcal{A}=\mathcal{A}(q,p,x), \ \mathcal{Q} = \frac{y}{x} \, \mathcal{A} + x^{2} \, T, \ T=T(q,p)
\\ \nonumber
&& \mathcal{B} = \frac{y^{2}}{x^{2}} \, \mathcal{A} + 2y x \, T   + x \, \Omega , \ \Omega = \Omega (q,p,z), \ z:=-\frac{y}{x}
\end{eqnarray}
The metric (\ref{metric_weak_HH_expanding_space_abrevaitions}) with (\ref{funkcje_metryczne_paraKahler_class_1}) remains invariant under transformations (\ref{gauge}) such that 
\begin{eqnarray}
\label{gauge_paraKahler_class_1}
&& \sigma^{\dot{A}}=0, \ q'=q'(q) , \ p'=p'(q,p), \ \lambda^{-1} = {p'_{p}}^{2}
\\ \nonumber
&& x'=p'_{p} \, x, \ y' = \frac{p'_{p}}{q'_{q}} (p'_{q} \, x + p'_{p} \, y), \ z'=\frac{1}{q'_{q}} (p'_{p} \, z - p'_{q} )
\end{eqnarray}
Metrical functions $\mathcal{A}$, $T$ and $\Omega$ transform under (\ref{gauge_paraKahler_class_1}) as follows
\begin{eqnarray}
\label{gauge_paraKahler_class_1_funkcje}
&& \mathcal{A}'=\mathcal{A} + \frac{p'_{pp}}{p'_{p}} \, x, \ T' = \frac{1}{p'_{p}q'_{q}} \, T,
\\ \nonumber  
&& p'_{p} \Omega' = \frac{{p'_{p}}^{2}}{{q'_{q}}^{2}} \Omega - \frac{p'_{p}p'_{pp}}{{q'_{q}}^{2}} \, z^{2} + \frac{1}{q'_{q}} \frac{\partial}{\partial q} \left( \frac{{p'_{p}}^{2}}{q'_{q}} \right) z - \frac{p'_{p}}{q'_{q}} \frac{\partial}{\partial q} \left( \frac{p'_{q}}{q'_{q}} \right) 
\end{eqnarray}

For further purposes it is necessary to calculate conformal curvature coefficients, curvature scalar and traceless Ricci tensor. Curvature scalar reads
\begin{equation}
\label{deg_x_Dnn_mm_pp_R}
\mathcal{R} = -2x^{2} \left( \mathcal{A}_{xx} + \frac{1}{x} \Omega_{zz} - \frac{4 \mathcal{A}_{x}}{x} + \frac{6 \mathcal{A}}{x^{2}} \right)
\end{equation}
Nonzero ASD curvature coefficients are\footnote{Obviously, the condition $2 C_{\dot{1}\dot{1}\dot{1}\dot{2}}^{2} - 3 C_{\dot{1}\dot{1}\dot{1}\dot{1}} C_{\dot{1}\dot{1}\dot{2}\dot{2}} = 0$ for the ASD Weyl spinor to be of the type [D] is identically satisfied.}
\begin{equation}
\label{deg_x_Dnn_mm_pp_ASDWeyl}
C_{\dot{1}\dot{1}\dot{2}\dot{2}} = \frac{\mathcal{R}}{12}, \ C_{\dot{1}\dot{1}\dot{1}\dot{2}} = z \frac{\mathcal{R}}{4}, \ C_{\dot{1}\dot{1}\dot{1}\dot{1}} = z^{2} \frac{\mathcal{R}}{2},
\end{equation}
SD conformal curvature coefficients takes the form
\begin{eqnarray}
 C^{(3)} &=& \frac{\mathcal{R}}{6} +2\mathcal{A}-2x\mathcal{A}_{x} = -\frac{x}{3} ( \Omega_{zz} + x \mathcal{A}_{xx} + 2 \mathcal{A}_{x}  )
\\ \nonumber
 -x^{-4} C^{(2)} &=& 4T_{p} x-\Omega_{zp} -z \mathcal{A}_{xp} -\frac{z}{x} \mathcal{A}_{p} + \frac{1}{x} \mathcal{A}_{q} + \mathcal{A}_{xq} + T (x \Omega_{zz} - x^{2}\mathcal{A}_{xx} + 6\mathcal{A})
\\ \nonumber
\frac{1}{2} x^{-6} C^{(1)} &=& - x\Omega_{pp}+ 2 T x^{2} \Omega_{zp}+ T^{2} x^{2} (-x \Omega_{zz} - x^{2} \mathcal{A}_{xx} + 6x\mathcal{A}_{x} - 12 \mathcal{A})
\\ \nonumber
&& + \Omega_{z} (\mathcal{A}_{q} - z\mathcal{A}_{p} -2T x^{2} \mathcal{A}_{x} + 4Tx \mathcal{A} + 2x^{2} T{_p})+ T_{q} (2x^{2} \mathcal{A}_{x} - 4x \mathcal{A})
\\ \nonumber
&&+ T_{p} (-4Tx^{3} - 2zx^{2} \mathcal{A}_{x} + 4zx \mathcal{A} ) -2T_{pq} x^{2} + 2zx^{2} T_{pp} -2Tzx^{2} \mathcal{A}_{xp} 
\\ \nonumber
&& + \mathcal{A}_{p} (\Omega + 8T xz)+ x\mathcal{A}_{x}\Omega_{p}- \mathcal{A}_{qq}+ 2z \mathcal{A}_{pq}- 8T x \mathcal{A}_{q}+ 2T x^{2} \mathcal{A}_{xq} -z^{2}\mathcal{A}_{pp}
\end{eqnarray}

Nonzero coefficients of the traceless Ricci tensor are
\begin{equation}
\label{deg_x_Dnn_mm_pp_tracelessRicci}
-\frac{4}{x} C_{12\dot{1}\dot{2}} = - \frac{2}{zx} C_{12\dot{1}\dot{1}} = \alpha, \ 
\frac{1}{zx^{3}} C_{22\dot{1}\dot{1}} = \frac{2}{x^{3}} C_{22\dot{1}\dot{2}} = \beta
\end{equation}
where
\begin{eqnarray}
\alpha &:=& \Omega_{zz} -x \mathcal{A}_{xx} + 2\mathcal{A}_{x}
\\ \nonumber
\beta &:=& \partial_{q} (x\mathcal{A}_{x}-3\mathcal{A}) - z \partial_{p} (x\mathcal{A}_{x}-3\mathcal{A}) +x \Omega_{pz} + \frac{1}{2} T x \mathcal{R}
\end{eqnarray}
Equivalently
\begin{equation}
C_{AB \dot{C}\dot{D}} = ( x^{3} \beta \, \delta_{A}^{2} \delta_{B}^{2} -x \alpha \, \delta_{(A}^{1} \delta_{B)}^{2} ) \, m_{(\dot{C}} n_{\dot{D})}, \ m_{\dot{A}}=[z,1], n_{\dot{A}}=[1,0]
\end{equation}
If $\alpha \ne 0$ then the traceless Ricci tensor has four different eigenvectors and two double eigenvalues (complex type is $[2N_{1}-2N]_{2}$, see \cite{Przanowski_classification}). If $\alpha=0$ and $\beta \ne 0$ then the traceless Ricci tensor has two different eigenvectors and one quadruple eigenvalue (complex type $^{(2)}[4N]^{a}_{2}$). Algebraic types of the traceless Ricci tensor in neutral spaces are slightly more complicated and we do not discuss this issue here (for details see \cite{Chudecki_struny,Chudecki_Ricci}).

\begin{Twierdzenie} 
\label{theorem_dege_x_Dnn_pp_mm}
Let $(\mathcal{M}, ds^{2})$ be a complex (neutral) space of the type $\{ [\textrm{deg}]^{e} \otimes [\textrm{D}]^{nn}, [++,--] \}$ ($\{ [\textrm{deg}]^{e} \otimes [\textrm{D}_{r}]^{nn}, [++,--] \}$). Then there exist a local coordinate system $(q,p,x,z)$ such that the metric takes the form
\begin{eqnarray}
\label{metryka_dege_x_Dnn_pp_mm}
\frac{1}{2} ds^{2} &=&  x^{-2} \Big\{ -(dp+z  dq)dx - x \, dq dz + \mathcal{A} \, (dp+z dq)^{2} 
\\ \nonumber
&&  \ \ \ \ \ \   -  2T_{0} x^{2}  (dp+zdq) dq  + x \Omega \, dq^{2} \Big\}
\end{eqnarray}
where $T_{0}=\{ -1, 0,1 \}$ is a constant, $\mathcal{A} = \mathcal{A} (q,p,x)$ and $\Omega=\Omega(q,p,z)$ are arbitrary holomorphic (real smooth) functions such that
\begin{equation}
\label{warunek_na_niezerowosc_skkalara}
 \Omega_{zz} + x \mathcal{A}_{xx}  - 4 \mathcal{A}_{x} + \frac{6 \mathcal{A}}{x} \ne 0
\end{equation}
\end{Twierdzenie}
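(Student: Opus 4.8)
The plan is to obtain this metric as a specialization of Theorem \ref{theorem_dege_x_degne_pp_mm}, imposing that the second anti-self-dual congruence $\mathcal{C}_{n^{\dot{A}}}$ becomes nonexpanding, and then to exhaust the residual gauge freedom to normalize the surviving constant $T_0$. First I would invoke Theorem \ref{theorem_dege_x_degne_pp_mm}: a space of the type $\{[\textrm{deg}]^e \otimes [\textrm{deg}]^{ne}, [++,--]\}$ already carries the metric (\ref{metryka_dege_x_degne_pp_mm}) with $T=T(q,p,z)$. The passage to $[\textrm{D}]^{nn}$ is governed by the expansion of $\mathcal{C}_{n^{\dot{A}}}$. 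Since the intersection is of type $[++,--]$ one has $\theta_2=0$, hence $N_1=0$; the remaining component was found in the proof of Theorem \ref{theorem_dege_x_degne_pp_mm} to be $N_2\sim x^2 T_z$. Requiring $\mathcal{C}_{n^{\dot{A}}}$ nonexpanding, $N_A=0$, therefore forces $T_z=0$, i.e. $T=T(q,p)$, and the metric becomes (\ref{metric_weak_HH_expanding_space_abrevaitions}) with the functions (\ref{funkcje_metryczne_paraKahler_class_1}); this is already the form (\ref{metryka_dege_x_Dnn_pp_mm}) except that $T$ is still a free function of $(q,p)$.

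The central step is to gauge $T$ to a constant. Under the residual transformations (\ref{gauge_paraKahler_class_1}) that preserve this normal form (in particular $\phi=x$, via $\lambda^{-1}=(p'_p)^2$ and $x'=p'_p x$), the rule (\ref{gauge_paraKahler_class_1_funkcje}) gives $T'=(p'_p q'_q)^{-1}T$. If $T\equiv 0$ I set $T_0=0$. If $T\ne 0$ I fix $q'=q'(q)$ freely (say $q'_q=1$) and integrate $p'_p=T/(T_0 q'_q)$ in $p$, which produces an admissible $p'=p'(q,p)$ sending $T$ to the constant value $T_0$; only the discrete residue $T_0\in\{-1,0,1\}$ then survives, the nonzero sign being an artifact of the allowed real normalizations in the neutral setting. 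Simultaneously $\mathcal{A}$ and $\Omega$ are carried by (\ref{gauge_paraKahler_class_1_funkcje}) into new functions of the same arguments, so they remain arbitrary and no generality is lost, yielding (\ref{metryka_dege_x_Dnn_pp_mm}).

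Finally I would pin down the non-degeneracy condition (\ref{warunek_na_niezerowosc_skkalara}). Reading off the curvature scalar (\ref{deg_x_Dnn_mm_pp_R}) and factoring out $x$, one gets $\mathcal{R}=-2x(\Omega_{zz}+x\mathcal{A}_{xx}-4\mathcal{A}_x+6\mathcal{A}/x)$, so (\ref{warunek_na_niezerowosc_skkalara}) is exactly $\mathcal{R}\ne 0$ (for $x\ne 0$). By (\ref{deg_x_Dnn_mm_pp_ASDWeyl}) every component of the ASD Weyl spinor is proportional to $\mathcal{R}$, so the anti-self-dual part is genuinely of type [D] rather than [O] precisely when $\mathcal{R}\ne 0$; this is exactly what the designation $[\textrm{D}]^{nn}$ demands, and it fixes (\ref{warunek_na_niezerowosc_skkalara}). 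I expect the main obstacle to be the bookkeeping of the gauge step rather than any deep difficulty: one must confirm that normalizing $T$ to a constant stays compatible with holding $\phi=x$ and with the transformation of $\Omega$ in (\ref{gauge_paraKahler_class_1_funkcje}), which mixes a $z^2$ term, a linear $z$ term and a $z$-independent term, and in the real neutral case that the two signs $T_0=\pm 1$ are genuinely inequivalent.
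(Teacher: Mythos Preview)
Your proposal is correct and follows essentially the same route as the paper: specialize Theorem \ref{theorem_dege_x_degne_pp_mm} by setting $N_{A}=0\Rightarrow T_{z}=0$, use the residual gauge (\ref{gauge_paraKahler_class_1_funkcje}) to reduce $T(q,p)$ to a constant $T_{0}\in\{-1,0,1\}$, and impose $\mathcal{R}\ne 0$ via (\ref{deg_x_Dnn_mm_pp_R})--(\ref{deg_x_Dnn_mm_pp_ASDWeyl}) to keep the ASD Weyl spinor genuinely of type [D]. Your discussion of the gauge step and of the sign ambiguity in the neutral case is in fact more explicit than the paper's proof, which simply asserts that $T$ can be brought to a constant.
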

\begin{proof}
Inserting (\ref{funkcje_metryczne_paraKahler_class_1}) into (\ref{metric_weak_HH_expanding_space_abrevaitions}) one arrives at (\ref{metryka_dege_x_Dnn_pp_mm}). From (\ref{gauge_paraKahler_class_1_funkcje}) it follows that the function $T$ can be brought to a constant value, namely $T= \{ -1, 0,  1 \}$. The curvature scalar $\mathcal{R}$ given by (\ref{deg_x_Dnn_mm_pp_R}) must be nonzero, otherwise the ASD conformal curvature vanishes. Hence, (\ref{warunek_na_niezerowosc_skkalara}) is valid.
\end{proof}

As long as $C^{(3)} \ne 0$ and $2C^{(2)} C^{(2)} - 3 C^{(3)} C^{(1)} \ne 0$ the metric (\ref{metryka_dege_x_Dnn_pp_mm}) is of the type $\{ [\textrm{II}]^{e} \otimes [\textrm{D}]^{nn},[++,--] \}$.

\subsubsection{Spaces of the type $\{ [\textrm{D}]^{e} \otimes [\textrm{D}]^{nn},[++,--] \}$}

If $C^{(3)} \ne 0$ and $2C^{(2)} C^{(2)} - 3 C^{(3)} C^{(1)} =0$ the metric (\ref{metryka_dege_x_Dnn_pp_mm}) is of the type $\{ [\textrm{D}]^{e} \otimes [\textrm{D}]^{nn},[++,--] \}$. However, the condition $2C^{(2)} C^{(2)} - 3 C^{(3)} C^{(1)} =0$ written explicitly is even more complicated then its counterpart for the space of the type $ [\textrm{D}]^{n} \otimes [\textrm{D}]^{nn}$ (Eq. (3.20) of Ref. \cite{Chudecki_Kahler_1}). It will be analyzed elsewhere.

\begin{Uwaga}
\normalfont An example of a (non-Einstein) space of the type $\{ [\textrm{D}]^{e} \otimes [\textrm{D}]^{nn},[++,--] \}$ cannot be obtained even if one additionally assumes that the traceless Ricci tensor has two eigenvectors. Indeed, conditions  $2C^{(2)} C^{(2)} - 3 C^{(3)} C^{(1)} =0$ and $\alpha = 0$ imply $\beta=0$. Hence, a space becomes an Einstein space. Such a solution reads
\begin{equation}
\label{rozwiazanie_szczegolnego_typu_DnaD}
\mathcal{A} = \frac{\Lambda}{3} + \frac{x^{3}}{(a_{0} p + b_{0} q + c_{0})^{3}}, \ \Omega=T=0
\end{equation}
where $a_{0}$ is an arbitrary constant, $b_{0}= \{ -1,0,1 \}$ and $c_{0}$ is a constant which is nonzero only if both  $a_{0}$ and $b_{0}$ vanish. The metric given by (\ref{rozwiazanie_szczegolnego_typu_DnaD}) reads
\begin{equation}
\label{metryka_Einstein_Dee_x_Dnn_pp_mm}
\frac{1}{2} ds^{2} = x^{-2} \left( dqdy-dpdx + \left( \frac{\Lambda}{3x} + \frac{x^{2}}{(a_{0} p + b_{0} q + c_{0})^{3}} \right) ( x dp- y dq )^{2} \right)
\end{equation}
To be more precise, solution (\ref{metryka_Einstein_Dee_x_Dnn_pp_mm}) is a metric of the type $\{ [\textrm{D}]^{ee} \otimes [\textrm{D}]^{nn}, [++,--,--,++] \}$. It belongs to one of the classes described in \cite{Bor_Makhmali_Nurowski}.
\end{Uwaga}

\subsubsection{Spaces of the type $\{ [\textrm{III}]^{e} \otimes [\textrm{D}]^{nn},[++,--] \}$} 

In this Section we do not assume that the function $T$ is gauged to a constant value, like it was done in Theorem \ref{theorem_dege_x_Dnn_pp_mm}. Thus, $T=T(q,p)$ but the gauge function $p'=p'(q,p)$ is still arbitrary at our disposal.

The type $\{ [\textrm{III}]^{e} \otimes [\textrm{D}]^{nn},[++,--] \}$ is given by $C^{(3)} = 0$, $C^{(2)} \ne 0$ and $\mathcal{R} \ne 0$. The condition $C^{(3)} = 0$ implies
\begin{equation}
\label{pomoccnicze_rozwiazanie_na_Omega}
\mathcal{A} = Ax +B +\frac{C}{x}, \ \Omega = -A z^{2} + Mz + N
\end{equation}
where $A$, $B$, $C$, $M$ and $N$ are functions of $(q,p)$. The transformation formula for $A$ reads
\begin{equation}
p'_{p} A = A + \frac{p'_{pp}}{p'_{p}}
\end{equation}
Thus, $A$ can be gauged away without any loss of generality. However, from now on the gauge (\ref{gauge_paraKahler_class_1}) is restricted to the transformations such that 
\begin{equation}
\label{restrykcja_na_p}
p'=g(q) \, p + h(q)
\end{equation}

The fact that $\Omega$ becomes a first order polynomial in $z$ (and $z=-y/x$) allows us to return to the hyperheavenly coordinate system $(q,p,x,y)$. In terms of $(q,p,x,y)$ functions $\mathcal{A}$, $\mathcal{Q}$ and $\mathcal{B}$ take the form
\begin{eqnarray}
\label{funkcje metryczne_dla_ttypu_III}
\mathcal{A} = B + \frac{C}{x}, \ \mathcal{Q} = B \, \frac{y}{x} + C \, \frac{y}{x^{2}} + T  x^{2} 
\\ \nonumber
\mathcal{B} = B \, \frac{y^2}{x^2}  + C \, \frac{y^2}{x^3}  + 2T  xy - M  y + N x
\end{eqnarray}
Hence
\begin{eqnarray}
\label{formuly_dla_ttypu_III}
-\frac{\mathcal{R}}{12} &=& B + \frac{2C}{x}  \\ \nonumber
-x^{-4} C^{(2)} &=& B_{p} \, \frac{y}{x^2} + (B_{q} + 4TC) \, \frac{1}{x} + 4T_{p} x + 6T B  - M_{p}
\\ \nonumber
\alpha &=& -4x^{-2} C
\\ \nonumber
\beta &=& -\frac{4}{x^2} (xC_{q} + y C_{p}) + x (M_{p} - 6TB) - 3B_{p} \, \frac{y}{x} - 3 (4TC + B_{q})
\end{eqnarray}
(Coefficient $C^{(1)}$ is more complicated and we do not need its explicit form at this stage). Thus one formulates
\begin{Twierdzenie} 
\label{theorem_IIIe_x_Dnn_pp_mm}
Let $(\mathcal{M}, ds^{2})$ be a complex (neutral) space of the type $\{ [\textrm{III}]^{e} \otimes [\textrm{D}]^{nn}, [++,--] \}$ ($\{ [\textrm{III}_{r}]^{e} \otimes [\textrm{D}_{r}]^{nn}, [++,--] \}$). Then there exist a local coordinate system $(q,p,x,y)$ such that the metric takes the form
\begin{eqnarray}
\label{metryka_IIIe_x_Dnn_pp_mm}
\frac{1}{2} ds^{2} &=&  x^{-2} \Big\{ dqdy-dp dx + \left( B + \frac{C}{x} \right) dp^{2} - 2 \left( B \, \frac{y}{x} + C \, \frac{y}{x^{2}} + T  x^{2} \right) dqdp
\\ \nonumber
&&  \ \ \ \ \ \ + \left(  B \, \frac{y^2}{x^2}  + C \, \frac{y^2}{x^3}  + 2T  xy - M  y + N x \right)   \, dq^{2} \Big\}
\end{eqnarray}
where $B=B(q,p)$, $C=C(q,p)$, $M=M(q,p)$, $N=N(q,p)$ and $T=T(q,p)$ are arbitrary holomorphic (real smooth) functions such that
\begin{subequations}
\begin{eqnarray}
&( B \ne 0 \textrm{ or } C \ne 0 )&
\\ \nonumber
&\textrm{ and }&
\\
 &( B_{p} \ne 0 \textrm{ or } T_{p} \ne 0 \textrm{ or } B_{q} + 4 T C \ne 0 \textrm{ or } 6TB  - M_{p} \ne 0 )&
\end{eqnarray}
\end{subequations}
If $C \ne 0$ the traceless Ricci tensor has four eigenvectors and if $C=0$ the traceless Ricci tensor has two eigenvectors.
\end{Twierdzenie}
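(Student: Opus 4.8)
The strategy is to carve out the type III locus inside the general type $\{[\textrm{deg}]^{e} \otimes [\textrm{D}]^{nn},[++,--]\}$ family of Theorem \ref{theorem_dege_x_Dnn_pp_mm}, reading the defining constraints straight off the curvature coefficients already displayed for the functions (\ref{funkcje_metryczne_paraKahler_class_1}). First I would impose the type III defining equation $C^{(3)}=0$. Since $C^{(3)}=-\frac{x}{3}(\Omega_{zz}+x\mathcal{A}_{xx}+2\mathcal{A}_{x})$ and $\mathcal{A}=\mathcal{A}(q,p,x)$, $\Omega=\Omega(q,p,z)$ depend on independent coordinates, differentiating the vanishing bracket with respect to $x$ shows that $x\mathcal{A}_{xx}+2\mathcal{A}_{x}$ is a function of $(q,p)$ alone, say $2A(q,p)$; solving this elementary Euler equation (homogeneous solutions $1$ and $1/x$, particular solution $Ax$) gives $\mathcal{A}=Ax+B+C/x$, and then $\Omega_{zz}=-2A$ integrates to $\Omega=-Az^{2}+Mz+N$, which is precisely (\ref{pomoccnicze_rozwiazanie_na_Omega}).

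Next I would remove $A$ by gauge. Comparing the coefficient of $x$ in the transformation law $\mathcal{A}'=\mathcal{A}+(p'_{pp}/p'_{p})x$ of (\ref{gauge_paraKahler_class_1_funkcje}) with $x'=p'_{p}x$ yields $p'_{p}A'=A+p'_{pp}/p'_{p}$, so the linear equation $p'_{pp}=-A\,p'_{p}$ (in $p$, for each fixed $q$) brings $A$ to zero; preserving $A=0$ then forces $p'_{pp}=0$, i.e. the residual freedom (\ref{restrykcja_na_p}). With $A=0$ the function $\Omega$ is linear in $z$, so I would pass back to the hyperheavenly coordinates $(q,p,x,y)$ through $z=-y/x$; substituting into (\ref{funkcje_metryczne_paraKahler_class_1}) produces the explicit $\mathcal{A},\mathcal{Q},\mathcal{B}$ of (\ref{funkcje metryczne_dla_ttypu_III}), and feeding these together with $\phi=x$ into the general metric (\ref{metric_weak_HH_expanding_space_abrevaitions}) gives (\ref{metryka_IIIe_x_Dnn_pp_mm}).

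It remains to read off the genuineness conditions from (\ref{formuly_dla_ttypu_III}). The ASD Weyl spinor stays type D (nonzero) iff $\mathcal{R}\ne 0$; since $-\mathcal{R}/12=B+2C/x$ is a nontrivial function of $x$, this is equivalent to $B\ne 0$ or $C\ne 0$. The SD type is exactly III, rather than more degenerate, iff $C^{(2)}\ne 0$; viewing $-x^{-4}C^{(2)}$ as a polynomial in $x$ and $y/x$ with coefficients depending on $(q,p)$, it fails to vanish identically iff at least one of $B_{p}$, $B_{q}+4TC$, $T_{p}$, $6TB-M_{p}$ is nonzero, which is the stated disjunction. For the Ricci structure I would invoke $\alpha=-4x^{-2}C$ from (\ref{formuly_dla_ttypu_III}): by the algebraic classification recorded just before Theorem \ref{theorem_dege_x_Dnn_pp_mm} the traceless Ricci tensor has four distinct eigenvectors when $\alpha\ne 0$ and two when $\alpha=0$, so the count is dictated by $C\ne 0$ versus $C=0$ (in the latter case $\beta\ne 0$ is automatic in the non-Einstein regime of this section).

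Since essentially all the relevant curvature expressions are already assembled in the preceding paragraphs, the content of the theorem is mostly bookkeeping; the only substantive step is the separation-of-variables solution of $C^{(3)}=0$, and the only point demanding genuine care is correctly identifying when the two-variable polynomial $C^{(2)}$ vanishes identically as opposed to generically, which is what pins down the second disjunctive condition.
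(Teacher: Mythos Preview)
Your proof is correct and follows essentially the same approach as the paper: you impose $C^{(3)}=0$ and separate variables to obtain (\ref{pomoccnicze_rozwiazanie_na_Omega}), gauge away $A$ via the transformation law (\ref{gauge_paraKahler_class_1_funkcje}) (leaving the residual freedom (\ref{restrykcja_na_p})), pass back to hyperheavenly coordinates, and then read off the nondegeneracy conditions and the Ricci eigenvector count from (\ref{formuly_dla_ttypu_III}). Your observation that $\beta\ne 0$ is automatic when $C=0$ in the non-Einstein regime is exactly the implicit assumption behind the paper's eigenvector claim.
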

\begin{proof}
The proof follows immediately from (\ref{funkcje metryczne_dla_ttypu_III}), (\ref{formuly_dla_ttypu_III}) and (\ref{metric_weak_HH_expanding_space_abrevaitions}).
\end{proof}

\subsubsection{Spaces of the type $\{ [\textrm{N}]^{e} \otimes [\textrm{D}]^{nn},[++,--] \}$}

\begin{Twierdzenie} 
\label{theorem_Ne_x_Dnn_pp_mm}
Let $(\mathcal{M}, ds^{2})$ be a complex (neutral) space of the type $\{ [\textrm{N}]^{e} \otimes [\textrm{D}]^{nn}, [++,--] \}$ ($\{ [\textrm{N}_{r}]^{e} \otimes [\textrm{D}_{r}]^{nn}, [++,--] \}$). Then there exist a local coordinate system $(q,p,x,y)$ such that the metric takes the form
\begin{eqnarray}
\label{metryka_Ne_x_Dnn_pp_mm}
\frac{1}{2} ds^{2} &=&  x^{-2} \Big\{ dqdy-dp dx + \left( B + \frac{C}{x} \right) dp^{2} - 2 \left( B \, \frac{y}{x} + C \, \frac{y}{x^{2}} + T_{0}  x^{2} \right) dqdp
\\ \nonumber
&&  \ \ \ \ \ \ + \left(  B \, \frac{y^2}{x^2}  + C \, \frac{y^2}{x^3}  + 2T_{0}  xy - 6T_{0} B p  y + N x \right)   \, dq^{2} \Big\}
\end{eqnarray}
where $T=T_{0} = \{ -1,0,1 \}$ as a constant, $B=B(q)$, $C=C(q,p)$ and $N=N(q,p)$ are arbitrary holomorphic (real smooth) functions such that
\begin{subequations}
\begin{eqnarray}
\label{typ_NxD_wwwarunek_1}
&(4CT_{0} = - B_{q})&
\\ \nonumber
&\textrm{ and }&
\\ 
\label{typ_NxD_wwwarunek_2}
&(B \ne 0 \textrm{ or } C \ne 0 )&
\\ \nonumber
&\textrm{ and }&
\\
\label{typ_NxD_wwwarunek_3}
 & (C_{pp} \ne 0 \textrm{ or } 4C_{pq} + 3BB_{q} \ne 0 \textrm{ or } 24T_{0}^{2} B^{2}p-N_{pp}-3 T_{0} B_{q} \ne 0 \textrm{ or }&
 \\ \nonumber
 &2NC_{p} - 2CN_{p} - 2C_{qq} - 3p BB_{qq} \ne 0  \textrm{ or }  B_{qq} - 2T_{0}p BB_{q}  \ne 0)&
\end{eqnarray}
\end{subequations}
If $C \ne 0$ the traceless Ricci tensor has four eigenvectors and if $C=0$ the traceless Ricci tensor vanishes.
\end{Twierdzenie}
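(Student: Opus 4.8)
The plan is to obtain this result as a further specialization of Theorem~\ref{theorem_IIIe_x_Dnn_pp_mm}: the type $[\textrm{N}]$ is characterized on the self-dual side by $C^{(3)}=0$, $C^{(2)}=0$ and $C^{(1)}\ne0$, so starting from the type $[\textrm{III}]$ family I only need to impose the extra equation $C^{(2)}=0$. I would therefore take as background the metric functions (\ref{funkcje metryczne_dla_ttypu_III}) together with the curvature expressions (\ref{formuly_dla_ttypu_III}), keeping the residual gauge (\ref{restrykcja_na_p}), namely $p'=g(q)\,p+h(q)$ and $q'=q'(q)$, at my disposal.

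First I would impose $C^{(2)}=0$. In (\ref{formuly_dla_ttypu_III}) the quantity $-x^{-4}C^{(2)}$ is a sum of terms with mutually distinct dependence on $x$ and $y$, while $B$, $C$, $M$, $T$ depend only on $(q,p)$; hence each coefficient must vanish separately. The coefficient of $y/x^{2}$ gives $B_{p}=0$, so $B=B(q)$; the coefficient of $x$ gives $T_{p}=0$, so $T=T(q)$; the coefficient of $1/x$ gives the constraint $B_{q}+4TC=0$; and the $x$-independent part gives $M_{p}=6TB$. Integrating this last PDE in $p$ yields $M=6TBp+M_{0}(q)$ with $M_{0}$ an arbitrary function of $q$. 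Note that $N$ does not enter $C^{(2)}$ and so remains free.

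Next I would fix the remaining gauge. Using the transformation law $T'=(p'_{p}q'_{q})^{-1}T$ from (\ref{gauge_paraKahler_class_1_funkcje}) and $T=T(q)$, I can choose $g(q)q'_{q}$ so that $T$ becomes a constant, normalized to $T_{0}\in\{-1,0,1\}$ (the sign being an invariant only in the real slice). The residual freedom in $g(q)$, $h(q)$ and $q'(q)$, acting through the transformation rule (\ref{gauge_paraKahler_class_1_funkcje}) for $\Omega$ whose $z$-coefficient is $M$, is then used to absorb the integration constant $M_{0}(q)$, leaving $M=6T_{0}Bp$ (when $T_{0}\ne0$ a pure shift $p'=p+h(q)$ already accomplishes this). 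Substituting $\mathcal{A}=B+C/x$, $\Omega=Mz+N$ with $z=-y/x$, $T=T_{0}$ and $M=6T_{0}Bp$ into (\ref{metric_weak_HH_expanding_space_abrevaitions}) produces the metric (\ref{metryka_Ne_x_Dnn_pp_mm}), and the constraint above becomes $4CT_{0}=-B_{q}$, i.e.\ (\ref{typ_NxD_wwwarunek_1}).

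Finally I would establish the nondegeneracy conditions and the Ricci statement. Since $-\mathcal{R}/12=B+2C/x$, the requirement $\mathcal{R}\ne0$ that the ASD Weyl spinor be genuinely of type $[\textrm{D}]$ is equivalent to $B\ne0$ or $C\ne0$, which is (\ref{typ_NxD_wwwarunek_2}). The genuine type $[\textrm{N}]$ condition is $C^{(1)}\ne0$: evaluating the explicit $C^{(1)}$ of Section~\ref{subsubsekcja_general_results} with all the present constraints imposed and collecting the independent monomials in $x$ and $y$ produces the five coefficient functions listed in (\ref{typ_NxD_wwwarunek_3}), whose disjunction is exactly the stated condition. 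For the Ricci type, $\alpha=-4x^{-2}C$, so $C\ne0$ gives four eigenvectors; if $C=0$ the constraint forces $B_{q}=0$, and substituting $C=0$, $B_{p}=B_{q}=0$, $M_{p}=6T_{0}B$ into $\beta$ of (\ref{formuly_dla_ttypu_III}) gives $\beta=0$, so the traceless Ricci tensor vanishes. The main obstacle is the explicit computation of $C^{(1)}$ under the full set of constraints, as it is the longest curvature coefficient and must be reorganized carefully to expose precisely those five functions; a secondary technical point is checking that the transformation law for $\Omega$ really eliminates $M_{0}(q)$, in particular in the degenerate case $T_{0}=0$, without reintroducing $p$-dependence elsewhere.
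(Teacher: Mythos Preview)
Your proposal is correct and follows essentially the same approach as the paper: impose $C^{(2)}=0$ on the type-$[\textrm{III}]$ data to obtain $B=B(q)$, $T=T(q)$, $B_{q}+4TC=0$, $M_{p}=6TB$, then gauge $T$ to a constant and absorb the integration function $m(q)$ in $M$, and finally read off the $C^{(1)}$ coefficients and the Ricci structure. The one point you flag as uncertain, namely eliminating $M_{0}(q)$ when $T_{0}=0$, is handled in the paper by the explicit transformation law $m'=gm-6T_{0}Bh+3g_{q}$, so the $3g_{q}$ term (rather than the shift $h$) does the job in that case.
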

\begin{proof}
For the type $\{ [\textrm{N}]^{e} \otimes [\textrm{D}]^{nn},[++,--] \}$ the coefficient $C^{(2)}$ vanishes. Hence, we obtain the conditions (compare (\ref{formuly_dla_ttypu_III}))
\begin{equation}
B=B(q),\ T=T(q), \ B_{q} + 4TC=0, \ M_{p} = 6TB
\end{equation}
From (\ref{gauge_paraKahler_class_1_funkcje}) and (\ref{restrykcja_na_p}) one finds the transformation for $T$  
\begin{equation}
T' = \frac{1}{g q'_{q}} T
\end{equation}
Because $T=T(q)$ then it can be gauged to a constant value, $T=T_{0} = \{ -1, 0 ,1 \}$. From now on the gauge functions $g$ and $q'_{q}$ are restricted to such that $g q'_{q}=1$. 

With $T=T_{0}$ we find $M=6T_{0} B p + m(q)$ where $m$ is an arbitrary function which transform as follows
\begin{equation}
 m' = g m - 6T_{0} B h + 3 g_{q}
\end{equation}
Obviously, $m$ can be gauged away without any loss of generality. Hence, $M=6T_{0} B p$ and (\ref{metryka_Ne_x_Dnn_pp_mm}) is proved. The condition (\ref{typ_NxD_wwwarunek_2}) assures that the curvature scalar $\mathcal{R}$ is nonzero. The coefficient $C^{(1)}$ takes the form
\begin{eqnarray}
C^{(1)} &=& -C_{pp} \frac{z^{2}}{x} + \left( 2C_{pq} + \frac{3}{2} BB_{q} \right) \frac{z}{x} + \left( 24T_{0}^{2} B^{2}p-N_{pp}-3 T_{0} B_{q} \right) x
\\ \nonumber
&& + \left( NC_{p} - CN_{p} - C_{qq} - \frac{3}{2}p BB_{qq} \right) \frac{1}{x} + \frac{3}{2} B_{qq} - 3T_{0}p BB_{q} 
\end{eqnarray}
Thus, $C^{(1)} \ne 0$ implies (\ref{typ_NxD_wwwarunek_3}). 

Finally, from (\ref{formuly_dla_ttypu_III}) one finds that $C\ne 0 \ \Longleftrightarrow \ \alpha \ne 0$. However, if  $C=0$ then $\alpha=\beta=0$ and the traceless Ricci tensor vanishes.  
\end{proof}

\subsection{Einstein case} 
\label{sekcja_Einstein}

Results from Section \ref{subsubsekcja_general_results} can be easily specialized to the Einstein case. $\alpha=\beta=0$ and $\mathcal{R}=-4 \Lambda \ne 0$ imply the solutions for $\mathcal{A}$, $\Omega$ and $T$
\begin{equation}
\mathcal{A} = Ax^{3} + Bx + \frac{\Lambda}{3}, \ \Omega = - B z^{2} + Mz +N, \ T = \frac{M_{p}-2B_{q}}{2 \Lambda}
\end{equation}
Transformation formula for $B$ reads
\begin{equation}
p'_{p} B' = B + \frac{p'_{pp}}{p'_{p}}
\end{equation}
Thus, $B$ can be gauged away without any loss of generality but from now on $p'=g(q)p+ h(q)$. With $B=0$ the metrical functions $\phi$, $\mathcal{A}$, $\mathcal{Q}$ and $\mathcal{B}$ take the form
\begin{eqnarray}
\label{ffunkcje_metryczne_dla_Einstein_dexe_x_Dnn}
&& \phi=x, \ \mathcal{A} = A x^{3} + \frac{\Lambda}{3}, \ \mathcal{Q} = A yx^{2} + \frac{\Lambda}{3} \frac{y}{x} + \frac{M_{p}}{2 \Lambda} x^{2},
\\ \nonumber
&& \mathcal{B} = A y^{2} x + \frac{\Lambda}{3} \frac{y^2}{x^2} + \frac{M_{p}}{\Lambda} xy + Nx-My
\end{eqnarray}
and
\begin{eqnarray}
\label{krzywizzzna_dege_x_Dnn_Einstein}
C_{\dot{1}\dot{1}\dot{2}\dot{2}} &=& -\frac{\Lambda}{3}, \ C_{\dot{1}\dot{1}\dot{1}\dot{2}} = \Lambda \frac{y}{x}, \
C_{\dot{1}\dot{1}\dot{1}\dot{1}} = -2 \Lambda \frac{y^2}{x^2}
\\ \nonumber
C^{(3)} &=& -4A x^{3}
\\ \nonumber
-x^{-5} C^{(2)} &=& 4A_{p}y + 4A_{q} x + \frac{2}{\Lambda} M_{pp} 
\end{eqnarray}
(Coefficient $C^{(1)}$ is much more complicated and its explicit form is not necessary at this stage.)

\begin{Twierdzenie} 
\label{theorem_Einstein_dege_x_Dnn_pp_mm}
Let $(\mathcal{M}, ds^{2})$ be an Einstein complex (neutral) space of the type $\{ [\textrm{deg}]^{e} \otimes [\textrm{D}]^{nn}, [++,--] \}$ ($\{ [\textrm{deg}]^{e} \otimes [\textrm{D}_{r}]^{nn}, [++,--] \}$). Then there exist a local coordinate system $(q,p,x,y)$ such that the metric takes the form
\begin{eqnarray}
\label{metryka_Einstein_dege_x_Dnn_pp_mm}
\frac{1}{2} ds^{2} &=&  x^{-2} \Big\{ dqdy-dp dx + \left( A x^{3} + \frac{\Lambda}{3} \right) dp^{2} - 2 \left( A yx^{2} + \frac{\Lambda}{3} \frac{y}{x} + \frac{M_{p}}{2 \Lambda} x^{2}  \right) dqdp \ \ \ \ \ 
\\ \nonumber
&&  \ \ \ \ \ \ + \left( A y^{2} x + \frac{\Lambda}{3} \frac{y^2}{x^2} + \frac{M_{p}}{\Lambda} xy + Nx-My  \right)   \, dq^{2} \Big\}
\end{eqnarray}
where $\Lambda \ne 0$ is a cosmological constant, $A = A (q,p)$, $M = M (q,p)$ and  $N = N (q,p)$ are arbitrary holomorphic (real smooth) functions.
\end{Twierdzenie}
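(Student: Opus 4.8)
The plan is to specialize the general description of the type $\{[\textrm{deg}]^{e} \otimes [\textrm{D}]^{nn},[++,--]\}$ — whose metric functions are given by (\ref{funkcje_metryczne_paraKahler_class_1}) with $\phi=x$, $\mathcal{A}=\mathcal{A}(q,p,x)$, $T=T(q,p)$ and $\Omega=\Omega(q,p,z)$, as established in Theorem \ref{theorem_dege_x_Dnn_pp_mm} — by imposing the Einstein condition, namely the vanishing of the traceless Ricci spinor $C_{AB\dot{C}\dot{D}}$ together with constancy of the scalar curvature. By (\ref{deg_x_Dnn_mm_pp_tracelessRicci}) the former is equivalent to $\alpha=0$ and $\beta=0$, so the whole problem reduces to solving these two equations for $\mathcal{A}$, $\Omega$ and $T$, and then fixing the residual gauge.

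First I would solve $\alpha=\Omega_{zz}-x\mathcal{A}_{xx}+2\mathcal{A}_{x}=0$. Since $x$ and $z$ are independent coordinates while $\Omega_{zz}$ depends only on $(q,p,z)$ and $x\mathcal{A}_{xx}-2\mathcal{A}_{x}$ only on $(q,p,x)$, both sides must equal a common function $-2B(q,p)$; integrating the resulting ordinary differential equations in $x$ and in $z$ gives $\mathcal{A}=Ax^{3}+Bx+a_{0}$ and $\Omega=-Bz^{2}+Mz+N$, with $A,B,a_{0},M,N$ functions of $(q,p)$. Substituting these into the scalar curvature (\ref{deg_x_Dnn_mm_pp_R}) I expect a clean cancellation of all $A$- and $B$-dependent terms, leaving $\mathcal{R}=-12a_{0}$; this collapse is really the only mildly surprising computation. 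Imposing $\beta=0$ from (\ref{deg_x_Dnn_mm_pp_tracelessRicci}), read off as a polynomial in $x$ and $z$, then forces its $x^{0}$ and $z$-linear coefficients to vanish, i.e. $a_{0,p}=a_{0,q}=0$; thus $a_{0}$ is constant (so $\mathcal{R}$ is automatically constant, consistent with the contracted Bianchi identity), and I set $\Lambda:=3a_{0}$, giving $\mathcal{R}=-4\Lambda$. The coefficient of $x$ in $\beta$ yields $T=(M_{p}-2B_{q})/(2\Lambda)$, and since the ASD Weyl spinor must be genuinely of type $[\textrm{D}]$, i.e. nonzero via (\ref{deg_x_Dnn_mm_pp_ASDWeyl}), we need $\mathcal{R}\neq 0$ and hence $\Lambda\neq 0$, so the division is legitimate.

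The last step is to dispose of $B$. From the transformation law (\ref{gauge_paraKahler_class_1_funkcje}) the coefficient of the linear-in-$x$ term of $\mathcal{A}$ obeys $p'_{p}B'=B+p'_{pp}/p'_{p}$, so choosing $p'$ as a solution of the first-order equation $p'_{pp}/p'_{p}=-B$ sets $B'=0$; the residual gauge freedom preserving $B=0$ is then exactly $p'=g(q)p+h(q)$. With $B=0$ one has $\mathcal{A}=Ax^{3}+\Lambda/3$, $\Omega=Mz+N$ and $T=M_{p}/(2\Lambda)$, and evaluating $\mathcal{Q}=(y/x)\mathcal{A}+x^{2}T$ and $\mathcal{B}=(y^{2}/x^{2})\mathcal{A}+2yxT+x\Omega$ with $z=-y/x$ reproduces (\ref{ffunkcje_metryczne_dla_Einstein_dexe_x_Dnn}). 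Inserting $\phi=x$ together with these $\mathcal{A},\mathcal{Q},\mathcal{B}$ into the master metric (\ref{metric_weak_HH_expanding_space_abrevaitions}) yields (\ref{metryka_Einstein_dege_x_Dnn_pp_mm}).

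Everything beyond this is substitution, so there is no serious obstacle; the only genuine points of care are the separation-of-variables argument in $\alpha=0$ (which hinges on the independence of $x$ and $z$), the observation that it is $\beta=0$ — rather than a separate hypothesis — that forces $a_{0}$ to be constant and thereby pins down $\Lambda$, and the check that the equation $p'_{pp}/p'_{p}=-B$ is solvable so that $B$ really can be gauged away without over-constraining the remaining freedom $p'=g(q)p+h(q)$.
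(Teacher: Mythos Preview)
Your proposal is correct and follows essentially the same route as the paper: start from the general type $\{[\textrm{deg}]^{e}\otimes[\textrm{D}]^{nn},[++,--]\}$ data (\ref{funkcje_metryczne_paraKahler_class_1}), solve $\alpha=0$ by separation of variables to obtain $\mathcal{A}=Ax^{3}+Bx+a_{0}$ and $\Omega=-Bz^{2}+Mz+N$, use $\beta=0$ to fix $T=(M_{p}-2B_{q})/(2\Lambda)$, gauge away $B$ via $p'_{p}B'=B+p'_{pp}/p'_{p}$, and substitute into (\ref{metric_weak_HH_expanding_space_abrevaitions}). The only minor refinement you add over the paper's presentation is the explicit observation that constancy of $a_{0}$ (hence of $\mathcal{R}$) is forced by the $z$- and constant-coefficients of $\beta=0$ rather than imposed separately, which is a nice clarification but not a different argument.
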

\begin{proof}
Inserting (\ref{ffunkcje_metryczne_dla_Einstein_dexe_x_Dnn}) into (\ref{metric_weak_HH_expanding_space_abrevaitions}) one arrives at (\ref{metryka_Einstein_dege_x_Dnn_pp_mm}).
\end{proof}
As long as $A \ne 0$ and $2C^{(2)}C^{(2)} - 3C^{(3)}C^{(1)} \ne 0$ the metric (\ref{metryka_Einstein_dege_x_Dnn_pp_mm}) is of the type $\{ [\textrm{II}]^{e} \otimes [\textrm{D}]^{nn}, [++,--] \}$. We skip an analysis of the type $ [\textrm{D}]^{ee} \otimes [\textrm{D}]^{nn}$ (all such types were found in \cite{Bor_Makhmali_Nurowski}).

\begin{Twierdzenie} 
\label{theorem_Einstein_IIIe_x_Dnn_pp_mm}
Let $(\mathcal{M}, ds^{2})$ be an Einstein complex (neutral) space of the type $\{ [\textrm{III}]^{e} \otimes [\textrm{D}]^{nn}, [++,--] \}$ ($\{ [\textrm{III}_{r}]^{e} \otimes [\textrm{D}_{r}]^{nn}, [++,--] \}$). Then there exist a local coordinate system $(q,p,x,y)$ such that the metric takes the form
\begin{eqnarray}
\label{metryka_Einstein_IIIe_x_Dnn_pp_mm}
\frac{1}{2} ds^{2} &=&  x^{-2} \Big\{ dqdy-dp dx +   \frac{\Lambda}{3}  dp^{2} - 2 \left(  \frac{\Lambda}{3} \frac{y}{x} + \frac{M_{p}}{2 \Lambda} x^{2}  \right) dqdp \ \ \ \ \ 
\\ \nonumber
&&  \ \ \ \ \ \ + \left(  \frac{\Lambda}{3} \frac{y^2}{x^2} + \frac{M_{p}}{\Lambda} xy + Nx-My  \right)   \, dq^{2} \Big\}
\end{eqnarray}
where $\Lambda \ne 0$ is a cosmological constant, $M = M (q,p)$ and  $N = N (q,p)$ are arbitrary holomorphic (real smooth) functions such that $M_{pp} \ne 0$.
\end{Twierdzenie}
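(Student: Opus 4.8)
The plan is to specialize the general Einstein result of Theorem \ref{theorem_Einstein_dege_x_Dnn_pp_mm} to the Petrov--Penrose type $[\textrm{III}]^{e}$ of the SD Weyl spinor. Since the generating spinor $m_{A} \sim [0,m]$ is already a multiple principal spinor (recall that $C^{(5)} = C^{(4)} = 0$ in (\ref{curv})), the finer algebraic type is decided entirely by the surviving coefficients $C^{(3)}, C^{(2)}, C^{(1)}$. Type $[\textrm{III}]$ is characterized by $C^{(3)} = 0$ together with $C^{(2)} \ne 0$, so that the principal root at $m_{A}$ is exactly triple: this excludes types $[\textrm{II}]$ and $[\textrm{D}]$, which both require $C^{(3)} \ne 0$, and it excludes type $[\textrm{N}]$, which requires $C^{(2)} = 0$.

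First I would read off the curvature coefficients already computed for the general Einstein metric in (\ref{krzywizzzna_dege_x_Dnn_Einstein}). From $C^{(3)} = -4A x^{3}$ the defining condition $C^{(3)} = 0$ forces the function $A$ to vanish identically; this is a genuine geometric restriction, not a gauge artifact, since the relevant gauge freedom has already been spent to set $B = 0$ in the discussion preceding Theorem \ref{theorem_Einstein_dege_x_Dnn_pp_mm}. Substituting $A = 0$ into $-x^{-5} C^{(2)} = 4A_{p}y + 4A_{q}x + \frac{2}{\Lambda} M_{pp}$ collapses that expression to $C^{(2)} = -\frac{2}{\Lambda} x^{5} M_{pp}$, so the type-$[\textrm{III}]$ requirement $C^{(2)} \ne 0$ becomes exactly $M_{pp} \ne 0$, which is the hypothesis stated in the theorem. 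Finally I would insert $A = 0$ into the general Einstein metric (\ref{metryka_Einstein_dege_x_Dnn_pp_mm}): the three terms $A x^{3}$, $A y x^{2}$ and $A y^{2} x$ drop out, and what remains is precisely (\ref{metryka_Einstein_IIIe_x_Dnn_pp_mm}), with $M = M(q,p)$ and $N = N(q,p)$ the remaining free functions; no further gauge reduction is needed since the residual gauge is already fixed to $p' = g(q)p + h(q)$.

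There is essentially no computational obstacle here: the whole statement is the single-condition specialization $A \equiv 0$ of a metric that has already been derived in full generality. The only point demanding genuine care is the bookkeeping of the Petrov--Penrose classification, that is, confirming that $C^{(3)} = 0$ with $C^{(2)} \ne 0$ really yields type $[\textrm{III}]$ and not an accidental degeneration; but this is settled immediately by the explicit factorizations $C^{(3)} = -4A x^{3}$ and $C^{(2)} = -\frac{2}{\Lambda} x^{5} M_{pp}$ obtained above, so the classification is transparent and the proof reduces to substitution.
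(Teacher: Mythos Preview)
Your proof is correct and follows essentially the same approach as the paper: from (\ref{krzywizzzna_dege_x_Dnn_Einstein}) the condition $C^{(3)}=0$ forces $A=0$, and then $C^{(2)}\ne 0$ reduces to $M_{pp}\ne 0$, after which the metric (\ref{metryka_Einstein_dege_x_Dnn_pp_mm}) collapses to (\ref{metryka_Einstein_IIIe_x_Dnn_pp_mm}). The paper's own proof is the one-line version of exactly this argument.
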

\begin{proof}
$C^{(3)}=0$ implies $A=0$ and $C^{(2)} \ne 0$ implies $M_{pp} \ne 0$ (compare (\ref{krzywizzzna_dege_x_Dnn_Einstein})).
\end{proof}

\begin{Twierdzenie} 
\label{theorem_Einstein_Ne_x_Dnn_pp_mm}
Let $(\mathcal{M}, ds^{2})$ be an Einstein complex (neutral) space of the type $\{ [\textrm{N}]^{e} \otimes [\textrm{D}]^{nn}, [++,--] \}$ ($\{ [\textrm{N}_{r}]^{e} \otimes [\textrm{D}_{r}]^{nn}, [++,--] \}$). Then there exist a local coordinate system $(q,p,x,y)$ such that the metric takes the form
\begin{eqnarray}
\label{metryka_Einstein_Ne_x_Dnn_pp_mm}
\frac{1}{2} ds^{2} &=&  x^{-2} \Big\{ dqdy-dp dx +  \frac{\Lambda}{3}  dp^{2} - 2 \left(  \frac{\Lambda}{3} \frac{y}{x} + \frac{M_{0}}{2 \Lambda} x^{2}  \right) dqdp \ \ \ \ \ 
\\ \nonumber
&&  \ \ \ \ \ \ + \left(  \frac{\Lambda}{3} \frac{y^2}{x^2} + \frac{M_{0}}{\Lambda} xy + Nx-M_{0}py  \right)   \, dq^{2} \Big\}
\end{eqnarray}
where $\Lambda \ne 0$ is a cosmological constant, $M_{0}= \{ -1,0,1 \}$ is a constant and $N = N (q,p)$ is an arbitrary holomorphic (real smooth) function such that $3N_{pp} -2 M_{0}^{2} p\ne 0$.
\end{Twierdzenie}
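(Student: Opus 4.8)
The plan is to specialize the general Einstein metric of Theorem~\ref{theorem_Einstein_dege_x_Dnn_pp_mm} by imposing that the SD Weyl spinor be of Petrov--Penrose type N. Since $C^{(4)}=C^{(5)}=0$ identically by (\ref{curv}), type N means $C^{(3)}=0$, $C^{(2)}=0$ and $C^{(1)}\ne 0$. Reading the coefficients from (\ref{krzywizzzna_dege_x_Dnn_Einstein}), the condition $C^{(3)}=-4Ax^{3}=0$ forces $A=0$, exactly as in the type III case (Theorem~\ref{theorem_Einstein_IIIe_x_Dnn_pp_mm}). With $A=0$ the expression for $C^{(2)}$ collapses to $-x^{-5}C^{(2)}=\frac{2}{\Lambda}M_{pp}$, so the further demand $C^{(2)}=0$ is equivalent to $M_{pp}=0$; hence $M$ is affine in $p$, say $M=M_{1}(q)\,p+M_{2}(q)$, and consequently $T=\frac{M_{p}}{2\Lambda}=\frac{M_{1}(q)}{2\Lambda}$ depends on $q$ alone.

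The second step is to exhaust the residual gauge. After $B$ was set to zero in Section~\ref{sekcja_Einstein} the admissible coordinate changes are restricted to $p'=g(q)\,p+h(q)$, under which $T$ transforms by $T'=\frac{1}{p'_{p}q'_{q}}\,T$ (see (\ref{gauge_paraKahler_class_1_funkcje})). Proceeding exactly as in the non-Einstein type N analysis in the proof of Theorem~\ref{theorem_Ne_x_Dnn_pp_mm}, one uses this rule to rescale $M_{p}$ to a constant $M_{0}\in\{-1,0,1\}$, which pins down the product $g\,q'_{q}$. The freedom remaining in $h(q)$ and in $g_{q}$ then acts on the $p$-independent part $M_{2}(q)$ by an affine rule of the form $M_{2}'=g\,M_{2}-M_{0}h+3g_{q}$, the exact counterpart of the rule $m'=g\,m-6T_{0}Bh+3g_{q}$ established there, so $M_{2}$ can be gauged away. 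This leaves $M=M_{0}\,p$; substituting $A=0$ and $M=M_{0}p$ into the metrical functions (\ref{ffunkcje_metryczne_dla_Einstein_dexe_x_Dnn}) and then into (\ref{metric_weak_HH_expanding_space_abrevaitions}) yields the metric (\ref{metryka_Einstein_Ne_x_Dnn_pp_mm}), with $N=N(q,p)$ the only surviving free function.

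The last step is to evaluate $C^{(1)}$ and to show that its nonvanishing is governed precisely by $3N_{pp}-2M_{0}^{2}p\ne 0$. Since $C^{(2)}=C^{(3)}=0$ already, $C^{(1)}$ is the only surviving SD conformal coefficient, and type N (as opposed to the SD-conformally-flat type) is equivalent to $C^{(1)}\not\equiv 0$. The cleanest route is to observe that the present metric is the $C=0$ (hence $\alpha=\beta=0$) specialization of the non-Einstein type N metric, with $B=\frac{\Lambda}{3}$ constant and $T=\frac{M_{0}}{2\Lambda}$; substituting these into the expression for $C^{(1)}$ displayed in the proof of Theorem~\ref{theorem_Ne_x_Dnn_pp_mm} kills every term except $(24T^{2}B^{2}p-N_{pp})x=-\frac{1}{3}(3N_{pp}-2M_{0}^{2}p)\,x$, giving the stated condition. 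I expect this curvature bookkeeping, rather than the gauge fixing, to be the main obstacle: $C^{(1)}$ is the algebraically heaviest coefficient, and one must keep careful track of the $z$- and $x$-weighted contributions (or, if computing from (\ref{curv}) directly, of the $\eth$-derivatives of $Q_{\dot{A}\dot{B}}$) to confirm that everything but the single combination $3N_{pp}-2M_{0}^{2}p$ cancels.
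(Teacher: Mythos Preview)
Your proof is correct and follows essentially the same route as the paper: impose $C^{(3)}=C^{(2)}=0$ on the Einstein metric of Theorem~\ref{theorem_Einstein_dege_x_Dnn_pp_mm} to force $A=0$ and $M_{pp}=0$, use the residual gauge $p'=g(q)p+h(q)$ to reduce $M$ to $M_{0}p$, and then read off the type-$[\mathrm{N}]$ condition from $C^{(1)}$. The only cosmetic differences are that the paper derives the transformation rules for the pieces $m,n$ of $M=mp+n$ directly (they are $q'_{q}g\,m'=m$ and $q'_{q}n'=n-\frac{h}{g}m+\frac{q'_{q}}{g^{2}}\partial_{q}\bigl(g^{2}/q'_{q}\bigr)$, not quite the formula you quote by analogy with the non-Einstein case, though the conclusion that $n$ can be gauged away is unchanged) and computes $C^{(1)}$ from scratch, obtaining $-\tfrac{1}{2}x^{-7}C^{(1)}=N_{pp}-\tfrac{2}{3}M_{0}^{2}p$, whereas you obtain the same result by specializing the non-Einstein type-$[\mathrm{N}]$ expression with $C=0$, $B=\Lambda/3$, $T_{0}\to M_{0}/(2\Lambda)$.
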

\begin{proof}
$C^{(2)}=0$ implies $M_{pp}=0$. Hence, $M = m(q) p + n(q)$. Transformation laws for $m$ and $n$ read
\begin{eqnarray}
q'_{q} g m' &=& m
\\ \nonumber
q'_{q} n' &=& n - \frac{h}{g} m + \frac{q'_{q}}{g^2} \frac{\partial}{\partial q} \left( \frac{g^2}{q'_{q}} \right)
\end{eqnarray}
Arbitrariness of the gauge functions $q'_{q}$ and $g$ allows to gauge away $n$ and brought $m$ to a constant value, $m=M_{0} = \{ -1, 0, 1 \}$. Hence, $M=M_{0}p$. Coefficient $C^{(1)}$ reads
\begin{equation}
-\frac{1}{2} x^{-7} C^{(1)} = N_{pp} - \frac{2}{3} M_{0}^{2}p
\end{equation}
Thus, $3N_{pp} - 2 M_{0}^{2}p \ne 0$.
\end{proof}


\section{Spaces of the types $\{ [\textrm{deg}]^{e} \otimes [\textrm{D}]^{nn},[++,++] \}$} 
\label{sekcja_para_Kahler_complicated}
\setcounter{equation}{0}

\subsection{Non-Einstein case} 
\label{sekcja_para_Kahler_complicated_nonEinstein}

In this Section we consider the second class of para-Kähler spaces, i.e., spaces of the types $\{ [\textrm{deg}]^{e} \otimes [\textrm{D}]^{nn},[++,++] \}$. Twists $\varrho_{1}$ and $\varrho_{2}$ are nonzero within this class. Thus,  Lemmas \ref{Lemat_kongruencja_pierwsza_twisting} and \ref{Lemat_kongruencja_druga_twisting} hold true simultaneously. Hence, the coordinate system $(q,p,w,z)$ seems to be the optimal one. For the sake of transparency of the considerations we proceed according to the following strategy\footnote{Alternatively, the metric of spaces of the types  $\{ [\textrm{deg}]^{e} \otimes [\textrm{D}]^{nn},[++,++] \}$ can be obtained from Theorem \ref{theorem_dege_x_degne_pp_pp_and_pp_mp} by demanding that $\mathcal{C}_{n^{\dot{A}}}$ is nonexpanding. However, such an approach is more complicated.}
\begin{enumerate}[label=(\roman*)]
\item We deal with Eqs. $M_{1}=0$ (\ref{pomocnicze_2}) and $N_{1}=0$ (\ref{wiezy_drugiej_kongruencji}). Each of these equations gives a solution for $\phi$. Consequently, we find $\phi$ and an equation which we call "the first consistency condition".
\item We solve the first consistency condition.
\item We consider Eqs. (\ref{rozwiazanie_rownania_pierwszej_struny_2}) and (\ref{rozwiazanie_drugiego_drug_rownanianstrun}) with $M_{2}=0$ (\ref{pomocnicze_3}) and $N_{2}=0$ (\ref{wiezy_drugiej_kongruencji_3}). These are four conditions for three functions $\mathcal{A}$, $\mathcal{Q}$ and $\mathcal{B}$. We obtain solutions for $\mathcal{A}$, $\mathcal{Q}$ and $\mathcal{B}$ and a single equation which we call "the second consistency condition".
\item We solve the second consistency condition.
\end{enumerate}
If (i)-(iv) are completed we are ready to write down the metric. Note, that equations $M_{1}=0$ and $N_{1}=0$ automatically imply that $\theta_{1} \ne 0$ and $\theta_{2} \ne 0$.

\subsubsection{The first consistency condition}

From equation $M_{1}=0$ it follows that $\phi = F (q,p,z) ( x - \Sigma_{z} )$ and this solution is written in the coordinate system $(q,p,x,z)$. Analogously, from equation $N_{1}=0$ one finds that $\phi = G (q,p,w) ( H_{w} - y )$ and this solution is expressed in the coordinate system $(q,p,w,y)$. Both these solutions can be easily transformed into the coordinate system $(q,p,w,z)$ via (\ref{jawne_wzory_na_xiy}). Thus we find the first consistency condition
\begin{equation}
\label{consistency_condition_1}
F (q,p,z) \left( \frac{H-w \Sigma}{1-zw} - \Sigma_{z} \right) = G (q,p,w) \left( H_{w} - \frac{\Sigma -z H} {1-zw} \right)
\end{equation}
Transformation formulas for $\Sigma (q,p,z)$, $H(q,p,w)$ and $F(q,p,z)$ are given by (\ref{transformacja_na_Sigma_ogolna}), (\ref{transformacja_na_H}) and (\ref{transformacja_na_F}), respectively. Transformation for $G(q,p,w)$ reads
\begin{equation}
\label{transformacja_na_G}
G' = \frac{\lambda^{\frac{1}{2}} \Delta G}{p'_{p} - p'_{q} w}
\end{equation}

Playing algebraically with (\ref{consistency_condition_1}) we find two conditions helpful in further analysis
\begin{subequations}
\begin{eqnarray}
\label{wniosek_z_consistency_condition_1_1}
&& F_{zz} H_{ww} = - \Sigma_{zz} G_{ww}
\\ 
\label{wniosek_z_consistency_condition_1_2}
&& G^{3} H_{ww}  = - F^{3} \Sigma_{zz}
\end{eqnarray}
\end{subequations}
(we skip the proofs of (\ref{wniosek_z_consistency_condition_1_1}) and (\ref{wniosek_z_consistency_condition_1_2}) as they are quite straightforward). 

\begin{Lemat}
\label{Lemat_o_wyborze_dowolnych_funkcji_specjalnyprzypadek}
If $\Sigma_{zz}=0 \ \Longleftrightarrow \ H_{ww}=0$ then the functions $F$, $\Sigma$, $G$ and $H$ can be brought to the form 
\begin{equation}
\label{gaug_dla_classI}
\Sigma=0, \ H=w, \ F=1, \ G=w
\end{equation}
without any loss of generality.
\end{Lemat}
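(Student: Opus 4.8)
The plan is to use the residual gauge freedom (\ref{gauge}) to normalise $\Sigma,H,F,G$ one at a time, reading off each step from the transformation laws (\ref{transformacja_na_Sigma_ogolna}), (\ref{transformacja_na_H}), (\ref{transformacja_na_F}) and (\ref{transformacja_na_G}) together with the first consistency condition (\ref{consistency_condition_1}). First I would record that the hypothesis $\Sigma_{zz}=0$ makes $\Sigma=a(q,p)+b(q,p)\,z$ affine in $z$, while (\ref{wniosek_z_consistency_condition_1_2}) (with $F,G\neq0$) forces $H_{ww}=0$, so that $H=c(q,p)+d(q,p)\,w$ is affine in $w$; since $\Sigma$ and $H$ cannot both vanish, $H\not\equiv0$ in this branch.

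The first step is to gauge $\Sigma$ away. Imposing $\Sigma'=0$ in (\ref{transformacja_na_Sigma_ogolna}), multiplying by $q'_pz-q'_q$ and substituting the affine $\Sigma$, the coefficients of $z^0$ and $z^1$ give the linear system $p'_q\,\sigma^{\dot1}-q'_q\,\sigma^{\dot2}=\lambda^{-1}a$, $-p'_p\,\sigma^{\dot1}+q'_p\,\sigma^{\dot2}=\lambda^{-1}b$, whose determinant equals $-\Delta\neq0$. Hence for any choice of $q',p',\lambda$ the functions $\sigma^{\dot1},\sigma^{\dot2}$ are uniquely fixed so that $\Sigma=0$. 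Feeding $\Sigma=0$ back into (\ref{transformacja_na_Sigma_ogolna}) shows (same determinant argument) that a further transformation preserves $\Sigma=0$ only if $\sigma^{\dot1}=\sigma^{\dot2}=0$, so the admissible residual gauge is $\sigma^{\dot A}=0$ with $q'(q,p),p'(q,p),\lambda(q,p)$ still free.

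Next I would set $H=w$. With $\sigma^{\dot A}=0$, (\ref{transformacja_na_H}) reads $H'=\lambda^{-1}H/(p'_p-p'_qw)$, and demanding $H'=w'$ reduces, after clearing the common denominator $p'_p-p'_qw$, to $\lambda^{-1}(c+dw)=q'_qw-q'_p$, i.e. $q'_q=\lambda^{-1}d$ and $q'_p=-\lambda^{-1}c$. These determine $q'$ once $\lambda$ is chosen to satisfy the integrability condition $\partial_p(\lambda^{-1}d)+\partial_q(\lambda^{-1}c)=0$, a linear first-order PDE for $\lambda^{-1}$ that is locally solvable by characteristics because $(c,d)\neq(0,0)$. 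With $\Sigma=0$ and $H=w$ the consistency condition (\ref{consistency_condition_1}) collapses to $Fw=G$; as $G$ is independent of $z$, this forces $F=\mathcal{F}(q,p)$ to be independent of $z$. Finally, inside the gauge that preserves both $\Sigma=0$ and $H=w$ (namely $\sigma^{\dot A}=0$, $q'_p=0$, $\lambda=1/q'_q$, with $p'$ free) the law (\ref{transformacja_na_F}) becomes $F'=p'_p\,\lambda^{1/2}\mathcal{F}$, so taking $q'=q$ and $p'_p=1/\mathcal{F}$ yields $F=1$; the relation $Fw=G$ (equivalently (\ref{transformacja_na_G})) then gives $G=w$, which is (\ref{gaug_dla_classI}).

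The main obstacle is the bookkeeping forced by the coupling: each of the transformation laws involves several gauge functions, so every normalisation must be performed within the residual gauge left by the previous ones, and one must check that this residual gauge is still rich enough. Concretely, the two delicate points are (i) verifying that annihilating the affine $\Sigma$ consumes exactly the two functions $\sigma^{\dot A}$, which is precisely where $-\Delta\neq0$ enters, and (ii) securing the local solvability of the first-order PDE for $\lambda^{-1}$ that renders the system $q'_q=\lambda^{-1}d$, $q'_p=-\lambda^{-1}c$ integrable; both are arranged by the freedom still available at the corresponding stage.
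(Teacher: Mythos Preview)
Your proof is correct and follows essentially the same route as the paper: first gauge away the affine $\Sigma$ with $\sigma^{\dot A}$, then bring the affine $H$ to $w$ using $q'$ and $\lambda$, read off $F=F(q,p)$ from the collapsed consistency condition, and finally normalise $F$ to $1$ with $p'$. You are in fact more careful than the paper at two places---you display the $2\times2$ system for $\sigma^{\dot A}$ with determinant $-\Delta\neq0$, and you spell out the integrability condition $\partial_p(\lambda^{-1}d)+\partial_q(\lambda^{-1}c)=0$ that makes the choice $q'_q=\lambda^{-1}d$, $q'_p=-\lambda^{-1}c$ consistent---whereas the paper simply asserts that each step can be done.
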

\begin{proof}
If $\Sigma_{zz}=0$ then from (\ref{wniosek_z_consistency_condition_1_2}) it follows that $H_{ww}=0$ and vice versa. Thus, the equivalence $\Sigma_{zz}=0 \ \Longleftrightarrow \ H_{ww}=0$ is proved.

If $\Sigma_{zz}=0$ then $\Sigma$ is linear in $z$ with coefficients which depend on $(q,p)$ only. Hence, the gauge functions $\sigma^{\dot{A}}$ can be used to gauge away $\Sigma$ (compare (\ref{transformacja_na_Sigma_ogolna})). After this step the gauge (\ref{gauge}) is restricted to such that $\sigma^{\dot{A}}=0$. Function $H$ becomes linear in $w$. Consequently, it can be simplified to the form $H=w$ (compare (\ref{transformacja_na_H})). The choice $H=w$ imposes additional conditions on gauge (\ref{gauge}), namely, $q'_{p}=0$ and $\lambda^{-1} =q'_{q}$. With $\Sigma=0$ and $H=w$ the first consistency condition (\ref{consistency_condition_1}) reduces to the equation $wF(q,p,z)=G(q,p,w)$. Thus, $F=F(q,p)$. Finally, $F$ can be brought to $1$ (compare (\ref{transformacja_na_F})) what yields $G=w$.
\end{proof}

\begin{Uwaga}
\normalfont
The choice (\ref{gaug_dla_classI}) leaves us with the gauge (\ref{gauge}) restricted to the conditions 
\begin{eqnarray}
\label{gauge_restricted_case_pppp_special}
&& q'=q'(q), \ \frac{dq'}{dq} := f(q), \ \lambda = \frac{1}{f}, \ p'=f^{\frac{1}{2}} p + h(q), 
\\ \nonumber
&& x'=f^{\frac{1}{2}} x, \ y' = y + f^{-\frac{1}{2}} p'_{q}x , \ \sigma^{\dot{A}}=0
\end{eqnarray}
where $f=f(q) \ne 0$ and $h=h(q)$ are arbitrary gauge functions.
\end{Uwaga}

\begin{Uwaga}
\normalfont
Note, that the choice (\ref{gaug_dla_classI}) implies
\begin{equation}
\label{x_i_y_dla_class_I}
\phi=x ; \ \ \ \ \ \ x=\frac{w}{1-zw}, \ y=- \frac{zw}{1-zw} \ \Longleftrightarrow \ z=-\frac{y}{x}, \ w= \frac{x}{1-y}
\end{equation}
\end{Uwaga}

\begin{Uwaga}
\normalfont
 The choice (\ref{gaug_dla_classI}) is not the only one possible for the case with $\Sigma_{zz}=H_{ww}=0$. However, the other choices (including "the most symmetric one", namely $\Sigma=H=1$, $F=z-1$, $G=1-w$) lead to metrics of more complicated form.
 \end{Uwaga}

\begin{Lemat}
\label{lemat_o_specjalnym_podprzypadku}
The following statements are equivalent
\begin{eqnarray}
\nonumber
(i) && \Sigma_{zz}=0 
\\ \nonumber
(ii) && C_{11\dot{A}\dot{B}}=0
\end{eqnarray}
\end{Lemat}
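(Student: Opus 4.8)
The starting point is the curvature formula (\ref{Einst}), which gives $C_{11\dot A\dot B} = -\phi^{-1}\partial_{\dot A}\partial_{\dot B}\phi$. Since $\phi\ne 0$, condition $(ii)$ is equivalent to the vanishing of the Hessian of $\phi$ with respect to the coordinates $(x,y)=(p^{\dot 1},p^{\dot 2})$, i.e. $\phi_{xx}=\phi_{xy}=\phi_{yy}=0$. The plan is to translate this into a statement about $\Sigma$ using the explicit solution of $M_{1}=0$. Within the present class we have $\phi=F(q,p,z)(x-\Sigma_{z})$, while $z$ is the implicit function of $(x,y)$ defined by $y=-xz+\Sigma(q,p,z)$, so that $z_{x}=z/(\Sigma_{z}-x)$ and $z_{y}=1/(\Sigma_{z}-x)$ as in Lemma \ref{Lemat_kongruencja_pierwsza_twisting}.

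First I would compute the Hessian of $\phi$ by the chain rule. A short calculation gives
$$\phi_{yy}=P, \qquad \phi_{xy}=zP-\frac{F\Sigma_{zz}}{(x-\Sigma_{z})^{2}},$$
where
$$P:=\frac{F_{zz}}{x-\Sigma_{z}}-\frac{F_{z}\Sigma_{zz}+F\Sigma_{zzz}}{(x-\Sigma_{z})^{2}}-\frac{F\Sigma_{zz}^{2}}{(x-\Sigma_{z})^{3}}$$
(and $\phi_{xx}=z^{2}P-2zF\Sigma_{zz}/(x-\Sigma_{z})^{2}$, which I will not need). This is the routine but slightly tedious core computation; I would organise it by passing to the variables $(x,z)$, in which $\partial_{x}|_{y}=\partial_{x}|_{z}-\bigl(z/(x-\Sigma_{z})\bigr)\partial_{z}|_{x}$ and $\partial_{y}|_{x}=-\bigl(1/(x-\Sigma_{z})\bigr)\partial_{z}|_{x}$, to keep the bookkeeping under control.

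For $(ii)\Rightarrow(i)$ the conclusion is then immediate: $\phi_{yy}=0$ forces $P=0$, and feeding this into $\phi_{xy}=0$ leaves $F\Sigma_{zz}/(x-\Sigma_{z})^{2}=0$; since $F\ne 0$ this yields $\Sigma_{zz}=0$. For $(i)\Rightarrow(ii)$ I would invoke Lemma \ref{Lemat_o_wyborze_dowolnych_funkcji_specjalnyprzypadek}: when $\Sigma_{zz}=0$ the gauge may be fixed as in (\ref{gaug_dla_classI}), whereupon (\ref{x_i_y_dla_class_I}) gives $\phi=x$ and hence $\partial_{\dot A}\partial_{\dot B}\phi=0$, i.e. $C_{11\dot A\dot B}=0$. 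Because the residual frame freedom acts through the upper-triangular matrix $L^{A}_{\ B}$ of (\ref{transformacja_spinorowa}) (whose first column is proportional to $[1,0]$), the component $C_{11\dot A\dot B}$ transforms into a multiple of itself, so its vanishing is gauge independent and $(ii)$ holds in the original frame as well.

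The main obstacle is the reconciliation hidden in that last step. In a generic gauge the direct computation gives $C_{11\dot A\dot B}=0$ only when both $\Sigma_{zz}=0$ and $P=0$ (the latter amounting to $F_{zz}=0$ once $\Sigma_{zz}=0$), so a priori the conditions $\Sigma_{zz}=0$ and $C_{11\dot A\dot B}=0$ might seem to differ by the extra requirement $F_{zz}=0$. The point to be made carefully is that $\Sigma_{zz}=0$ already forces $F_{zz}=0$ within this class: by Lemma \ref{Lemat_o_wyborze_dowolnych_funkcji_specjalnyprzypadek} one may reach the standard gauge $\Sigma=0$, $F=1$, and the transformation laws (\ref{transformacja_na_Sigma_ogolna}) and (\ref{transformacja_na_F}) keep $\Sigma$ and $F$ affine in $z$ (one checks that $z'q'_{p}+p'_{p}$ is linear in $z'$ while the remaining factors depend only on $(q,p)$), so $F_{zz}=0$ persists. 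Routing $(i)\Rightarrow(ii)$ through the normal form $\phi=x$ together with the invariance of $C_{11\dot A\dot B}=0$ is exactly what sidesteps this delicacy, which is why I would organise the write-up that way rather than through the explicit $P$.
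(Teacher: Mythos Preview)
Your proof is correct. For $(i)\Rightarrow(ii)$ you follow exactly the paper's route via Lemma \ref{Lemat_o_wyborze_dowolnych_funkcji_specjalnyprzypadek} and the normal form $\phi=x$. For $(ii)\Rightarrow(i)$, however, you take a genuinely different path: you compute the $(x,y)$-Hessian of $\phi=F(x-\Sigma_{z})$ explicitly through the chain rule and read off $\Sigma_{zz}=0$ directly from $\phi_{yy}=\phi_{xy}=0$, with no gauge fixing at all. The paper instead integrates $\partial_{\dot A}\partial_{\dot B}\phi=0$ to $\phi=J_{\dot A}p^{\dot A}+\kappa$, gauges to $\phi=x$, and then matches this against $\phi=F(x-\Sigma_{z})$ to conclude $F=1$, $\Sigma_{z}=0$ in that gauge. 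Your approach buys a gauge-free argument for this implication at the cost of the explicit Hessian computation; the paper's approach is shorter and more structural but leans (implicitly) on the gauge invariance of the condition $\Sigma_{zz}=0$. Your closing paragraph on why $\Sigma_{zz}=0$ forces $F_{zz}=0$ in the $[++,++]$ class is correct and a nice clarification, though for the proof as you have organised it this observation is a side remark rather than a needed step.
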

\begin{proof}
If $\Sigma_{zz}=0$ then (\ref{gaug_dla_classI}) holds true and $\phi=x$. Thus, $-\phi C_{11\dot{A}\dot{B}} = \partial_{\dot{A}} \partial_{\dot{B}} x = \partial_{\dot{A}} ( \delta_{\dot{B}}^{\dot{1}})=0$. It proves implication $(i) \ \Longrightarrow \ (ii)$.

If $C_{11\dot{A}\dot{B}}=0$ then $\phi$ takes the form $\phi = J_{\dot{A}} p^{\dot{A}} + \kappa$ where $J_{\dot{A}}=J_{\dot{A}} (q,p) \ne 0$ and $\kappa = \kappa (q,p)$ are arbitrary functions. $J_{\dot{A}}$ must be nonzero, otherwise SD $\mathcal{C}$ is nonexpanding what is a contradiction. From (\ref{gauge}) we find the transformation formulas for $J_{\dot{A}}$ and $\kappa$
\begin{equation}
\label{transformacja_na_J_kappa}
J'_{\dot{A}} = \lambda^{\frac{1}{2}} J_{\dot{B}} \frac{\partial q'_{\dot{A}}}{\partial q_{\dot{B}}}, \ \kappa' =  \lambda^{-\frac{1}{2}} \kappa - J'_{\dot{A}} \sigma^{\dot{A}}
\end{equation}
Hence, one puts $J_{\dot{1}}=1$, $J_{\dot{2}}=0=\kappa$ without any loss of generality. It implies $\phi=x$. It corresponds to the solution $\phi=F(x-\Sigma_{z})$ with $F=1$ and $\Sigma_{z}=0$. Thus, the proof of implication $(ii) \ \Longrightarrow \ (i)$ is completed.
\end{proof}

From Lemma (\ref{lemat_o_specjalnym_podprzypadku}) it follows that the case $\Sigma_{zz}=H_{ww}=0$ ($\Longrightarrow$ (\ref{gaug_dla_classI})) is not a general case because it implies $C_{11\dot{A}\dot{B}}=0$. Hence, the traceless Ricci tensor is not general anymore. Thus, (\ref{gaug_dla_classI}) does not lead to the general metric of para-Kähler, type $\{ [\textrm{deg}]^{e} \otimes [\textrm{D}]^{nn},[++,++] \}$ spaces. However, the choice (\ref{gaug_dla_classI}) will be treated as a starting point for para-Kähler, Einstein spaces of the types $\{ [\textrm{deg}]^{e} \otimes [\textrm{D}]^{nn},[++,++] \}$ (see Section \ref{subsekcja_degexdegnn_pp_pp_Einstein}).

In the rest of this Section we assume $\Sigma_{zz}\ne 0$, $H_{ww} \ne 0$. 

\begin{Lemat}
\label{Lemat_o_ogolnym_przypadku}
Assume $\Sigma_{zz}\ne 0$, $F_{zz} \ne 0$, $G_{ww} \ne 0$ and $H_{ww} \ne 0$. Then functions $\Sigma$, $F$, $G$ and $H$ can be brought to the form
\begin{equation}
\label{gaug_dla_classII_a}
\Sigma=F = A(q,p) \, z^{\frac{1}{2}}, \ -G = H = A(q,p) \, w^{\frac{1}{2}}
\end{equation}
without any loss of generality. $A=A(p,q)$ is an arbitrary nonzero function.
\end{Lemat}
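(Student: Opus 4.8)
The plan is to decouple the first consistency condition into ordinary differential equations in $z$ and in $w$ separately, solve them, and then exhaust the gauge freedom to normalize the integration constants to the canonical values.

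First I would extract the separated form of the consistency condition. Since $F=F(q,p,z)$, $G=G(q,p,w)$, $\Sigma=\Sigma(q,p,z)$ and $H=H(q,p,w)$, the relation (\ref{wniosek_z_consistency_condition_1_2}), namely $G^{3}H_{ww}=-F^{3}\Sigma_{zz}$, has its left-hand side depending only on $(q,p,w)$ and its right-hand side only on $(q,p,z)$; since $z$ and $w$ are independent, both sides must equal a single function $\mu=\mu(q,p)$. In the same way (\ref{wniosek_z_consistency_condition_1_1}) gives $F_{zz}/\Sigma_{zz}=-G_{ww}/H_{ww}=\nu(q,p)$. Under the standing assumptions $\Sigma_{zz}\ne0$, $F_{zz}\ne0$, $G_{ww}\ne0$, $H_{ww}\ne0$ one has $\mu\ne0$ and $\nu\ne0$, and the four functions are governed by $F^{3}\Sigma_{zz}=-\mu$, $G^{3}H_{ww}=\mu$, $F_{zz}=\nu\Sigma_{zz}$, $G_{ww}=-\nu H_{ww}$.

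Next I would integrate these. Combining the first and third relations yields the autonomous equation $F^{3}F_{zz}=-\nu\mu$, whose first integral is $F_{z}^{2}=C_{1}+\mu\nu/F^{2}$ with $C_{1}=C_{1}(q,p)$; equivalently $P:=F^{2}$ satisfies $P_{zz}=2C_{1}$, so $F^{2}$ is at most quadratic in $z$ with leading coefficient $C_{1}$. A double integration of $F_{zz}=\nu\Sigma_{zz}$ gives the affine link $\Sigma=\nu^{-1}F+a(q,p)z+b(q,p)$, and the analogous statements hold for $G$ and $H$ in $w$. A direct substitution shows that the target (\ref{gaug_dla_classII_a}), $\Sigma=F=A\,z^{1/2}$, $-G=H=A\,w^{1/2}$, solves this system precisely in the degenerate case $C_{1}=0$ (where $F^{2}$ collapses to a linear function with a single root), with $\mu=\tfrac14A^{4}$ and $\nu=1$. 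The task is therefore to reach $C_{1}=0$, $\nu=1$, $a=b=0$, and likewise for the $w$-sector, by a suitable gauge.

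Then I would exhaust the gauge freedom. The coordinate change acts on $z$ by the $(q,p)$-dependent M\"obius map (\ref{transformacja_na_z}) and multiplies $F$ according to (\ref{transformacja_na_F}); together these transport the two roots of the quadratic $F^{2}(z)$. Choosing $q'_{q}/q'_{p}$ equal to one of the roots sends that root to infinity, so the multiplicative factor $(q'_{q}-q'_{p}z)^{-2}$ cancels the corresponding linear factor and the image $F'^{2}$ becomes linear in $z'$, i.e. $C_{1}\mapsto0$; a translation of $z'$ then places the remaining root at the origin, and the overall scaling built from $\lambda$ and $\Delta$ fixes the leading coefficient, giving $F=A\,z^{1/2}$. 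The additive gauge functions $\sigma^{\dot{A}}$ enter $\Sigma$ through (\ref{transformacja_na_Sigma_ogolna}) and are used to kill the affine remainder $az+b$ and to normalize $\nu=1$, so that $\Sigma=F$; the $w$-sector is treated identically using (\ref{transformacja_na_G}), (\ref{transformacja_na_H}) and the map $w\mapsto w'$. This mirrors the degenerate reduction already carried out in Lemma~\ref{Lemat_o_wyborze_dowolnych_funkcji_specjalnyprzypadek}.

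The main obstacle is consistency between the two sectors: a single coordinate transformation $q'^{\dot{A}}(q,p)$, together with $\lambda$ and $\sigma^{\dot{A}}$, must simultaneously normalize $F,\Sigma$ and $G,H$, so the root data of $F^{2}(z)$ and of $H^{2}(w)$ cannot be adjusted independently. Here the consistency condition (\ref{consistency_condition_1}) — equivalently the pair (\ref{wniosek_z_consistency_condition_1_1})--(\ref{wniosek_z_consistency_condition_1_2}) — does the work: it forces the \emph{same} $\mu$ and $\nu$ on both sides, hence the same constant $A$, and through the coupling of $z$ and $w$ in (\ref{jawne_wzory_na_xiy}) it links the two M\"obius actions to one admissible choice of $q'^{\dot{A}}$. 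The delicate point to verify is that the residual gauge functions are \emph{exactly} sufficient — neither over- nor under-determined — to achieve (\ref{gaug_dla_classII_a}) while respecting $\Sigma_{zz}\ne0$ and $H_{ww}\ne0$; I would settle this by a careful count against the transformation laws rather than by attacking (\ref{consistency_condition_1}) head-on.
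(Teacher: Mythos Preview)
Your reduction to the separated pair (\ref{wniosek_z_consistency_condition_1_1})--(\ref{wniosek_z_consistency_condition_1_2}), the autonomous ODE $F^{3}F_{zz}=-\nu\mu$, and the conclusion that $F^{2}$ and $G^{2}$ are at most quadratic in $z$ and $w$ is exactly how the paper begins. The gap is in what follows. You write that (\ref{consistency_condition_1}) is ``equivalently the pair (\ref{wniosek_z_consistency_condition_1_1})--(\ref{wniosek_z_consistency_condition_1_2})'': it is not. Those two relations are derived \emph{consequences} of (\ref{consistency_condition_1}) and do not exhaust its content. The paper substitutes the affine ans\"atze $\Sigma=aF+bz+c$, $H=-aG+mw+n$ back into (\ref{consistency_condition_1}) and extracts, by differentiating the resulting identity, the cross-relations $m=c$, $n=b$, and then (from the reduced equation (\ref{consistency_condition_postep_2})) the coupling $G^{2}=Nw^{2}+2Aw+M$ when $F^{2}=Mz^{2}+2Az+N$, i.e.\ the coefficients are \emph{swapped}. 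None of this is visible from (\ref{wniosek_z_consistency_condition_1_1})--(\ref{wniosek_z_consistency_condition_1_2}) alone.

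These cross-relations are precisely what makes the simultaneous gauge-fixing work, and a parameter count cannot replace them. Your M\"obius move on $z$ sends one root of $F^{2}$ to infinity and the other to $0$; but the \emph{same} transformation $q'^{\dot A}(q^{\dot B})$ acts on $w$, and unless you already know that the roots of $G^{2}(w)$ are the reciprocals (in the appropriate sense) of the roots of $F^{2}(z)$ --- which is exactly the content of the swapped-coefficient relation --- there is no reason the induced action on $w$ should normalize $G^{2}$. Likewise, the two gauge functions $\sigma^{\dot A}$ must kill the affine tails of both $\Sigma$ and $H$, which is only possible because $m=c$ and $n=b$. So the step you propose to avoid --- ``attacking (\ref{consistency_condition_1}) head-on'' --- is the one that carries the argument; once you do it, the paper's transformation-law analysis (\ref{transformacje_na_funkcjee_pomocnicze}) shows the remaining six functions $a,b,c,M,N,A$ reduce to the single $A$.
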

\begin{proof}
If $\Sigma_{zz}\ne 0$ and $H_{ww} \ne 0$ then from (\ref{wniosek_z_consistency_condition_1_2}) it follows that $\Sigma_{zz} = \widetilde{\alpha} / F^{3}$, $H_{ww} = - \widetilde{\alpha} / G^{3}$, where $\widetilde{\alpha} = \widetilde{\alpha} (q,p)$ is an arbitrary nonzero function. Eq. (\ref{wniosek_z_consistency_condition_1_1}) reduces to $G^{3}G_{ww} = F^{3}F_{zz} =: \widetilde{\beta} (q,p)$ where $\widetilde{\beta}$ is an arbitrary nonzero function. Hence, $\Sigma_{zz} = aF_{zz}$ and $H_{ww} = -aG_{ww}$ where $a := \widetilde{\alpha} / \widetilde{\beta}$. Thus 
\begin{equation}
\label{wstepne_rozwiazanie_na_Sigma_przypadek_2_1}
\Sigma = aF + bz + c, \ H = -aG +mw +n
\end{equation}
where $a \ne 0$, $b$, $c$, $m$ and $n$ are functions of $(q,p)$. Inserting (\ref{wstepne_rozwiazanie_na_Sigma_przypadek_2_1}) into consistency condition (\ref{consistency_condition_1}) one arrives at the equation
\begin{eqnarray}
\label{consistency_condition_postep_1}
&& w \left( -a F^{2} + mF -cF +zaFF_{z}  \right) + nF - aFF_{z} -bF =
\\ \nonumber
&& = z \left( -bG - aG^{2}  +nG + awGG_{w}  \right) -cG - aGG_{w} + mG
\end{eqnarray}
$\partial_{w}^{2}$(\ref{consistency_condition_postep_1}) implies $m=c$ and $\partial^{2}_{z}$(\ref{consistency_condition_postep_1}) gives $n=b$. Inserting these conditions into (\ref{consistency_condition_postep_1}) and remembering that $a \ne 0$ we get
\begin{equation}
\label{consistency_condition_postep_2}
w (zFF_{z} - F^{2}) - FF_{z} = z(wGG_{w} - G^{2} ) - GG_{w}
\end{equation}
From (\ref{consistency_condition_postep_2}) it follows that $\partial_{z}^{3} (F^{2})=0$ and $\partial_{w}^{3} (G^{2})=0$. Thus, $F^{2} = Mz^{2} + 2Az +N$ and $G^{2} = P w^{2} + 2Bw + S$, where $M$, $A$, $N$, $P$, $B$ and $S$ are arbitrary functions of $(q,p)$. Feeding (\ref{consistency_condition_postep_2}) with these forms of $F^{2}$ and $G^{2}$ one finds $B=A$, $P=N$ and $S=M$. Finally, we arrive at the solutions
\begin{eqnarray}
&& F^{2} = Mz^{2} + 2A z +N, \ \Sigma = a F + bz +c,
\\ \nonumber
&& G^{2} = Nw^{2} + 2Aw +M, \  H= -aG + cw +b
\end{eqnarray}
Transformation formulas for $a$, $b$, $c$, $M$, $A$ and $N$ reads
\begin{eqnarray}
\label{transformacje_na_funkcjee_pomocnicze}
a' &=& \Delta^{-1} \lambda^{-\frac{3}{2}}   a
\\ \nonumber
b' &=& \lambda^{-1} \Delta^{-1} (q'_{q} b + q'_{p} c) + \sigma^{\dot{1}}
\\ \nonumber
c' &=& \lambda^{-1} \Delta^{-1} (p'_{q} b + p'_{p} c) + \sigma^{\dot{2}}
\\ \nonumber
M' &=& \lambda ({q'_{q}}^{2} M + 2q'_{q}q'_{p} A + {q'_{p}}^{2}N)
\\ \nonumber
N' &=& \lambda ({p'_{q}}^{2} M + 2p'_{q}p'_{p} A + {p'_{p}}^{2}N)
\\ \nonumber
A' &=& \lambda ( q'_{q} p'_{q} M + p'_{p} q'_{p} N + (q'_{q} p'_{p} + q'_{p} p'_{q}) A   )
\end{eqnarray}
Thorough analysis of (\ref{transformacje_na_funkcjee_pomocnicze}) proves that one puts $a=1$ and $b=c=M=N=0$ without any loss of generality. Re-defining $A$, namely $\sqrt{2A} \rightarrow A$, we arrive at (\ref{gaug_dla_classII_a}).
\end{proof}
\begin{Uwaga}
\normalfont
Note, that (\ref{gaug_dla_classII_a}) implies 
\begin{equation}
\label{x_i_y_dla_class_IIa}
x=\frac{Aw^{\frac{1}{2}}}{1+z^{\frac{1}{2}}w^{\frac{1}{2}}}, \ y=\frac{Az^{\frac{1}{2}}}{1+z^{\frac{1}{2}}w^{\frac{1}{2}}}, \ \phi= - \frac{A^{2}}{2} \frac{1-z^{\frac{1}{2}}w^{\frac{1}{2}}}{1+z^{\frac{1}{2}}w^{\frac{1}{2}}}
\end{equation}
\end{Uwaga}

Lemma \ref{Lemat_o_ogolnym_przypadku} holds true if $F_{zz} \ne 0$ and $G_{ww} \ne 0$. The case $F_{zz} = 0$ and $G_{ww} = 0$ must be analyzed separately.

\begin{Lemat}
Assume $\Sigma_{zz}\ne 0$, $H_{ww} \ne 0$, $F_{zz} = 0$ and $G_{ww} = 0$. Then the functions $F$, $\Sigma$, $H$ and $G$ can be brought to the form
\begin{equation}
\label{gaug_dla_classII_b}
F=1, \ G=-w, \ \Sigma = A  z^{2}, \ H = \frac{A}{w}
\end{equation}
without any loss of generality. $A=A(p,q)$ is an arbitrary nonzero function.
\end{Lemat}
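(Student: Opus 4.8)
The plan is to mirror the proof of Lemma~\ref{Lemat_o_ogolnym_przypadku}, but to exploit that here both $F$ and $G$ are \emph{linear} in their arguments. First I would record that $F_{zz}=0$ and $G_{ww}=0$ force $F=F_{0}(q,p)+F_{1}(q,p)\,z$ and $G=G_{0}(q,p)+G_{1}(q,p)\,w$. Since the left-hand side of (\ref{wniosek_z_consistency_condition_1_2}) depends only on $(q,p,w)$ while the right-hand side depends only on $(q,p,z)$, both sides must equal a single function $\gamma=\gamma(q,p)$, and $\gamma\neq0$ because $\Sigma_{zz}\neq0$. Thus $\Sigma_{zz}=-\gamma F^{-3}$ and $H_{ww}=\gamma G^{-3}$.

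The decisive simplification is to fix the gauge of $F$ \emph{before} integrating, because in a generic gauge the integrals of $\Sigma_{zz}$ and $H_{ww}$ are rational in $z$ and $w$ and the consistency condition (\ref{consistency_condition_1}) becomes unwieldy. Using the transformation law (\ref{transformacja_na_F}) together with the M\"obius action (\ref{transformacja_na_z}) on $z$, the coefficients $(F_{0},F_{1})$ transform linearly; solving the first-order linear PDE $F_{0}\,q'_{p}+F_{1}\,q'_{q}=0$ for the point transformation $q'=q'(q,p)$ makes $F$ independent of $z$, and a subsequent choice of $\lambda$ normalises it to $F=1$ (the degenerate position $F_{0}=0$ is handled by a M\"obius map sending the double root of $F^{2}$ to $z=\infty$; $F\equiv0$ is excluded since $\phi=F(x-\Sigma_{z})\neq0$). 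With $F=1$ one has $\Sigma_{zz}=-\gamma$, hence $\Sigma=A\,z^{2}+b\,z+c$ with $A:=-\tfrac12\gamma\neq0$ and $b,c$ functions of $(q,p)$.

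Next I would substitute $F=1$ and this quadratic $\Sigma$ into (\ref{consistency_condition_1}), clear the factor $1-zw$, and read off the resulting identity as a polynomial in $z$ of degree two. Matching the coefficient of $z^{2}$ gives $A\,w=-A\,G$, i.e.\ $G=-w$; matching the coefficient of $z^{0}$ yields $(wH)_{w}=2c\,w+b$, and comparison with $H_{ww}=\gamma G^{-3}=2A\,w^{-3}$ fixes the integration term so that $H=A\,w^{-1}+c\,w+b$, while the coefficient of $z^{1}$ is then an identity (a useful consistency check). The pleasant structural point is that the \emph{same} functions $b,c$ occurring in $\Sigma$ are forced to reappear in $H$.

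Finally I would remove $b$ and $c$ by a pure translation of the fibre coordinates, i.e.\ a gauge (\ref{gauge}) with trivial point transformation, $\lambda=1$ and $\sigma^{\dot A}\neq0$. From (\ref{transformacja_na_Sigma_ogolna}) such a transformation sends $\Sigma\mapsto\Sigma+z\,\sigma^{\dot1}+\sigma^{\dot2}$, while from (\ref{transformacja_na_H}) it sends $H\mapsto H+w\,\sigma^{\dot2}+\sigma^{\dot1}$; the single choice $\sigma^{\dot1}=-b$, $\sigma^{\dot2}=-c$ simultaneously annihilates the lower-order terms in both, leaving $\Sigma=A\,z^{2}$ and $H=A\,w^{-1}$, and it preserves $F=1$, $G=-w$ because (\ref{transformacja_na_F}) and (\ref{transformacja_na_G}) do not involve $\sigma^{\dot A}$. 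Relabelling $A$ then gives (\ref{gaug_dla_classII_b}). The genuine obstacle is the gauge normalisation of $F$: once the solvability of the PDE for $q'$ and the degenerate root positions are dealt with and $F=1$ is achieved, the rational consistency condition collapses to a polynomial identity in $z$ and all remaining matching is routine.
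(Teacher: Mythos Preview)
Your proposal is correct and follows essentially the same approach as the paper: gauge $F$ to $1$ via (\ref{transformacja_na_F}), use (\ref{wniosek_z_consistency_condition_1_2}) to see that $\Sigma$ is quadratic in $z$, and then extract $G=-w$ and $H=A/w$ from the consistency condition (\ref{consistency_condition_1}) after removing the lower-order terms with $\sigma^{\dot A}$. The only cosmetic difference is the order of the last two steps---the paper first gauges $\Sigma$ to $A z^{2}$ and then reads off $G$ and $H$, whereas you first read off $G$ and $H$ (picking up the same $b,c$) and then kill $b,c$ simultaneously in both $\Sigma$ and $H$ with a single choice of $\sigma^{\dot A}$; your ordering has the minor advantage of making explicit that this gauge does not spoil the already-determined $G$ and $H$.
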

\begin{proof}
$F_{zz}=0$ implies $F=a(q,p) z +b(q,p)$. From (\ref{transformacja_na_F}) it follows that without any loss of generality $F$ can be brought to a constant value, namely $F=1$. Eq. (\ref{wniosek_z_consistency_condition_1_2}) reduces to $G^{3} H_{ww} = - \Sigma_{zz}$. Thus, $\Sigma_{zz}$ is a function of $(q,p)$ only so $\Sigma = A(q,p)z^{2} +B(q,p)z+C(q,p)$, $A \ne 0$. From the formula (\ref{transformacja_na_Sigma_ogolna}) it follows that $B$ and $C$ can be gauged away, leaving us with $\Sigma = A z^{2}$. With these forms of $F$ and $\Sigma$, solutions for $G$ and $H$ can be easily obtained from (\ref{consistency_condition_1}). 
\end{proof}

\begin{Uwaga}
\normalfont
Note, that (\ref{gaug_dla_classII_b}) implies 
\begin{equation}
\label{x_i_y_dla_class_IIb}
x=\frac{1+zw}{w} A, \ y=-\frac{z}{w}A, \ \phi = \frac{1-zw}{w} A
\end{equation}
\end{Uwaga}

\subsubsection{The second consistency condition}

The next step is to solve the null string equations (\ref{kongruencja_mdotA_rownania_2}) and (\ref{kongruencja_CndotA_2}). Solutions of these equations are given by (\ref{rozwiazanie_rownania_pierwszej_struny_2}) (expressed in the coordinate system $(q,p,x,z)$) and (\ref{rozwiazanie_drugiego_drug_rownanianstrun}) (expressed in the coordinate system $(q,p,w,y)$).

Vanishing of the expansions of $\mathcal{C}_{m^{\dot{A}}}$ and $\mathcal{C}_{n^{\dot{A}}}$ implies $M_{2}=0$ (\ref{pomocnicze_3}) and $N_{2}=0$ (\ref{wiezy_drugiej_kongruencji_3}). Solution of Eq. (\ref{pomocnicze_3}) is given by the formula (\ref{solution_pomoc_naOiA}) (compare the proof of Theorem \ref{theorem_dege_x_degn_pp}). Solution of Eq. (\ref{wiezy_drugiej_kongruencji_3}) is obtained analogously. Thus
\begin{equation}
\label{rozwiazania_M20_N20}
\mathcal{Q} + z \mathcal{A} = \mathfrak{t}, \ \mathcal{Q} + w\mathcal{B} = \mathfrak{e}
\end{equation}
$\mathfrak{t}$ and $\mathfrak{e}$ stand for abbreviations
\begin{eqnarray}
\label{definicje_e_i_t}
\mathfrak{t} &:=& T \, (x-\Sigma_{z})^{2} + \frac{1}{F} (\Omega F_{z} - zF_{p} +F_{q}) (x-\Sigma_{z}) + \frac{1}{2} (\Sigma_{p} - \Omega \Sigma_{zz} +z \Sigma_{zp} - \Sigma_{zq})
\\ \nonumber
\mathfrak{e} &:=& E \, (y-H_{w})^{2} + \frac{1}{G} (R G_{w} +w G_{q} -G_{p}) (y-H_{w}) - \frac{1}{2} (RH_{ww} +H_{q} +wH_{wq}-H_{wp})
\end{eqnarray}
where $T=T(q,p,z)$, $E=E(q,p,w)$, $\Omega=\Omega (q,p,z)$ and $R=R(q,p,w)$. Combining (\ref{rozwiazania_M20_N20}) with (\ref{rozwiazanie_rownania_pierwszej_struny_2}) and (\ref{rozwiazanie_drugiego_drug_rownanianstrun}) one arrives at the solutions for $\mathcal{A}$, $\mathcal{B}$ and $\mathcal{Q}$
\begin{eqnarray}
\label{formulas_for_AQB_condition_2}
\mathcal{A} &=& -w (\mathcal{Q} + \mathfrak{e}) + R (y-H_{w}) +H_{p} - wH_{q} \equiv  \frac{1}{z} (\mathfrak{t} - \mathcal{Q})
\\ \nonumber
\mathcal{B} &=& -z( \mathcal{Q} + \mathfrak{t}) + \Omega (x-\Sigma_{z}) + z\Sigma_{p} - \Sigma_{q} \equiv  \frac{1}{w} (\mathfrak{e} - \mathcal{Q})
\\ \nonumber
(1-wz) \mathcal{Q} &=& \mathfrak{t}+zw \mathfrak{e} - z (R (y-H_{w}) +H_{p} - wH_{q})
\\ \nonumber
                   &=&  \mathfrak{e} + wz \mathfrak{t} - w (\Omega (x-\Sigma_{z}) + z\Sigma_{p} - \Sigma_{q})
\end{eqnarray}
Clearly, the second consistency condition reads
\begin{equation}
\label{consistency_condition_2}
(1-wz) (\mathfrak{t} - \mathfrak{e}) -z (R (y-H_{w}) +H_{p} - wH_{q}) +w (\Omega (x-\Sigma_{z}) + z\Sigma_{p} - \Sigma_{q}) =0
\end{equation}

\begin{Lemat}
\label{drugi_warunek_konsystencji_klasa_I}
Assume that (\ref{gaug_dla_classI}) holds true. Then (\ref{formulas_for_AQB_condition_2}) and (\ref{consistency_condition_2}) imply
\begin{eqnarray}
\label{wzory_na_AQB_Class_I}
(1-wz)^{3} \mathcal{Q} &=& zR - w^{2} \Omega
\\ \nonumber
(1-wz)^{3} \mathcal{A} &=& -R + w^{3} \Omega
\\ \nonumber
(1-wz)^{3} \mathcal{B} &=& -z^{2} R + w \Omega + w (1-wz)^{2} V
\end{eqnarray}
where $R=R(q,p,w)$, $\Omega  = \Omega (q,p,z)$ and $V=V(q,p)$ are arbitrary functions. 
\end{Lemat}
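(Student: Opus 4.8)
The plan is to specialize every ingredient of the preceding subsection to the gauge (\ref{gaug_dla_classI}) supplied by Lemma \ref{Lemat_o_wyborze_dowolnych_funkcji_specjalnyprzypadek}, to solve the second consistency condition in that gauge, and then to read off $\mathcal{A}$, $\mathcal{Q}$, $\mathcal{B}$ from (\ref{formulas_for_AQB_condition_2}). First I would insert $\Sigma=0$, $F=1$, $H=w$, $G=w$ into the abbreviations (\ref{definicje_e_i_t}). Since all derivatives of $\Sigma$ and $F$ vanish while $H_{w}=G_{w}=1$ and all remaining derivatives of $H$ and $G$ are zero, the two expressions collapse to
\[
\mathfrak{t}=Tx^{2}, \qquad \mathfrak{e}=E(y-1)^{2}+\frac{R}{w}(y-1).
\]
Recording from (\ref{x_i_y_dla_class_I}) the identities $x=w/(1-zw)$ and $y-1=-1/(1-zw)$ then turns $\mathfrak{t}$ and $\mathfrak{e}$ into explicit rational functions of $(z,w)$.

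The decisive step is to substitute these into the second consistency condition (\ref{consistency_condition_2}). In the gauge (\ref{gaug_dla_classI}) the bracketed combinations there reduce to $R(y-1)$ and $\Omega x$, and after clearing the factor $(1-zw)$ the $R$-contributions combine to leave only $R/w$, so the whole condition becomes $E=w^{2}(T+\Omega)+R/w$. The left-hand side depends on $(q,p,w)$ alone, whereas the only $z$-dependence on the right sits in $T+\Omega$; consistency therefore forces $T+\Omega$ to be independent of $z$. This is the one structural point of the argument: it manufactures the new arbitrary function $V=V(q,p)$ via $T+\Omega=:V$, equivalently $T=V-\Omega$, and pins down $E=w^{2}V+R/w$.

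Eliminating $T$ and $E$ in favour of $\Omega$, $R$ and $V$, I would substitute into the three formulas (\ref{formulas_for_AQB_condition_2}), again replacing $x$, $y$ by (\ref{x_i_y_dla_class_I}) and reducing over the common denominator $(1-zw)^{3}$. The algebra is routine but must track the $V$-terms carefully; it gives
\[
(1-zw)^{3}\mathcal{Q}=zR-w^{2}\Omega+w^{2}V(1+zw), \qquad (1-zw)^{3}\mathcal{A}=-R+w^{3}\Omega-2w^{3}V,
\]
\[
(1-zw)^{3}\mathcal{B}=-z^{2}R+w\Omega-2zw^{2}V.
\]

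The final, purely bookkeeping, step is to notice that the free data $R=R(q,p,w)$ and $\Omega=\Omega(q,p,z)$ may be relabelled by passing to $R+w^{3}V$ and $\Omega-V$ (still functions of $(q,p,w)$ and $(q,p,z)$ respectively, and a bijective change of free data for fixed $V$). This substitution cancels the $V$-terms in $\mathcal{Q}$ and $\mathcal{A}$ outright and reorganizes those in $\mathcal{B}$ into $w(1-wz)^{2}V$, producing exactly (\ref{wzory_na_AQB_Class_I}). I expect the handling of the consistency condition --- recognizing that a single scalar equation simultaneously fixes $E$ and $T$ and creates the new function $V$ --- to be the main obstacle, the remainder being controlled rational-function algebra together with a relabelling of arbitrary functions.
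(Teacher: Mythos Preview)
Your proposal is correct and follows essentially the same route as the paper: specialize $\mathfrak{t}$ and $\mathfrak{e}$ to the gauge (\ref{gaug_dla_classI}), use the second consistency condition to force $T+\Omega=:V(q,p)$ and $E=w^{2}V+R/w$, substitute back into (\ref{formulas_for_AQB_condition_2}), and then absorb the $V$-terms by the relabelling $R+w^{3}V\to R$, $\Omega-V\to\Omega$. The only cosmetic difference is that the paper keeps the combinations $(R+w^{3}V)$ and $(\Omega-V)$ grouped in the intermediate formulas (\ref{wzory_na_AQB_Class_I_przed_przedefiniowaniem}) so the relabelling is visible at a glance, whereas you expand first and then recombine; the content is identical.
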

\begin{proof}
With (\ref{gaug_dla_classI}) and (\ref{x_i_y_dla_class_I}), formulas (\ref{definicje_e_i_t}) simplify considerably and they yield
\begin{equation}
\label{te_uproszczone}
\mathfrak{t} = \frac{w^{2}T}{(1-zw)^{2}} , \ \mathfrak{e} = \frac{E}{(1-zw)^{2}}  - \frac{R}{w(1-zw)} 
\end{equation}
Consistency condition (\ref{consistency_condition_2}) gives
\begin{equation}
\label{rozwiazania_na_T_E_Class_I}
T + \Omega = \frac{1}{w^{2}} \left( E - \frac{1}{w}R \right) =: V(q,p) \ \ \Longrightarrow \ \ T=V-\Omega, \ E = w^{2} V + \frac{1}{w}R
\end{equation}
Thus, $\Omega$, $R$ and $V$ remain as independent functions. Inserting (\ref{rozwiazania_na_T_E_Class_I}) into (\ref{te_uproszczone}) and then into (\ref{formulas_for_AQB_condition_2}) one gets
\begin{eqnarray}
\label{wzory_na_AQB_Class_I_przed_przedefiniowaniem}
(1-wz)^{3} \mathcal{Q} &=& z(R+w^{3}V) - w^{2} (\Omega-V)
\\ \nonumber
(1-wz)^{3} \mathcal{A} &=& -(R+w^{3}V) + w^{3} (\Omega-V)
\\ \nonumber
(1-wz)^{3} \mathcal{B} &=& -z^{2} (R+w^{3}V) + w (\Omega-V) + w (1-wz)^{2} V
\end{eqnarray}
Re-defining functions $\Omega$ and $R$ according to $R + w^{3} V \rightarrow R$, $\Omega - V \rightarrow \Omega$ one arrives at (\ref{wzory_na_AQB_Class_I}).
\end{proof}

\begin{Lemat}
\label{drugi_warunek_konsystencji_klasa_IIa}
Assume that (\ref{gaug_dla_classII_a}) holds true. Then (\ref{formulas_for_AQB_condition_2}) and (\ref{consistency_condition_2}) imply
\begin{eqnarray}
\label{wzory_na_AQB_Class_IIa}
 (1+ z^{\frac{1}{2}}w^{\frac{1}{2}})^{2}(1-wz)\mathcal{Q} &=& \frac{1}{2} \Omega A w z^{-\frac{1}{2}} + \frac{1}{2} RA z w^{-\frac{1}{2}} + \frac{1}{4} V(1+zw) (1- z^{\frac{1}{2}}w^{\frac{1}{2}})^{2} \ \ \ \ \
 \\ \nonumber
 && + (2+ z^{\frac{1}{2}} w^{\frac{1}{2}} +2wz) (1-z^{\frac{1}{2}}w^{\frac{1}{2}}) (A_{p} z^{\frac{1}{2}} - A_{q} w^{\frac{1}{2}})
\\ \nonumber
 z(1+ z^{\frac{1}{2}}w^{\frac{1}{2}})^{2}(1-wz) \mathcal{A} &=& -\frac{1}{2} R A z w^{-\frac{1}{2}} - \frac{1}{2} \Omega A w^{2}z^{\frac{1}{2}} - \frac{1}{2} V zw (1-w^{\frac{1}{2}}z^{\frac{1}{2}})^{2}
 \\ \nonumber
  && + 3A_{q}  z w^{\frac{3}{2}} (1-z^{\frac{1}{2}}w^{\frac{1}{2}}) + A_{p} z w^{\frac{1}{2}} (1-z^{\frac{1}{2}}w^{\frac{1}{2}}) (1-z^{\frac{1}{2}}w^{\frac{1}{2}} + zw)
\\ \nonumber
 w(1+ z^{\frac{1}{2}}w^{\frac{1}{2}})^{2}(1-wz) \mathcal{B} &=& -\frac{1}{2} \Omega A w z^{-\frac{1}{2}} - \frac{1}{2}RA z^{2} w^{\frac{1}{2}} - \frac{1}{2} V zw (1-z^{\frac{1}{2}}w^{\frac{1}{2}})^{2}
 \\ \nonumber
 && -3A_{p} w z^{\frac{3}{2}} (1-z^{\frac{1}{2}}w^{\frac{1}{2}}) - A_{q} w z^{\frac{1}{2}} (1-z^{\frac{1}{2}}w^{\frac{1}{2}}) (1- z^{\frac{1}{2}}w^{\frac{1}{2}} + zw)
\end{eqnarray}
where $R=R(q,p,w)$, $\Omega  = \Omega (q,p,z)$, $A=A(q,p) \ne 0$ and $V=V(q,p)$ are arbitrary functions. 
\end{Lemat}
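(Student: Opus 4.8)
The plan is to reproduce, in the gauge (\ref{gaug_dla_classII_a}), the three-step argument already used in the proof of Lemma~\ref{drugi_warunek_konsystencji_klasa_I}; the only essential difference is that (\ref{gaug_dla_classII_a}) carries half-integer powers of $z$ and $w$, so the bookkeeping is substantially heavier. First I would record all the derivatives forced by (\ref{gaug_dla_classII_a}), namely $\Sigma_z=\tfrac12 A z^{-1/2}$, $\Sigma_{zz}=-\tfrac14 A z^{-3/2}$, $F_z=\tfrac12 A z^{-1/2}$, $H_w=\tfrac12 A w^{-1/2}$, $H_{ww}=-\tfrac14 A w^{-3/2}$, $G_w=-\tfrac12 A w^{-1/2}$ and so on, and combine them with the explicit $x$, $y$ taken from (\ref{x_i_y_dla_class_IIa}). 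Working in the variables $z^{1/2}$ and $w^{1/2}$ it is convenient to record the two identities
\begin{equation*}
x-\Sigma_z=-\frac{A\,(1-z^{1/2}w^{1/2})}{2 z^{1/2}(1+z^{1/2}w^{1/2})},\qquad y-H_w=-\frac{A\,(1-z^{1/2}w^{1/2})}{2 w^{1/2}(1+z^{1/2}w^{1/2})},
\end{equation*}
together with the factorisation $1-wz=(1-z^{1/2}w^{1/2})(1+z^{1/2}w^{1/2})$; these turn the subsequent cancellations into routine manipulations.

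With these substitutions the abbreviations $\mathfrak{t}$ and $\mathfrak{e}$ of (\ref{definicje_e_i_t}) become explicit expressions in $z^{1/2},w^{1/2}$ whose coefficients are built from $T$, $\Omega$, $A$, $A_p$, $A_q$ and from $E$, $R$, $A$, $A_p$, $A_q$, respectively. The next step is to insert them into the second consistency condition (\ref{consistency_condition_2}). Since $T=T(q,p,z)$ enters only through $\mathfrak{t}$ and $E=E(q,p,w)$ only through $\mathfrak{e}$, the identity, read as an equation in $z^{1/2}$ and $w^{1/2}$, forces the purely $z$-dependent and purely $w$-dependent contributions to balance, and this pins down $T$ and $E$ in terms of $\Omega$, $R$ and a single new arbitrary function $V=V(q,p)$ --- exactly the role played by (\ref{rozwiazania_na_T_E_Class_I}) in the Class~I calculation.

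Finally I would substitute the resolved $T$ and $E$ back into the three formulas of (\ref{formulas_for_AQB_condition_2}), clear the common factor $(1+z^{1/2}w^{1/2})^2(1-wz)$, and collect terms; after the same harmless redefinitions of $\Omega$ and $R$ that close the Class~I proof, this yields (\ref{wzory_na_AQB_Class_IIa}). The genuine obstacle is entirely computational: the half-powers $z^{1/2},w^{1/2}$ and the repeated $(1\pm z^{1/2}w^{1/2})$ denominators make the intermediate expressions bulky, and the one delicate point is to organise the consistency condition so that its $z$- and $w$-pieces truly separate, which is precisely what legitimises the introduction of the extra function $V$.
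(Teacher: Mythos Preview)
Your proposal is correct and follows precisely the route the paper itself indicates: the paper's proof is simply ``We skip the proof due to its similarity to that of Lemma~\ref{drugi_warunek_konsystencji_klasa_I} (although it is much more tedious),'' and your outline faithfully executes that three-step Class~I argument in the gauge (\ref{gaug_dla_classII_a}), with the helpful intermediate identities for $x-\Sigma_z$ and $y-H_w$ correctly computed. If anything, you supply more detail than the paper does.
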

\begin{proof}
We skip the proof due to its similarity to that of Lemma \ref{drugi_warunek_konsystencji_klasa_I} (although it is much more tedious).
\end{proof}

\begin{Lemat}
\label{drugi_warunek_konsystencji_klasa_IIb}
Assume that (\ref{gaug_dla_classII_b}) holds true. Then (\ref{formulas_for_AQB_condition_2}) and (\ref{consistency_condition_2}) imply
\begin{eqnarray}
\label{wzory_na_AQB_Class_IIb}
 (1-wz)\mathcal{Q} &=& \frac{(1+zw)(1-zw)^{2}}{w^{2}}A^{2}V - A\Omega - \frac{z}{w^{2}} AR - \frac{z(1-zw)(2+zw)}{2w} A_{p} \ \ \ 
 \\ \nonumber
 (1-wz) \mathcal{A} &=& - \frac{2(1-zw)^{2}}{w}A^{2}V + wA \Omega +\frac{1}{w^{2}}AR - (1-wz)A_{q} + \frac{(1-wz)(1+2zw)}{w} A_{p}
 \\ \nonumber
  (1-wz) \mathcal{B} &=& - \frac{2z(1-zw)^{2}}{w^{2}} A^{2}V  + \frac{1}{w} A \Omega + \frac{z^{2}}{w^{2}} AR + \frac{z^{2}(1-zw)}{w} A_{p}
\end{eqnarray}
where $R=R(q,p,w)$, $\Omega  = \Omega (q,p,z)$, $A=A(q,p) \ne 0$ and $V=V(q,p)$ are arbitrary functions. 
\end{Lemat}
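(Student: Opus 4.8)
The plan is to imitate the proof of Lemma~\ref{drugi_warunek_konsystencji_klasa_I} line for line, only with the Class~IIb gauge. First I would feed the choice (\ref{gaug_dla_classII_b}), i.e. $F=1$, $G=-w$, $\Sigma=Az^{2}$, $H=A/w$, together with the explicit $x$, $y$, $\phi$ of (\ref{x_i_y_dla_class_IIb}), into the definitions (\ref{definicje_e_i_t}) of $\mathfrak{t}$ and $\mathfrak{e}$. Since $F$ is constant and $G$ is simply $-w$, the entire $F$-derivative term and all but one of the $G$-derivative terms drop out; using $x-\Sigma_{z}=A(1-zw)/w$ and $y-H_{w}=A(1-zw)/w^{2}$ one is left with compact expressions in which $T$ multiplies $A^{2}(1-zw)^{2}/w^{2}$ and $E$ multiplies $A^{2}(1-zw)^{2}/w^{4}$, all remaining terms being polynomial in $z$, $w$ and $1/w$ through $\Omega$, $R$ and the first derivatives of $A$.

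Next I would substitute these into the second consistency condition (\ref{consistency_condition_2}). The decisive observation, which is the real content of the computation, is that the $\Omega$-terms, the $R$-terms and the $A_{q}$-terms all cancel identically, and after collecting the surviving $A_{p}$-pieces and recognising the factorisation $1-3zw+3z^{2}w^{2}-z^{3}w^{3}=(1-zw)^{3}$, the whole condition collapses to the single scalar relation
\[
A^{2}\left(T-\frac{E}{w^{2}}\right)=-\frac{1}{2}\,A_{p}.
\]
Because $T=T(q,p,z)$ and $E=E(q,p,w)$ while the right-hand side depends on $(q,p)$ only, differentiating in $z$ and in $w$ forces $T=T(q,p)$ and $E=w^{2}\,g(q,p)$. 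I would then set $V:=T$ and read off $E=w^{2}\big(V+A_{p}/(2A^{2})\big)$; thus $\Omega$, $R$, $A$ and $V$ survive as the free functions, exactly as asserted in the statement.

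Finally I would put the reduced $T$ and $E$ back into (\ref{definicje_e_i_t}) and then into (\ref{formulas_for_AQB_condition_2}). This is the tedious but routine part: combining $\mathfrak{e}$ with $wz\,\mathfrak{t}$ the $V$-pieces reassemble into $A^{2}V(1+zw)(1-zw)^{2}/w^{2}$, the $R$- and $\Omega$-pieces simplify to $-zAR/w^{2}$ and $-A\Omega$, and the $A_{p}$-pieces collapse via $z^{2}w^{2}+zw-2=(zw+2)(zw-1)$ to the coefficient $-z(1-zw)(2+zw)A_{p}/(2w)$; the analogous manipulations for $\mathcal{A}$ and $\mathcal{B}$ then deliver (\ref{wzory_na_AQB_Class_IIb}). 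I expect the main obstacle to be purely organisational: one must carry the $1/w$ and $w$ powers through the consistency condition without slips and trust the several cancellations, the crucial simplifying recognition being the $(1-zw)^{3}$ factorisation that reduces the consistency condition to its one-line scalar form.
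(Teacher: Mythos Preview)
Your proposal is correct and follows exactly the approach the paper intends: the paper's own proof of this lemma is simply ``We skip the proof due to its similarity to that of Lemma~\ref{drugi_warunek_konsystencji_klasa_I} (although it is much more tedious),'' and you have carried out precisely that imitation, including the key identification of the $(1-zw)^{3}$ factor that reduces (\ref{consistency_condition_2}) to the scalar relation $A^{2}(T-E/w^{2})=-\tfrac{1}{2}A_{p}$ and the subsequent separation-of-variables argument. In fact you supply more detail than the paper does.
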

\begin{proof}
We skip the proof due to its similarity to that of Lemma \ref{drugi_warunek_konsystencji_klasa_I} (although it is much more tedious).
\end{proof}

\subsubsection{The metrics}

\begin{Twierdzenie} 
\label{theorem_dege_x_degnn_pp_pp}
Let $(\mathcal{M}, ds^{2})$ be a complex (neutral) space of the type $\{ [\textrm{deg}]^{e} \otimes [\textrm{D}]^{nn}, [++,++] \}$ ($\{ [\textrm{deg}]^{e} \otimes [\textrm{D}_{r}]^{nn}, [++,++] \}$). Then there exist a local coordinate system $(q,p,w,z)$ such that the metric takes the form
\begin{eqnarray}
\label{metryka_ogolnyparaKahler_classIIa}
\frac{1}{2} ds^{2} &=& \frac{1}{(1-w^{\frac{1}{2}}z^{\frac{1}{2}})^{2}} \Big\{ z^{-\frac{1}{2}} A \,   (dq+w dp)dz - w^{-\frac{1}{2}} A \,  (dp+zdq) dw
\\ \nonumber
&& \ \ \ \ \ \ \ \ \   - \frac{1}{1-wz} \left[ R \, (dp+zdq)^{2} + \Omega \, (dq+wdp)^{2} \right.
\\ \nonumber
&&  \ \ \ \ \ \ \ \ \ \ \ \ \ \ \ \ \ \ \ \ \ \ + (1-w^{\frac{1}{2}} z^{\frac{1}{2}})^{2} V  (dp+zdq)(dq+wdp)  
\\ \nonumber
&& \ \ \ \ \ \ \ \ \ \ \ \ \ \ \ \ \ \ \ \ \ \  \left. -(1-w^{\frac{1}{2}} z^{\frac{1}{2}}) ( 2z^{\frac{1}{2}} \,A_{p}    - 2w^{\frac{1}{2}} \, A_{q}  ) (dp+zdq)(dq+wdp)   \right] \Big\}
\end{eqnarray}
or
\begin{eqnarray}
\label{metryka_ogolnyparaKahler_classIIb}
\frac{1}{2} ds^{2} &=& \dfrac{1}{(1-wz)^{2}} \Big\{ w  A \, (dq+w dp) dz - A \,  (dp+zdq) dw
\\ \nonumber
&& \ \ \ \ \ \ \ \ \ \ \ \ \ \  - \frac{1}{1-wz} \Big[ R \, (dp+zdq)^{2} + w \Omega \,  (dq+wdp)^{2} 
\\ \nonumber
&&  \ \ \ \ \ \ \ \ \ \ \ \ \ \ \ \ \ \ \ \ \ \ \ \ \ \ \ + (1-wz)^{2} V \, (dp+zdq)(dq+wdp)  
\\ \nonumber
&& \ \ \ \ \ \ \ \ \ \ \ \ \ \ \ \ \ \ \ \ \ \ \ \ \ \ \   -(1-wz) w ( z A_{p}  - A_{q}  ) (dp+zdq)(dq+wdp)    \Big] \Big\}
\end{eqnarray}
where $R=R(q,p,w)$, $\Omega=\Omega (q,p,z)$, $V=V(q,p)$ and $A=A(q,p) \ne 0$ are arbitrary holomorphic (real smooth) functions.
\end{Twierdzenie}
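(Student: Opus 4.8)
The plan is to exploit the fact that, by the time we reach this statement, almost all of the reduction has already been carried out in the lemmas of this section; what remains is to package the two admissible gauges into two metric forms. First I would record the starting point: for type $[++,++]$ both intersections twist and both ASD congruences are nonexpanding, so $M_A = N_A = 0$ and Lemmas \ref{Lemat_kongruencja_pierwsza_twisting} and \ref{Lemat_kongruencja_druga_twisting} hold simultaneously. Accordingly I pass to the coordinate system $(q,p,w,z)$ with $x$ and $y$ given by (\ref{jawne_wzory_na_xiy}), and I use $M_1=0$ and $N_1=0$ to write $\phi$ in the two equivalent forms $F(x-\Sigma_z)$ and $G(H_w-y)$, whose equality is precisely the first consistency condition (\ref{consistency_condition_1}).

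Next I would split the analysis according to $\Sigma_{zz}$. By Lemma \ref{lemat_o_specjalnym_podprzypadku} the branch $\Sigma_{zz}=0$ forces $C_{11\dot{A}\dot{B}}=0$, i.e.\ a non-generic traceless Ricci tensor (this degenerate case is reserved as the seed of the Einstein analysis of the following section), so for the general statement I take $\Sigma_{zz}\ne 0$, whence $H_{ww}\ne 0$ follows from (\ref{wniosek_z_consistency_condition_1_2}). Relation (\ref{wniosek_z_consistency_condition_1_1}) then gives the dichotomy $F_{zz}=0 \ \Longleftrightarrow \ G_{ww}=0$, so exactly two branches survive with no mixed possibility: in the branch $F_{zz}\ne 0,\ G_{ww}\ne 0$ Lemma \ref{Lemat_o_ogolnym_przypadku} fixes the gauge (\ref{gaug_dla_classII_a}), and in the branch $F_{zz}=G_{ww}=0$ the companion lemma fixes the gauge (\ref{gaug_dla_classII_b}).

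In each branch the null-string data are already integrated: the second consistency condition has been solved in Lemmas \ref{drugi_warunek_konsystencji_klasa_IIa} and \ref{drugi_warunek_konsystencji_klasa_IIb}, producing $\mathcal{A}$, $\mathcal{Q}$ and $\mathcal{B}$ in terms of the free functions $R(q,p,w)$, $\Omega(q,p,z)$, $V(q,p)$ and $A(q,p)\ne 0$ through (\ref{wzory_na_AQB_Class_IIa}) and (\ref{wzory_na_AQB_Class_IIb}). The only remaining task is therefore to substitute these, together with the explicit $x$, $y$ and $\phi$ from (\ref{x_i_y_dla_class_IIa}) or (\ref{x_i_y_dla_class_IIb}), into the background metric (\ref{metric_weak_HH_expanding_space_abrevaitions}), and to re-express everything in the differentials $dq,dp,dw,dz$. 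Since $x$ and $y$ depend on all of $(q,p,w,z)$, I would compute $dx$ and $dy$ by the chain rule, insert them into the term $dq\,dy-dp\,dx$, and collect; the first gauge then yields (\ref{metryka_ogolnyparaKahler_classIIa}) and the second yields (\ref{metryka_ogolnyparaKahler_classIIb}).

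The hard part is purely computational: carrying out this substitution and verifying that the result collapses onto the stated forms, in which the one-forms reorganize into the combinations $(dp+z\,dq)$ and $(dq+w\,dp)$ multiplied by the conformal factors $(1-w^{\frac{1}{2}}z^{\frac{1}{2}})^{-2}$ and $(1-wz)^{-1}$. In the branch (\ref{gaug_dla_classII_a}) the half-integer powers of $z$ and $w$ appearing in (\ref{x_i_y_dla_class_IIa}) and (\ref{wzory_na_AQB_Class_IIa}) make the intermediate expressions long, and the cancellations that reduce the $A_p$ and $A_q$ cross-terms to the single contribution $(1-w^{\frac{1}{2}}z^{\frac{1}{2}})(2z^{\frac{1}{2}}A_p-2w^{\frac{1}{2}}A_q)$ must be checked with care; this is where I would concentrate the effort, preferably verifying the identity at the level of the two-form $dq\,dy-dp\,dx$ before assembling the full metric.
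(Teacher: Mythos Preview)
Your proposal is correct and follows the same route as the paper: the lemmas have done all the structural work, and the theorem is proved by substituting (\ref{x_i_y_dla_class_IIa}), (\ref{wzory_na_AQB_Class_IIa}) (respectively (\ref{x_i_y_dla_class_IIb}), (\ref{wzory_na_AQB_Class_IIb})) into (\ref{metric_weak_HH_expanding_space_abrevaitions}) and simplifying. The one ingredient you do not anticipate is that the paper, after the raw substitution, performs a redefinition of the free functions---in the first branch $\frac{2}{A^{3}}\rightarrow A$, $\frac{2V}{A^{4}}\rightarrow V$, $\frac{2w^{-1/2}R}{A^{3}}\rightarrow R$, $\frac{2z^{-1/2}\Omega}{A^{3}}\rightarrow \Omega$, and in the second branch $-A^{-1}\rightarrow A$, $-R/A\rightarrow R$, $-\Omega/A\rightarrow \Omega$, $2V\rightarrow V$---to absorb the overall powers of $A$ and the stray $z^{-1/2}$, $w^{-1/2}$ factors; without these relabellings the output of your substitution will not match (\ref{metryka_ogolnyparaKahler_classIIa})--(\ref{metryka_ogolnyparaKahler_classIIb}) literally, which is likely the source of the ``hard cancellations'' you expect.
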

\begin{proof}
Inserting (\ref{x_i_y_dla_class_IIa}) and (\ref{wzory_na_AQB_Class_IIa}) into (\ref{metric_weak_HH_expanding_space_abrevaitions}) and re-defining the functions according to
\begin{equation}
\frac{2}{A^{3}} \rightarrow A, \ \frac{4V}{2A^{4}} \rightarrow V, \ \frac{2w^{-\frac{1}{2}}R}{A^3} \rightarrow R, \ 
\frac{2z^{-\frac{1}{2}}\Omega}{A^3} \rightarrow \Omega
\end{equation}
one arrives at (\ref{metryka_ogolnyparaKahler_classIIa}). 

Similarly, inserting (\ref{x_i_y_dla_class_IIb}) and (\ref{wzory_na_AQB_Class_IIb}) into (\ref{metric_weak_HH_expanding_space_abrevaitions}) and re-defining the functions according to
\begin{equation}
-\frac{1}{A} \rightarrow A, \ - \frac{R}{A} \rightarrow R, \ - \frac{\Omega}{A} \rightarrow \Omega, \ 2V \rightarrow V
\end{equation}
one arrives at (\ref{metryka_ogolnyparaKahler_classIIb}). 
\end{proof}

We end this section at this point. Detailed analysis of the metrics (\ref{metryka_ogolnyparaKahler_classIIa}) and (\ref{metryka_ogolnyparaKahler_classIIb}) will be presented elsewhere. The main question is whether the metrics (\ref{metryka_ogolnyparaKahler_classIIa}) and (\ref{metryka_ogolnyparaKahler_classIIb}) are essentially different or (\ref{metryka_ogolnyparaKahler_classIIb}) is just a special case of (\ref{metryka_ogolnyparaKahler_classIIa}). The answer to this question remains unknown and as long as it is unknown it does not make any sense to discuss the structure of the traceless Ricci tensor or Petrov-Penrose types of the corresponding spaces.

\subsection{Einstein case} 
\label{subsekcja_degexdegnn_pp_pp_Einstein}

In this Section we focus on pKE spaces of the types $\{ [\textrm{deg}]^{e} \otimes [\textrm{D}]^{nn}, [++,++] \}$. We have to impose the conditions $C_{AB \dot{A}\dot{B}}=0$, $\mathcal{R}=-4 \Lambda \ne 0$. The first step has been already done in Section \ref{sekcja_para_Kahler_complicated_nonEinstein} (Lemmas \ref{Lemat_o_wyborze_dowolnych_funkcji_specjalnyprzypadek} and \ref{lemat_o_specjalnym_podprzypadku}). Namely, we proved that the condition $C_{11\dot{A}\dot{B}}=0$ implies (\ref{x_i_y_dla_class_I}) and (\ref{wzory_na_AQB_Class_I}). 

The second step is to consider the conditions $C_{12 \dot{A}\dot{B}}=0$. After some straightforward but tedious work one finds
\begin{eqnarray}
&& \frac{2}{y(1-y)} C_{12\dot{1}\dot{1}} =\frac{4}{x(2y-1)}C_{12\dot{1}\dot{2}}  =-\frac{2}{x^{2}} C_{12\dot{2}\dot{2}} =\alpha
\\ \nonumber
&& \alpha := \Omega_{zz} + \frac{2w}{1-zw} \Omega_{z} + wR_{ww} - \frac{2(1-2zw)R_{w}}{1-zw} - \frac{6zR}{1-zw}
\end{eqnarray}
The curvature scalar reads
\begin{equation}
-\frac{1}{2} (1-zw) \mathcal{R} = w\Omega_{zz} - w^{2} R_{ww} + 4wR_{w} - 6R
\end{equation}
Equations $C_{12\dot{A}\dot{B}}=0$ and $\mathcal{R}=-4\Lambda$ yield
\begin{subequations}
\begin{eqnarray}
\label{rownanie_C11dot1dot2}
&& (1-zw) (\Omega_{zz} + w R_{ww}) + 2w \Omega_{z} - 6zR - 2(1-2zw) R_{w}=0
\\
\label{rownanie_skalar_4lambda}
&& (1-zw) 2 \Lambda = w \Omega_{zz}-w^{2} R_{ww} + 4wR_{w} - 6R
\end{eqnarray}
\end{subequations}
$\partial_{z}^{2}$(\ref{rownanie_skalar_4lambda}) gives $\Omega_{zzzz}=0$ and $\partial_{w}^{2}$(\ref{rownanie_skalar_4lambda}) yields $R_{wwww}=0$. Thus, $\Omega$ and $R$ are third order polynomials in $z$ and $w$, respectively. Further calculations are straightforward and they lead to the formulas
\begin{equation}
\label{rozwiazania_na_R_i_Omega}
\Omega = -\frac{\Lambda}{3} z^{3} + B z^{2} + C z +M, \ R=(M-A) w^{3} + Cw^{2} + Bw - \frac{\Lambda}{3}
\end{equation}
where $A$, $B$, $C$ and $M$ are arbitrary functions of $(q,p)$. We leave the coordinate system $(q,p,w,z)$ and we return to the hyperheavenly coordinate system $(q,p,x,y)$ (compare (\ref{x_i_y_dla_class_I})). Feeding (\ref{wzory_na_AQB_Class_I}) with (\ref{x_i_y_dla_class_I}) and (\ref{rozwiazania_na_R_i_Omega}) one arrives at the formulas
\begin{eqnarray}
\label{rozwiazania_na_AQB_we_wspolrzednych_HH}
\mathcal{A} &=& A x^{3} -Cx^{2} - Bx(1-2y) + \frac{\Lambda}{3} (1-3y+3y^{2})
\\ \nonumber
\mathcal{Q} &=& A x^{2}y - By(1-y) -Mx^{2} + \frac{\Lambda}{3} \frac{y(1-y)(1-2y)}{x}
\\ \nonumber
\mathcal{B} &=& A x y^{2} - Cy(1-y) + Mx (1-2y) + \frac{\Lambda}{3} \frac{y^{2} (1-y)^{2}}{x^{2}} + Vx
\end{eqnarray}

Equations $C_{22 \dot{A}\dot{B}}=0$ are much more tedious. To calculate $C_{22 \dot{A}\dot{B}}$ we need quantities $\eth^{\dot{A}}Q_{\dot{A}\dot{B}}$. They read
\begin{eqnarray}
\label{definicje_Xi}
\Xi_{\dot{1}} &:=& x^{-2} \eth^{\dot{A}} Q_{\dot{A}\dot{1}} = \mathcal{B}_{p} + \mathcal{Q}_{q} - V (Cx^{2} - \Lambda y(1-y))
\\ \nonumber
\Xi_{\dot{2}} &:=& x^{-2} \eth^{\dot{A}} Q_{\dot{A}\dot{2}} = -\mathcal{Q}_{p} - \mathcal{A}_{q} +x V (2Bx + \Lambda (2y-1))
\end{eqnarray}
Finally we get
\begin{eqnarray}
&& \frac{1}{x^{3}y(1-y)} C_{22 \dot{1}\dot{1}} = \frac{2}{x^{4}(2y-1)} C_{22\dot{1}\dot{2}} = - \frac{1}{x^{5}} C_{22\dot{2}\dot{2}} = \beta,
\\ \nonumber
&& \beta := 2\Lambda V - 2B_{q} - C_{p}
\end{eqnarray}
Conditions $C_{22 \dot{A}\dot{B}}=0$ imply the solution for $V$ in terms of $B$ and $C$
\begin{equation}
\label{solution_for_V}
V = \frac{2B_{q} + C_{p}}{2 \Lambda}
\end{equation}
Inserting $\mathcal{A}$, $\mathcal{Q}$ and $\mathcal{B}$ ((\ref{rozwiazania_na_AQB_we_wspolrzednych_HH}) with $V$ given by (\ref{solution_for_V})) into (\ref{curv}) one finds
\begin{eqnarray}
\label{SD_conformal_curvatura_dege_x_Dnn_pppp}
C^{(3)} &=& -4A x^{3}
\\ \nonumber
x^{-5} C^{(2)} &=& -4A_{p} y -4A_{q} x + 4M_{p} +  2C_{q} + \frac{2}{\Lambda} B (2B_{q}+C_{p})
\\ \nonumber
\frac{1}{2} x^{-6} C^{(1)} &=& \frac{\partial \Xi_{\dot{2}}}{\partial q} - \frac{\partial \Xi_{\dot{1}}}{\partial p} - \mathcal{A} \frac{\partial \Xi_{\dot{1}}}{\partial x} - \mathcal{B} \frac{\partial \Xi_{\dot{2}}}{\partial y} - \mathcal{Q} \left( \frac{\partial \Xi_{\dot{2}}}{\partial x} + \frac{\partial \Xi_{\dot{1}}}{\partial y}  \right)
\\ \nonumber
&& + (\mathcal{A}_{x} + \mathcal{Q}_{y}) \Xi_{\dot{1}} + (\mathcal{B}_{y} + \mathcal{Q}_{x}) \Xi_{\dot{2}}
\end{eqnarray}

The ASD part of the conformal curvature read
\begin{eqnarray}
\label{wspolczynniki_ASD_dla_deg_x_D_pppp}
-x^{-2}C_{\dot{1}\dot{1}\dot{1}\dot{1}} &=& \frac{2y^{2}(1-y)^{2}}{x^{4}} \Lambda
\\ \nonumber
-x^{-2}C_{\dot{1}\dot{1}\dot{1}\dot{2}} &=& \frac{(1-y)(2y^{2}-y)}{x^{3}} \Lambda
\\ \nonumber
-x^{-2}C_{\dot{1}\dot{1}\dot{2}\dot{2}} &=& \frac{6y^{2}-6y+1}{3x^{2}} \Lambda
\\ \nonumber
-x^{-2}C_{\dot{1}\dot{2}\dot{2}\dot{2}} &=& \frac{1-2y}{x} \Lambda
\\ \nonumber
-x^{-2}C_{\dot{2}\dot{2}\dot{2}\dot{2}} &=& 2 \Lambda
\end{eqnarray}
Equivalently, (\ref{wspolczynniki_ASD_dla_deg_x_D_pppp}) can be rewritten in the form
\begin{equation}
\label{ASD_dla_deg_x_D_pppp_postac}
-x^{-2} C_{\dot{A}\dot{B}\dot{C}\dot{D}} \xi^{\dot{A}} \xi^{\dot{B}} \xi^{\dot{C}} \xi^{\dot{D}} = 2 \Lambda \left( \xi - \frac{y}{x} \right)^{2} \left( \xi - \frac{y-1}{x} \right)^{2} = 2\Lambda (\xi +z)^2 \left(\xi + \frac{1}{w} \right)^2
\end{equation}
where $\xi^{\dot{A}} = [1, \xi]$ is a spinor. Formula (\ref{ASD_dla_deg_x_D_pppp_postac}) confirms that ASD conformal curvature is of the type [D] and necessarily $\Lambda \ne 0$. Otherwise ASD conformal curvature vanishes.

Transformation formulas for the functions $A$, $B$, $C$ and $M$ are essential for further considerations. The gauge (\ref{gauge}) is restricted to the transformations such that (\ref{gauge_restricted_case_pppp_special}) holds true. Finally one arrives at the formulas
\begin{eqnarray}
f^{\frac{3}{2}} A' &=& A
\\ \nonumber
f^{\frac{1}{2}} B' &=& B - \Lambda f^{-\frac{1}{2}}p'_{q}
\\ \nonumber
f C' &=& C + 2f^{-\frac{1}{2}} p'_{q} B - \Lambda f^{-1} {p'_{q}}^{2}
\\ \nonumber
f M' &=& f^{-\frac{1}{2}} M + f^{-1} {p'_{q}}^{2} B + f^{-1} p'_{q} C - \frac{\Lambda}{3} f^{-2} {p'_{q}}^{3}
\end{eqnarray}

Now we are ready to formulate the main theorem of this article.

\begin{Twierdzenie} 
\label{theorem_Einstein_dege_x_Dnn_pp_pp}
Let $(\mathcal{M}, ds^{2})$ be an Einstein complex (neutral) space of the type $\{ [\textrm{deg}]^{e} \otimes [\textrm{D}]^{nn}, [++,++] \}$ ($\{ [\textrm{deg}]^{e} \otimes [\textrm{D}_{r}]^{nn}, [++,++] \}$). Then there exist a local coordinate system $(q,p,x,y)$ such that the metric takes the form
\begin{eqnarray}
\label{metryka_Einstein_dege_x_Dnn_pp_pp}
\frac{1}{2} ds^{2} &=&  x^{-2} \Big\{ dqdy-dpdx + \left( A x^{3} -Cx^{2} - Bx(1-2y) + \frac{\Lambda}{3} (1-3y+3y^{2}) \right) dp^{2} \ \ \ \ 
\\ \nonumber
&&  \ \ \ \ \ \ -2 \left( A x^{2}y - By(1-y) -Mx^{2} + \frac{\Lambda}{3} \frac{y(1-y)(1-2y)}{x} \right) dqdp
\\ \nonumber
&&  \ \ \ \ \ \ + \left( A x y^{2} - Cy(1-y) + Mx (1-2y) + \frac{\Lambda}{3} \frac{y^{2} (1-y)^{2}}{x^{2}} + \frac{2B_{q} + C_{p}}{2 \Lambda}x   \right) dq^{2}   \Big\}
\end{eqnarray}
where $\Lambda \ne 0$ is a cosmological constant, $A = A (q,p)$, $B = B (q,p)$, $C=C(q,p)$ and  $M = M (q,p)$ are arbitrary holomorphic (real smooth) functions.
\end{Twierdzenie}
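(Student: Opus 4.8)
The plan is to reduce the general para-K\"ahler structure established in Section \ref{sekcja_para_Kahler_complicated_nonEinstein} by imposing the two Einstein conditions $C_{AB\dot{A}\dot{B}}=0$ and $\mathcal{R}=-4\Lambda\ne 0$, processing the three irreducible blocks $C_{11\dot{A}\dot{B}}$, $C_{12\dot{A}\dot{B}}$ and $C_{22\dot{A}\dot{B}}$ in that order, since each successive block constrains functions left free by the previous one.

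First I would exploit $C_{11\dot{A}\dot{B}}=0$. By Lemma \ref{lemat_o_specjalnym_podprzypadku} this is equivalent to $\Sigma_{zz}=0$, which by Lemma \ref{Lemat_o_wyborze_dowolnych_funkcji_specjalnyprzypadek} permits the gauge choice (\ref{gaug_dla_classI}), so that $\phi=x$ and the coordinate relations (\ref{x_i_y_dla_class_I}) hold. Lemma \ref{drugi_warunek_konsystencji_klasa_I} then expresses $\mathcal{A}$, $\mathcal{Q}$ and $\mathcal{B}$ via (\ref{wzory_na_AQB_Class_I}) in terms of the three free functions $R=R(q,p,w)$, $\Omega=\Omega(q,p,z)$ and $V=V(q,p)$. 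This is the crucial simplification: the Einstein assumption forces us into the special subcase, so the heavier class-IIa/IIb machinery is not needed here.

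Next I would impose $C_{12\dot{A}\dot{B}}=0$ together with $\mathcal{R}=-4\Lambda$. In the coordinates $(q,p,w,z)$ these become the coupled linear equations (\ref{rownanie_C11dot1dot2}) and (\ref{rownanie_skalar_4lambda}) for $R$ and $\Omega$. The key observation is that $\partial_z^2$ applied to (\ref{rownanie_skalar_4lambda}) forces $\Omega_{zzzz}=0$, while $\partial_w^2$ forces $R_{wwww}=0$; hence both are cubic polynomials in their respective variables, and matching coefficients yields the explicit forms (\ref{rozwiazania_na_R_i_Omega}) with new free functions $A$, $B$, $C$, $M$ of $(q,p)$. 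Returning to the hyperheavenly coordinates through (\ref{x_i_y_dla_class_I}) then produces the metric functions (\ref{rozwiazania_na_AQB_we_wspolrzednych_HH}).

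The final and most laborious step is $C_{22\dot{A}\dot{B}}=0$, which I expect to be the main obstacle. It requires the contracted connection terms $\Xi_{\dot{A}}$ defined in (\ref{definicje_Xi}); computing $\eth^{\dot{A}}Q_{\dot{A}\dot{B}}$ and assembling the three components of $C_{22\dot{A}\dot{B}}$ involves by far the heaviest algebra in the argument. Once this is done, all three components collapse to the single scalar $\beta=2\Lambda V-2B_q-C_p$, and $\beta=0$ fixes $V$ through (\ref{solution_for_V}). With $V$ thus determined, substitution of $\phi=x$ together with (\ref{rozwiazania_na_AQB_we_wspolrzednych_HH}) into the master form (\ref{metric_weak_HH_expanding_space_abrevaitions}) yields (\ref{metryka_Einstein_dege_x_Dnn_pp_pp}). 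Finally, that $\Lambda\ne 0$ is genuinely necessary follows from the factored expression (\ref{ASD_dla_deg_x_D_pppp_postac}) for the ASD conformal spinor: if $\Lambda=0$ the ASD conformal curvature would vanish identically, contradicting the assumption that the space is of type $[\textrm{D}]^{nn}$.
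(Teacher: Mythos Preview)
Your proposal is correct and follows the paper's approach essentially step for step: first use Lemmas~\ref{lemat_o_specjalnym_podprzypadku} and~\ref{Lemat_o_wyborze_dowolnych_funkcji_specjalnyprzypadek} to reduce via $C_{11\dot{A}\dot{B}}=0$ to the special gauge (\ref{gaug_dla_classI}) and Lemma~\ref{drugi_warunek_konsystencji_klasa_I}, then solve (\ref{rownanie_C11dot1dot2})--(\ref{rownanie_skalar_4lambda}) for the cubic forms (\ref{rozwiazania_na_R_i_Omega}) of $\Omega$ and $R$, return to hyperheavenly coordinates to obtain (\ref{rozwiazania_na_AQB_we_wspolrzednych_HH}), and finally fix $V$ through $C_{22\dot{A}\dot{B}}=0$ via (\ref{solution_for_V}). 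The paper's formal proof of the theorem is just the last substitution into (\ref{metric_weak_HH_expanding_space_abrevaitions}), with all of the preceding analysis of Section~\ref{subsekcja_degexdegnn_pp_pp_Einstein} doing the real work, exactly as you outline.
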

\begin{proof}
Inserting (\ref{rozwiazania_na_AQB_we_wspolrzednych_HH}) and (\ref{solution_for_V}) into (\ref{metric_weak_HH_expanding_space_abrevaitions}) one arrives at (\ref{metryka_Einstein_dege_x_Dnn_pp_pp}).
\end{proof}
As long as $A \ne 0$ and $2C^{(2)}C^{(2)} - 3C^{(3)}C^{(1)} \ne 0$ the metric (\ref{metryka_Einstein_dege_x_Dnn_pp_pp}) is of the type $\{ [\textrm{II}]^{e} \otimes [\textrm{D}]^{nn}, [++,++] \}$. As before, we skip an analysis of the type $ [\textrm{D}]^{ee} \otimes [\textrm{D}]^{nn}$.

\begin{Twierdzenie} 
\label{theorem_Einstein_IIIe_x_Dnn_pp_pp}
Let $(\mathcal{M}, ds^{2})$ be an Einstein complex (neutral) space of the type $\{ [\textrm{III}]^{e} \otimes [\textrm{D}]^{nn}, [++,++] \}$ ($\{ [\textrm{III}_{r}]^{e} \otimes [\textrm{D}_{r}]^{nn}, [++,++] \}$). Then there exist a local coordinate system $(q,p,x,y)$ such that the metric takes the form
\begin{eqnarray}
\label{metryka_Einstein_IIIe_x_Dnn_pp_pp}
\frac{1}{2} ds^{2} &=&  x^{-2} \Big\{ dqdy-dpdx + \left(  -Cx^{2} - Bx(1-2y) + \frac{\Lambda}{3} (1-3y+3y^{2}) \right) dp^{2} \ \ \ \ 
\\ \nonumber
&&  \ \ \ \ \ \ -2 \left(  - By(1-y) -Mx^{2} + \frac{\Lambda}{3} \frac{y(1-y)(1-2y)}{x} \right) dqdp
\\ \nonumber
&&  \ \ \ \ \ \ + \left(  - Cy(1-y) + Mx (1-2y) + \frac{\Lambda}{3} \frac{y^{2} (1-y)^{2}}{x^{2}} + \frac{2B_{q} + C_{p}}{2 \Lambda}x   \right) dq^{2}   \Big\}
\end{eqnarray}
where $\Lambda \ne 0$ is a cosmological constant, $B = B (q,p)$, $C=C(q,p)$ and  $M = M (q,p)$ are arbitrary holomorphic (real smooth) functions such that 
\begin{equation}
\label{warunek_na_typ_IIIe_x_Dnn_pppp}
 2M_{p} +  C_{q} + \frac{1}{\Lambda} B (2B_{q}+C_{p}) \ne 0
\end{equation}
\end{Twierdzenie}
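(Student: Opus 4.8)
The plan is to derive the type $\{ [\textrm{III}]^{e} \otimes [\textrm{D}]^{nn}, [++,++] \}$ as a specialization of the general Einstein result, Theorem \ref{theorem_Einstein_dege_x_Dnn_pp_pp}, by imposing exactly the algebraic condition that selects Petrov--Penrose type III for the SD Weyl spinor. Since the weak $\mathcal{HH}$-structure already forces $C^{(5)}=C^{(4)}=0$ (the spinor $m_{A}\sim[0,m]$ is at least a double Penrose spinor), the SD Weyl spinor is of type III precisely when $C^{(3)}=0$ while $C^{(2)}\ne 0$. Thus the whole argument reduces to reading off these two conditions from the SD conformal curvature coefficients (\ref{SD_conformal_curvatura_dege_x_Dnn_pppp}) already computed for the general pKE metric, exactly mirroring the proof of Theorem \ref{theorem_Einstein_IIIe_x_Dnn_pp_mm} in the $[++,--]$ case.

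First I would impose $C^{(3)}=0$. From (\ref{SD_conformal_curvatura_dege_x_Dnn_pppp}) one has $C^{(3)}=-4Ax^{3}$, so, as $x\ne 0$, this is equivalent to $A=0$. Setting $A=0$ in the metric functions (\ref{rozwiazania_na_AQB_we_wspolrzednych_HH}) — equivalently, directly in the general Einstein metric (\ref{metryka_Einstein_dege_x_Dnn_pp_pp}) — immediately produces (\ref{metryka_Einstein_IIIe_x_Dnn_pp_pp}). No further integration is required: the coordinate system, the residual gauge and the cosmological term are all inherited verbatim from the main theorem, and $V$ is still fixed by (\ref{solution_for_V}).

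It remains only to translate the genericity requirement $C^{(2)}\ne 0$, which guarantees the SD Weyl spinor is genuinely type III rather than the more degenerate type N (or conformally SD-flat). Evaluating the second line of (\ref{SD_conformal_curvatura_dege_x_Dnn_pppp}) at $A=0$ annihilates the $-4A_{p}y-4A_{q}x$ terms and leaves $x^{-5}C^{(2)}=4M_{p}+2C_{q}+\frac{2}{\Lambda}B(2B_{q}+C_{p})$; dividing by two reproduces exactly the inequality (\ref{warunek_na_typ_IIIe_x_Dnn_pppp}). There is no serious obstacle here: the heavy lifting — the consistency analysis of Section \ref{sekcja_para_Kahler_complicated_nonEinstein}, the Einstein reduction, and the curvature computation — was completed in establishing Theorem \ref{theorem_Einstein_dege_x_Dnn_pp_pp}. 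The only point demanding care is the algebraic bookkeeping confirming that the $A=0$ specialization of $C^{(2)}$ matches the stated condition, together with the observation that once $C^{(3)}=0$ type D is automatically excluded (in this adapted frame a type D spinor still carries $C^{(3)}\ne 0$), so that $C^{(2)}\ne 0$ by itself secures type III.
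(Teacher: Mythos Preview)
Your proposal is correct and follows exactly the paper's own argument: specialize Theorem \ref{theorem_Einstein_dege_x_Dnn_pp_pp} by reading off from (\ref{SD_conformal_curvatura_dege_x_Dnn_pppp}) that $C^{(3)}=0$ forces $A=0$, and that $C^{(2)}\ne 0$ with $A=0$ is precisely condition (\ref{warunek_na_typ_IIIe_x_Dnn_pppp}). Your additional remark that $C^{(3)}=0$ automatically excludes type D (so $C^{(2)}\ne 0$ alone pins down type III) is a welcome clarification the paper leaves implicit.
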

\begin{proof}
$C^{(3)}=0$ implies $A=0$ and $C^{(2)} \ne 0$ yields (\ref{warunek_na_typ_IIIe_x_Dnn_pppp}) (compare (\ref{SD_conformal_curvatura_dege_x_Dnn_pppp})).
\end{proof}

Type $\{ [\textrm{N}]^{e} \otimes [\textrm{D}]^{nn}, [++,++] \}$ is a little more complicated. Eq. $C^{(2)}=0$ implies
\begin{equation}
\label{warunek_na_typ_Ne_x_Dnn_pppp_dorozwiazania}
 2M_{p} +  C_{q} + \frac{1}{\Lambda} B (2B_{q}+C_{p}) = 0
\end{equation}
Coefficient $C^{(1)}$ takes relatively simple form
\begin{equation}
\label{C_jeden_dla_typu_dege_x_Dnn_pppp}
\frac{1}{2} x^{-7} C^{(1)} = -\frac{1}{2\Lambda} \partial_{p}^{2} (2B_{q} + C_{p}) - \partial_{p} (MB) + \frac{1}{3} C( B_{q} + 2C_{p}) + \frac{\Lambda}{3} M_{q}
\end{equation}
Those, who prefer the hyperheavenly coordinate system must be satisfied with the metric (\ref{metryka_Einstein_IIIe_x_Dnn_pp_pp}) restricted to the conditions (\ref{warunek_na_typ_Ne_x_Dnn_pppp_dorozwiazania}) and $C^{(1)} \ne 0$. Otherwise, one formulates

\begin{Twierdzenie} 
\label{theorem_Einstein_Ne_x_Dnn_pp_pp}
Let $(\mathcal{M}, ds^{2})$ be an Einstein complex (neutral) space of the type $\{ [\textrm{N}]^{e} \otimes [\textrm{D}]^{nn}, [++,++] \}$ ($\{ [\textrm{N}_{r}]^{e} \otimes [\textrm{D}_{r}]^{nn}, [++,++] \}$). Then there exist a local coordinate system $(q,t,x,y)$ such that the metric takes the form
\begin{eqnarray}
\label{metryka_Einstein_Ne_x_Dnn_pp_pp}
\frac{1}{2} ds^{2} &=&  x^{-2} \Big\{ dqdy-(P_{t}dt + P_{q} dq)dx 
\\ \nonumber
&& \ \ \ \ \ \ + \left(  -\left( t-\frac{B^{2}}{\Lambda} \right) x^{2} - Bx(1-2y) + \frac{\Lambda}{3} (1-3y+3y^{2}) \right) (P_{t}dt + P_{q} dq)^{2} \ \ \ \ 
\\ \nonumber
&&  \ \ \ \ \ \ -2 \left(  - By(1-y) -\left( \frac{N}{2} + \frac{B^{3}}{3 \Lambda^{2}}  \right)  x^{2} + \frac{\Lambda}{3} \frac{y(1-y)(1-2y)}{x} \right) dq (P_{t}dt + P_{q} dq)
\\ \nonumber
&&  \ \ \ \ \ \ + \left[  - \left( t-\frac{B^{2}}{\Lambda} \right) y(1-y) + \left( \frac{N}{2} + \frac{B^{3}}{3 \Lambda^{2}}  \right) x (1-2y) + \frac{\Lambda}{3} \frac{y^{2} (1-y)^{2}}{x^{2}} \right.
\\ \nonumber
&& \ \ \ \ \ \ \left. +  \frac{1}{2\Lambda}  \left( 2B_{q} + \frac{1}{P_{t}} - \frac{2B_{t}}{P_{t}} (2P_{q} - N_{t})  \right) x   \right] dq^{2}   \Big\}
\end{eqnarray}
where $\Lambda \ne 0$ is a cosmological constant, $B := \Lambda (P_{q} - N_{t})$; $N=N(q,t)$ and $P=P(q,t)$ are arbitrary holomorphic (real smooth) functions such that $P_{t} \ne 0$ and
\begin{eqnarray}
\label{warunek_na_nieznikanie_C_jeden}
&& -\frac{1}{2 \Lambda} \frac{1}{P_{t}} \frac{\partial}{\partial t} \left( \frac{1}{P_{t}} \frac{\partial}{\partial t} \left( 2 B_{q} -2\frac{P_{q}}{P_{t}} B_{t} + \frac{1}{P_{t}} - \frac{2}{\Lambda} \frac{1}{P_{t}} BB_{t} \right) \right) - \frac{1}{2} \frac{1}{P_{t}} \partial_{t} (NB) 
\\ \nonumber
&& + \frac{\Lambda}{6} \left( N_{q} - \frac{P_{q}}{P_{t}} N_{t} \right) -\frac{4t}{3 \Lambda} \frac{1}{P_{t}} BB_{t} + \frac{2t}{3} \frac{1}{P_{t}} + \frac{t}{3} \left( B_{q} - \frac{P_{q}}{P_{t}} B_{t} \right) - \frac{2}{3 \Lambda} \frac{1}{P_{t}} B^{2} \ne 0
\end{eqnarray}
\end{Twierdzenie}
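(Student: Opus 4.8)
The plan is to start from the general para-K\"ahler Einstein metric (\ref{metryka_Einstein_dege_x_Dnn_pp_pp}) together with its curvature data (\ref{SD_conformal_curvatura_dege_x_Dnn_pppp}), and to impose the algebraic conditions that single out the type $[\textrm{N}]^{e}$. First I would set $C^{(3)}=0$, which by (\ref{SD_conformal_curvatura_dege_x_Dnn_pppp}) forces $A=0$ and reduces the metric to the form (\ref{metryka_Einstein_IIIe_x_Dnn_pp_pp}). Next I would impose $C^{(2)}=0$; reading off $C^{(2)}$ from (\ref{SD_conformal_curvatura_dege_x_Dnn_pppp}) with $A=0$ yields precisely the single constraint (\ref{warunek_na_typ_Ne_x_Dnn_pppp_dorozwiazania}) on the three functions $B,C,M$ of $(q,p)$. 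All that then remains is to integrate this constraint and to repackage the data $(B,C,M)$ into the two potentials $N(q,t)$ and $P(q,t)$ of the statement.

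The decisive step is the choice of a coordinate adapted to the constraint. I would replace $p$ by $t:=C+B^{2}/\Lambda$, inverting this relation as $p=P(q,t)$; this is legitimate wherever $t_{p}\ne0$, which is exactly the genericity hypothesis $P_{t}\ne0$. Keeping $q,x,y$ fixed and substituting $dp=P_{t}\,dt+P_{q}\,dq$ is an ordinary coordinate change which takes the metric out of the strict Pleba\'nski form (\ref{metric_weak_HH_expanding_space_abrevaitions}) but into the desired one. By construction $C=t-B^{2}/\Lambda$, and I would introduce the second potential by $N:=2M-\tfrac{2}{3\Lambda^{2}}B^{3}$, so that $M=N/2+B^{3}/(3\Lambda^{2})$. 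The heart of the argument is to show that (\ref{warunek_na_typ_Ne_x_Dnn_pppp_dorozwiazania}) is equivalent to the clean relation $B=\Lambda(P_{q}-N_{t})$. Using $P_{q}=-t_{q}/t_{p}$ with $t_{q}=C_{q}+2BB_{q}/\Lambda$, and $N_{t}=(2M_{p}-2B^{2}B_{p}/\Lambda^{2})/t_{p}$ with $t_{p}=C_{p}+2BB_{p}/\Lambda$, the numerator of $P_{q}-N_{t}$ collapses, after inserting $2M_{p}$ from (\ref{warunek_na_typ_Ne_x_Dnn_pppp_dorozwiazania}), to $\tfrac{B}{\Lambda}(C_{p}+2BB_{p}/\Lambda)$, whence the factor $t_{p}$ cancels and $P_{q}-N_{t}=B/\Lambda$, as required.

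With $A=0$, $C=t-B^{2}/\Lambda$, $M=N/2+B^{3}/(3\Lambda^{2})$ and $B=\Lambda(P_{q}-N_{t})$ in hand, I would insert (\ref{rozwiazania_na_AQB_we_wspolrzednych_HH}) and (\ref{solution_for_V}) into (\ref{metric_weak_HH_expanding_space_abrevaitions}) and rewrite everything in $(q,t,x,y)$. The only nontrivial block is the coefficient of $x$ in the $dq^{2}$ part, namely $V=(2B_{q}+C_{p})/(2\Lambda)$; converting the $(q,p)$-derivatives via $\partial_{p}=P_{t}^{-1}\partial_{t}$ and $\partial_{q}|_{p}=\partial_{q}|_{t}-(P_{q}/P_{t})\partial_{t}$, eliminating $C_{p}=P_{t}^{-1}(1-2BB_{t}/\Lambda)$, and trading $BB_{t}/\Lambda$ for $(P_{q}-N_{t})B_{t}$ reproduces the bracket in the last line of (\ref{metryka_Einstein_Ne_x_Dnn_pp_pp}). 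This establishes the metric.

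Finally I would pin down the Petrov--Penrose type. Since $C^{(3)}=0$ and $C^{(2)}=0$ are already built in, the space is of type $[\textrm{N}]^{e}$ exactly when $C^{(1)}\ne0$. Starting from the explicit expression (\ref{C_jeden_dla_typu_dege_x_Dnn_pppp}), I would push every $\partial_{p}$ through $\partial_{p}=P_{t}^{-1}\partial_{t}$, convert the $\partial_{q}$, and substitute the expressions for $C$, $M$ and $B$ just obtained; the outcome is the inequality (\ref{warunek_na_nieznikanie_C_jeden}). I expect this conversion to be the main obstacle: it is entirely mechanical but heavy, because (\ref{C_jeden_dla_typu_dege_x_Dnn_pppp}) carries second $p$-derivatives of $2B_{q}+C_{p}$, and organizing the nested $P_{t}^{-1}\partial_{t}$ operators so that the result matches the compact form (\ref{warunek_na_nieznikanie_C_jeden}) demands careful bookkeeping. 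By contrast the conceptual content---the substitution $t=C+B^{2}/\Lambda$, $N=2M-\tfrac{2}{3\Lambda^{2}}B^{3}$ that linearizes the constraint to $B=\Lambda(P_{q}-N_{t})$---is the short and essential move.
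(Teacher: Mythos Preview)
Your proposal is correct and follows essentially the same route as the paper: the same substitutions $C=t-B^{2}/\Lambda$ and $2M=N+2B^{3}/(3\Lambda^{2})$, the same change of independent variable $p\mapsto t$ with $p=P(q,t)$, and the same conversion of $C^{(1)}$ via $\partial_{p}=P_{t}^{-1}\partial_{t}$, $\partial_{q}|_{p}=\partial_{q}|_{t}-(P_{q}/P_{t})\partial_{t}$. The only stylistic difference is that the paper first rewrites the constraint (\ref{warunek_na_typ_Ne_x_Dnn_pppp_dorozwiazania}) as $N_{p}+t_{q}+\tfrac{B}{\Lambda}t_{p}=0$ and then reads off $B=\Lambda(P_{q}-N_{t})$ by wedging with $dp\wedge dq$ and passing to $(q,t)$-coordinates, whereas you verify the same identity by a direct chain-rule computation; the content is identical.
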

\begin{proof}
Substitute $2M =: N + \dfrac{2B^{3}}{3 \Lambda^{2}}$ and $C =: t- \dfrac{B^{2}}{\Lambda}$ into (\ref{warunek_na_typ_Ne_x_Dnn_pppp_dorozwiazania}) ($N$ and $t$ are new functions of $(q,p)$). Eq. (\ref{warunek_na_typ_Ne_x_Dnn_pppp_dorozwiazania}) simplifies to the form
\begin{equation}
\label{waruneknatypN_uproszczonee_1}
N_{p} + t_{q} + \frac{B}{\Lambda}t{_p} = 0
\end{equation}
Multiplying (\ref{waruneknatypN_uproszczonee_1}) by $dp \wedge dq$ and treating $(q,t)$ as independent variables we find
\begin{equation}
\label{rozwiazanie_warunku_na_typ_N}
\widetilde{B} (q,t) = \Lambda (\widetilde{P}_{q} - \widetilde{N}_{t}), \ N (q,p) =: \widetilde{N} (q,t), \ p =: \widetilde{P} (q,t)
\end{equation}
Note that 
\begin{equation}
\label{pomocnicze_wzory_na_transformacje_pochodnych}
\frac{\partial B(q,p)}{\partial q} = \widetilde{B}_{q} + \widetilde{B}_{t} t_{q}, \ 
\frac{\partial B(q,p)}{\partial p} = \widetilde{B}_{t} t_{p}, \ t_{p} = \frac{1}{\widetilde{P}_{t}}, \ t_{q} =- \frac{\widetilde{P}_{q}}{\widetilde{P}_{t}}
\end{equation}
Similar formulas hold true for derivatives of $C(q,p)$ and $N(q,p)$. Inserting (\ref{rozwiazanie_warunku_na_typ_N}) and (\ref{pomocnicze_wzory_na_transformacje_pochodnych}) into (\ref{metryka_Einstein_IIIe_x_Dnn_pp_pp}), passing to the coordinate system $(q,t,x,y)$ and dropping tildes one arrives at (\ref{metryka_Einstein_Ne_x_Dnn_pp_pp}).

$C^{(1)}$ transformed into the coordinate system $(q,t,x,y)$ takes the form
\begin{eqnarray}
\frac{1}{2} x^{-7} C^{(1)} &=& -\frac{1}{2 \Lambda} \frac{1}{P_{t}} \frac{\partial}{\partial t} \left( \frac{1}{P_{t}} \frac{\partial}{\partial t} \left( 2 B_{q} -2\frac{P_{q}}{P_{t}} B_{t} + \frac{1}{P_{t}} - \frac{2}{\Lambda} \frac{1}{P_{t}} BB_{t} \right) \right) - \frac{1}{2} \frac{1}{P_{t}} \partial_{t} (NB) \ \ \ \ \ \ \
\\ \nonumber
&& + \frac{\Lambda}{6} \left( N_{q} - \frac{P_{q}}{P_{t}} N_{t} \right) -\frac{4t}{3 \Lambda} \frac{1}{P_{t}} BB_{t} + \frac{2t}{3} \frac{1}{P_{t}} + \frac{t}{3} \left( B_{q} - \frac{P_{q}}{P_{t}} B_{t} \right) - \frac{2}{3 \Lambda} \frac{1}{P_{t}} B^{2} 
\end{eqnarray}
(where we already dropped tildes over $B$, $P$ and $N$). Because $C^{(1)} \ne 0$, (\ref{warunek_na_nieznikanie_C_jeden}) is proved.
\end{proof}



\section{Concluding remarks.}

This article is the second part of the research programme devoted to para-Kähler, para-Kähler Einstein and Walker geometries. In this part we analyzed complex and real neutral, 4-dimensional spaces equipped with a single, expanding congruence of SD null strings. Thus, a weak and expanding $\mathcal{HH}$-space was a "starting point" for further considerations. Then we equipped a weak $\mathcal{HH}$-space with ASD congruences of null strings according to the following scheme: $\mathcal{C}^{e} \rightarrow \mathcal{C}^{n} \rightarrow \mathcal{C}^{ne}\rightarrow \mathcal{C}^{nn}$ (compare Scheme \ref{Structure}). Finally we arrived at pK-spaces. The path from weak $\mathcal{HH}$-spaces to pK-spaces is full of geometrically interesting spaces equipped with $\mathcal{C}s$ and $\mathcal{I}s$ of different properties. The metrics of all these spaces were found in all the generality. The results are gathered
in the Table \ref{summary} (the metric marked by $^{*}$ has been already known, the rest of the metrics
are new results).

\begin{table}[H]
 \footnotesize
\begin{center}
\begin{tabular}{|c|c|c|}   \hline
 Type  &   Metric   & Functions in the metric        \\ \hline \hline
   \multicolumn{3}{|c|}{Spaces with $C_{ab} \ne 0$}  \\ \cline{1-3}
 $ [\textrm{deg}]^{e} \otimes [\textrm{any}] $ & (\ref{metric_weak_HH_expanding_space_abrevaitions})$^{*}$ & 4 functions of 4 variables     \\ \hline
 $\{ [\textrm{deg}]^{e} \otimes [\textrm{any}]^{e},[++]  \}$ & (\ref{metryka_dege_x_anye_mp_pp}) & 3 functions of 4 variables, 2 functions of 3 variables     \\ \hline
 $\{ [\textrm{deg}]^{e} \otimes [\textrm{any}]^{e},[+-]  \}$ & (\ref{metryka_dege_x_anye_mm_pm}) & 3 functions of 4 variables, 1 function of 3 variables     \\ \hline
 $\{ [\textrm{deg}]^{e} \otimes [\textrm{any}]^{e},[-+]  \}$ & (\ref{metryka_dege_x_anye_mp_pp})  & 2 functions of 4 variables, 3 functions of 3 variables     \\ \hline
 $\{ [\textrm{deg}]^{e} \otimes [\textrm{any}]^{e},[--]  \}$ & (\ref{metryka_dege_x_anye_mm_pm}) & 2 functions of 4 variables, 2 functions of 3 variables     \\ \hline
 $\{ [\textrm{deg}]^{e} \otimes [\textrm{any}]^{n},[++]  \}$ & (\ref{metryka_dege_x_degn_pp})  & 1 function of 4 variables, 4 functions of 3 variables     \\ \hline
 $\{ [\textrm{deg}]^{e} \otimes [\textrm{any}]^{n},[--]  \}$ & (\ref{metryka_dege_x_degn_mm}) & 1 function of 4 variables, 3 functions of 3 variables     \\ \hline
 $\{ [\textrm{deg}]^{e} \otimes [\textrm{deg}]^{ne},[--,++]  \}$ & (\ref{metryka_dege_x_degne_mm_pp}) &  5 functions of 3 variables   \\ \hline
 $\{ [\textrm{deg}]^{e} \otimes [\textrm{deg}]^{ne},[--,+-]  \}$ & (\ref{metryka_dege_x_degne_mm_pm}) &  4 functions of 3 variables   \\ \hline
  $\{ [\textrm{deg}]^{e} \otimes [\textrm{deg}]^{ne},[--,-+]  \}$ & (\ref{metryka_dege_x_degne_mm_mp}) & 3 functions of 3 variables    \\ \hline
 $\{ [\textrm{deg}]^{e} \otimes [\textrm{deg}]^{ne},[++,++]  \}$ & (\ref{metryka_dege_x_degne_pp_pp_and_pp_mp}) & 6 functions of 3 variables     \\ \hline
 $\{ [\textrm{deg}]^{e} \otimes [\textrm{deg}]^{ne},[++,-+]  \}$ & - &  such spaces do not exist   \\ \hline
 $\{ [\textrm{deg}]^{e} \otimes [\textrm{deg}]^{ne},[++,+-]  \}$ & (\ref{metryka_dege_x_degne_pp_pm}) &  5 functions of 3 variables   \\ \hline
 $\{ [\textrm{deg}]^{e} \otimes [\textrm{deg}]^{ne},[++,--]  \}$ & (\ref{metryka_dege_x_degne_pp_mm}) &  3 functions of 3 variables   \\ \hline
\end{tabular}
\caption{Summary of main results.}
\label{summary}
\end{center}
\end{table}

The strongest result of this paper concerns pKE-spaces. We found all pKE metrics which are algebraically special (see Table \ref{summary_2}). The most general space which belong to such class is a space of the type $\{ [\textrm{II}]^{e} \otimes [\textrm{D}]^{nn},[++,++]  \}$. The metric of such a space depends on 4 functions of 2 variables. The metrics of the types $\{ [\textrm{III,N}]^{e} \otimes [\textrm{D}]^{nn},[++,++]  \}$ also were found in all the generality. 

Slightly less general sub-types of pKE-spaces are types $\{ [\textrm{II,III,N}]^{e} \otimes [\textrm{D}]^{nn},[++,--]  \}$). Some of these types (namely, $\{ [\textrm{III,N}]^{e} \otimes [\textrm{D}]^{nn},[++,--]  \}$) were found earlier in \cite{Chudecki_przyklady} (the metric (5.16) of \cite{Chudecki_przyklady}). However, we did not know it yet in \cite{Chudecki_przyklady} that the metric (5.16) is equipped with $\mathcal{I}s$ with properties $[++,--]$. On the other hand, the metric of the type $\{ [\textrm{II}]^{e} \otimes [\textrm{D}]^{nn},[++,--]  \}$ is a new result. 

The only type of pKE-space which metric is still unknown\footnote{According to our best knowledge there is no even a single explicit example of such space.} is the type $ [\textrm{I}] \otimes [\textrm{D}]^{nn}$. The third part of our work will be devoted to space of such a type.

\begin{table}[H]
 \footnotesize
\begin{center}
\begin{tabular}{|c|c|c|}   \hline
 Type  &   Metric   & Functions in the metric        \\ \hline \hline
   \multicolumn{3}{|c|}{Spaces with $C_{ab} \ne 0$}  \\ \cline{1-3}
 $\{ [\textrm{II}]^{e} \otimes [\textrm{D}]^{nn},[++,--]  \}$ & (\ref{metryka_dege_x_Dnn_pp_mm}) &  2 functions of 3 variables, one constant   \\ \hline
 $\{ [\textrm{III}]^{e} \otimes [\textrm{D}]^{nn},[++,--]  \}$ & (\ref{metryka_IIIe_x_Dnn_pp_mm}) &  5 functions of 2 variables   \\ \hline
 $\{ [\textrm{N}]^{e} \otimes [\textrm{D}]^{nn},[++,--]  \}$ & (\ref{metryka_Ne_x_Dnn_pp_mm}) &  2 functions of 2 variables, 1 function of 1 variable, one constant   \\ \hline
 $\{ [\textrm{II}]^{e} \otimes [\textrm{D}]^{nn},[++,++]  \}$ & (\ref{metryka_ogolnyparaKahler_classIIa}) or (\ref{metryka_ogolnyparaKahler_classIIb}) &  2 functions of 3 variables, 2 functions of 2 variables    \\ \hline
 \multicolumn{3}{|c|}{Spaces with $C_{ab} = 0$}  \\ \cline{1-3}
 $\{ [\textrm{II}]^{e} \otimes [\textrm{D}]^{nn},[++,--]  \}$ & (\ref{metryka_Einstein_dege_x_Dnn_pp_mm}) &  3 functions of 2 variables   \\ \hline
 $\{ [\textrm{III}]^{e} \otimes [\textrm{D}]^{nn},[++,--]  \}$ & (\ref{metryka_Einstein_IIIe_x_Dnn_pp_mm})$^{*}$ &  2 functions of 2 variables   \\ \hline
 $\{ [\textrm{N}]^{e} \otimes [\textrm{D}]^{nn},[++,--]  \}$ & (\ref{metryka_Einstein_Ne_x_Dnn_pp_mm})$^{*}$ &  1 functions of 2 variables, one constant   \\ \hline
  $\{ [\textrm{II}]^{e} \otimes [\textrm{D}]^{nn},[++,++]  \}$ & (\ref{metryka_Einstein_dege_x_Dnn_pp_pp}) &  4 functions of 2 variables   \\ \hline
 $\{ [\textrm{III}]^{e} \otimes [\textrm{D}]^{nn},[++,++]  \}$ & (\ref{metryka_Einstein_IIIe_x_Dnn_pp_pp}) &  3 functions of 2 variables   \\ \hline
 $\{ [\textrm{N}]^{e} \otimes [\textrm{D}]^{nn},[++,++]  \}$ & (\ref{metryka_Einstein_Ne_x_Dnn_pp_pp}) &  2 functions of 2 variables   \\ \hline
\end{tabular}
\caption{Summary of main results. Para-Kähler spaces.}
\label{summary_2}
\end{center}
\end{table}




\begin{thebibliography}{99}


\bibitem{An}  An D. and Nurowski P., \textsl{Twistor space for rolling bodies}, Comm. Math. Phys., 326(2), 393-414
(2014)



\bibitem{Bor} Bor G., Hern\'andez Lamoneda L. and Nurowski P., \textsl{The dancing metric, G2-symmetry and projective
rolling}, Trans. Amer. Math. Soc., 370(6), 4433-4481 (2018)


\bibitem{Bor_Makhmali_Nurowski} Bor G., Makhmali O. and Nurowski P., \textsl{Para-Kähler-Einstein 4-manifolds and non-integrable twistor distributions}, Geometriae Dedicata 216, 9 (2022)


\bibitem{Boyer_Finley_Plebanski} Boyer C.P., Finley III J.D. and Plebański J.F., \textsl{Complex general relativity, $\mathcal{H}$ and $\mathcal{HH}$ spaces - a survey to one approach}, General Relativity and Gravitation. Einstein Memorial Volume, ed. A. Held (Plenum, New York) vol.2. pp. 241-281 (1980)


\bibitem{Chudecki_Kahler_1} Chudecki A., \textsl{Hyperheavenly spaces and their application in Walker and para-Kähler geometries: Part I}, Journal of Geometry and Physics 179, 104591

\bibitem{Chudecki_Ricci} Chudecki A., \textsl{Classification of the traceless Ricci tensor in 4-dimensional pseudo-Riemannian spaces of neutral signature}, Acta Physica Polonica B, Vol. 48, No. 1, 53-74



\bibitem{Chudecki_struny} Chudecki A., \textsl{On geometry of congruences of null strings in 4-dimensional complex and
real pseudo-Riemannian spaces}, Journal of Mathematical Physics \textbf{58}, 112502, (2017)


\bibitem{Chudecki_przyklady} Chudecki A., \textsl{On some examples of para-Hermite and para-K\"{a}hler Einstein spaces with $\Lambda \ne 0$}, Journal of Geometry and Physics 112, 175–196 (2017)





\bibitem{Chudecki_Przanowski_Walkery} Chudecki A. and Przanowski M., \textsl{From hyperheavenly spaces to Walker and Osserman spaces I}, Classical and Quantum Gravity, \textbf{25}, 145010, (2008)


\bibitem{Finley_Plebanski_intrinsic} Finley III J.D. and Plebański J.F., \textsl{The intrinsic spinorial structure of hyperheavens}, Journal of Mathematical Physics, \textbf{17}, 2207 (1976)















\bibitem{Law_3} Law P.R. and Matsushita Y., \textsl{A spinor approach to Walker geometry}, Communications of Mathematical Physics 282, 577–623 (2008)


\bibitem{Law_2} Law P.R. and Matsushita Y., \textsl{Algebraically special, real alpha-geometries}, Journal of Geometry and Physics, 61, 2064–2080 (2011) 


\bibitem{Law} Law P.R. and Matsushita Y., \textsl{Real AlphaBeta-geometries and Walker geometry}, Journal of Geometry and Physics, 65, 35–44 (2013) 





\bibitem{Penrose} Penrose R., \textsl{A spinor approach to general relativity}, Ann. Phys. \textbf{10}, 171 (1960)

\bibitem{Plebanski_Spinors} Plebański J.F., \textsl{Spinors, tetrads and forms}, unpublished monograph of Centro de Investigacion y Estudios Avanzados del IPN, Mexico 14 (1974)


\bibitem{Plebanski_surf} Plebański J.F. and Hacyan S., \textsl{Null geodesic surfaces and Goldberg - Sachs theorem in complex Riemannian spaces}, J. Math. Phys., \textbf{16}, 2403 (1975 )

\bibitem{Pleban_formalism_2} Plebański J.F. and Przanowski M., \textsl{Null tetrad, spinor and helicity formalisms for all 4-dimensional Riemannian manifolds. I. Null tetrads and spinors}, Acta Phys. Polon., \textbf{B19} 805 (1988) 



\bibitem{Plebanski_Rozga} Plebański J.F. and Rózga K., \textsl{The optics of null strings}, J. Math. Phys., \textbf{25}, 1930 (1984)


\bibitem{Przanowski_classification} Przanowski M. and Plebański J. F., \textsl{Generalized Goldberg-Sachs theorems in complex and real space-times. I.}, Acta Physica Polonica, Vol. B10 (1979)



\end{thebibliography}
\end{document}